\numberwithin{equation}{section}
\newtheorem{problem}{Problem}
\newtheorem{assumption}{Assumption}
\newtheorem{theorem}{Theorem}[section]
\newtheorem{proposition}[theorem]{Proposition}
\newtheorem{corollary}[theorem]{Corollary}
\newtheorem{lemma}[theorem]{Lemma}
\newtheorem{remark}{Remark}[section]
\newtheorem{example}{Example}[section]
\newcommand{\Eb}{\mathbb{E}}
\newcommand{\Ib}{\mathbb{I}}
\newcommand{\Pb}{\mathbb{P}}
\newcommand{\Rb}{\mathbb{R}}
\newcommand{\Ac}{\mathcal{A}}
\newcommand{\AcIC}{\Ac_{\mathrm{IC}}}
\newcommand{\Fc}{\mathcal{F}}
\newcommand{\Ic}{\mathcal{I}}
\newcommand{\Sc}{\mathcal{S}}
\newcommand{\dd}{\mathrm{d}}
\newcommand{\pf}{\pi_{f}}
\newcommand{\tI}{\widetilde{I}}
\DeclareMathOperator*{\argsup}{argsup}
\begin{document}

\title{Optimal Reinsurance under Endogenous Default and Background Risk}

\author{Zongxia Liang\thanks{Department of Mathematical Sciences, Tsinghua University, China. Email: \texttt{liangzongxia@tsinghua.edu.cn}} \quad 
	 Zhaojie Ren\thanks{Department of Statistics and Data Science, The Chinese University of Hong Kong, Hong Kong, China. Email: \texttt{rzj20@mails.tsinghua.edu.cn}} 
	 \quad 
	 Bin Zou\thanks{Department of Mathematics, University of Connecticut, USA. Email: \texttt{bin.zou@uconn.edu}}}	
 
\date{This version: \today \\
Forthcoming in \emph{ASTIN Bulletin}}

\maketitle

\begin{abstract}
	\noindent
This paper studies an optimal reinsurance problem for a utility-maximizing insurer, subject to the reinsurer's endogenous default and background risk. An endogenous default occurs when the insurer's contractual indemnity  exceeds the reinsurer's available reserve, which is random due to the background risk.
We obtain an analytical solution to the optimal contract for two types of reinsurance contracts, differentiated by whether their indemnity functions depend on the reinsurer's background risk.
The results shed light on the joint effect of the reinsurer's default and background risk on the insurer's reinsurance demand.  
\end{abstract}

\noindent
\textbf{Keywords}: Optimal reinsurance, endogenous default, background risk, utility maximization

\section{Introduction}
\label{sec:intro}

\subsection{Background and Motivation}
\label{sub:motivation}
A typical reinsurance contract involves two parties, an insurer (buyer of reinsurance) and a reinsurer (seller of reinsurance), and it is uniquely identified by its indemnity function $I$, which specifies the indemnification amount $I(x)$ to the insurer in case of a covered loss with size $x>0$.
The design of optimal reinsurance contracts aims to identify the optimal indemnity function  under a chosen criterion and has long been a pivotal topic in insurance economics and actuarial science (see  \citet{arrow1963uncertainty}).
The goal of this paper is to derive an optimal reinsurance contract for a utility-maximizing insurer, subject to the reinsurer's endogenous default and background risk. 

The classical literature on optimal (re)insurance  makes an \emph{implicit} assumption that the seller of contract can always fulfill its promised contract payment, specified by $I$, to the buyer. However, this assumption is challenged in real-world reinsurance markets because many factors, such as catastrophic climate events or systemic risk in the financial markets, could lead to the partial (or even no) payment on covered losses from the seller (\citet{cummins2003optimal}). The failure to pay the contractual indemnity $I$ from the seller is a particular case of \emph{contract nonperformance} (\citet{doherty1990rational}) and, as one would expect, has a significant impact on the buyer's (re)insurance demand (\citet{peter2020you}).

There are two main contributors to the failure of reinsurance companies. The first one is a catastrophic event that results in a large realization of insured losses. In a recent example,  Hurricanes Helene and Milton hit Florida within a two-week period, together causing insured losses estimated between \$30 billion and \$50 billion, and put several small, Florida-based reinsurance companies
under huge financial stress. 
The second contributor comes from the reinsurer's \emph{background risk}, broadly defined as random sources that impact the reinsurer's reserve for settling claims. One particular source of background risk is the reinsurer's exposure to financial risks, but it can also include, for instance, geopolitical and social risks, and even pandemic risk (\citet{IAIS2012}).
To understand the nonperformance of reinsurance contracts in theory, note that contractual indemnity $I$ is often an increasing function of the insurer's loss, and there exists a threshold $\bar{x}$, potentially depending on the reinsurer's background risk, such that for all $x > \bar{x}$, the contractual indemnity $I(x)$ exceeds the reinsurer's available reserve, resulting in an \emph{endogenous} default. 
To summarize, both empirical and theoretical evidence motivate us to incorporate the reinsurer's endogenous default and background risk in the study of optimal reinsurance.

\subsection{Summary and Contributions of the Paper}

We study an optimal reinsurance problem in a one-period model for a utility-maximizing insurer who is exposed to an insurable loss $X$. For a reinsurance contract with indemnity function $I$, denote $\mathcal{I}$ as the indemnity payment (traditionally $\mathcal{I}:=I(X)$); the reinsurer charges a premium \( \pi(I) = (1 + \eta)\Eb[\Ic] \) under the expected-value principle, with loading factor \( \eta \geq 0 \).
The reinsurer's terminal reserve is $R = (S +  \pi(I))^+$, in which $S$ is a random variable and captures the reinsurer's background risk. With this setting, the reinsurer's initial reserve \( r \) at time $0$ evolves stochastically to \( S \) at time 1; as such, \( S - r \) is the background risk, but mathematically, it is equivalent to directly call \(S\) the reinsurer's background risk. 
As argued in Section \ref{sub:motivation}, we consider the reinsurer's endogenous default and define it as an event whenever the indemnity payment $\Ic$ exceeds the reinsurer's reserve $R$. As such, for a contract $I$, the indemnity payment received by the insurer changes from ``\emph{contractual} amount'' $\Ic$  to ``\emph{actual} amount'' $\Ib = \Ic \cdot \mathbf{1}_{\{ \Ic \leq R \}} + \tau R \cdot \mathbf{1}_{\{ \Ic > R \}}$, in which $\tau \in [0,1]$ is the recovery rate in case of an endogenous default.  The insurer is aware that its reinsurance contract may be nonperforming
and seeks an optimal contract $I^*$ to maximize the  expected utility of its terminal wealth.

In Section \ref{sec:main}, the insurer is allowed to choose contracts in the form of $I := I (x, s)$;
i.e., the indemnity $I$ can be a function of both the insurer's loss size $x$ and the reinsurer's background risk (random reserve) level $s$.
We  follow a two-step approach to obtain the insurer's \emph{globally} optimal contract $I^*$. In the first step, we fix the contract premium at a given level $a$ 
and obtain the \emph{locally} optimal contract $I_a^*$ (see Theorem \ref{thm:step_one}). In the second step, we optimize over all feasible $a$ and identify the optimal premium level $a^*$. The globally optimal contract is then given by $I^* = I^*_{a^*}$ (see Theorem \ref{thm:1_ass}). We  obtain $I^*$ in a semiclosed form and show that it is a single deductible reinsurance with a policy limit; in addition, the reinsurer will never default under contract $I^*$. 
These results are obtained over the \emph{largest} possible set of  admissible indemnities (see Remark \ref{rem:ind}) and require only mild conditions on $X$ and $S$ (Assumption \ref{ass}); to the best of our knowledge, they are new to the optimal reinsurance literature.  
We also  derive \emph{analytical} results on the comparative statics of the optimal contract, including its deductible, policy limit, and premium (see Proposition \ref{prop:a_p}), while the existing literature obtains limited results, mostly relying on numerical analysis.  In particular, we show that the optimal deductible decreases with respect to the reinsurer's reserve and the insurer's risk aversion, but increases with respect to the insurer's initial wealth.

In Section  \ref{sec:ext}, the insurer can only choose contracts in the form of $I := I(x)$,  depending only on the loss size $x$, which form a subclass of the contracts considered in Section \ref{sec:main}. However, the resulting optimal reinsurance problem is challenging to solve, and the two-approach approach in Section \ref{sec:main} \emph{cannot} be applied here. To our awareness, such an optimal reinsurance problem has not been studied previously in the literature. Imposing the incentive compatible (IC) condition on $I$ and assuming that the discrete $S$ is independent of $X$, we characterize the (locally) optimal contract $I_S^* := I_S^*(x)$ in a parametric form (see Theorem \ref{thm:N}). We show that $I_S^*$ is a \emph{multi-layer} reinsurance contract, and each layer involves a deductible and a policy limit, both depending on a free parameter $l_i$.  With additional conditions (e.g., $I_S^*$ has two layers), we can further obtain the optimal parameters, $l_1^*$ and $l_2^*$, and fully identify the optimal contract $I_S^*$ (see Proposition \ref{prop:N=2}). 
We remark that endogenous default may occur under $I_S^*$ in Section \ref{sec:ext}, but $I^*$ obtained in Section \ref{sec:main} is a \emph{default-free} contract. This striking difference highlights the impact of the reinsurer's background risk and the choice of contracts on decision making.

\subsection{Related Literature}

The seminal work of \citet{arrow1963uncertainty} considers a one-period insurance model without the counterparty default and background risk, and it shows that the optimal contract is a deductible insurance for a utility-maximizing insured under the expected-value premium principle. A significant body of the literature on optimal (re)insurance aims to extend Arrow's model by exploring different optimization criteria or premium principles  (see, e.g., \citet{bernard2015optimal} for rank-dependent utility (RDU) and  \citet{birghila2023optimal} for maximin expected utility;  \citet{asimit2013optimal} for distortion premium principles); we refer to \citet{gollier2013economics} and  \citet{cai2020optimal}
for an overview of the research on optimal (re)insurance. 
This paper, on the other hand, incorporates the counterparty default risk and background risk into Arrow's model and studies the insurer's optimal reinsurance problem accordingly. As such, we focus on reviewing related works that consider default or background risk.

In terms of modeling default risk, there are two main approaches. The first approach models the seller's default as an \emph{exogenous} event, often by a binary random variable independent of the contract. \citet{doherty1990rational} follow this approach in their model and are among the earliest to study insurance demand under default risk (restricting to proportional contracts). In addition, they assume that the default probability is known to both parties, and when default occurs, \emph{no} indemnity payment is made to the buyer (i.e., the recovery rate is $\tau = 0$).
As they write, ``A nonzero probability of default renders most of the standard insurance results invalid'' (p.244), and their findings demonstrate the profound impact of default risk on insurance demand. Subsequent studies extend \citet{doherty1990rational} by considering heterogeneous beliefs (\citet{cummins2003optimal}), ambiguity preferences (\citet{peter2020you}), hedging strategies (\citet{reichel2022optimal,chi2023optimal}), and distortion risk measures (\citet{yong2024optimal}), among many others.

The second approach, adopted by this paper, models the seller's default as an \emph{endogenous} event which occurs when the seller's reserve is less than the contractual indemnity. This approach is arguably more realistic than the first one but leads to a more challenging optimal (re)insurance problem, which may explain why there is only limited study under endogenous default. 
In an early attempt,  \citet{biffis2012optimal} study a Pareto optimal insurance problem under endogenous default (and background risk) for a risk-averse insured and a risk-neutral insurer.
They derive several properties of the optimal insurance, should it exist, but fail to find an analytical solution to the optimal contract.
\citet{asimit2013optimal} consider a more concrete setup (but without background risk) in which the reinsurer's reserve is based on the Value-at-Risk (VaR) rule, and reinsurance contracts are priced by the distortion premium principle. 
When the insurer aims to minimize its risk (measured by either VaR or a distortion risk measure), they obtain the optimal contract in (semi)closed form. \citet{cai2014optimal} conduct a similar study as \citet{asimit2013optimal} but assume the expected-value premium principle and no bankruptcy costs ($\tau = 1$); they find an analytical solution for the optimal contract that maximizes the insurer's expected utility or minimizes the VaR. Note that neither \citet{asimit2013optimal} nor \citet{cai2014optimal} consider the reinsurer's background risk.
More recently, \citet{chen2024bowley} seek a Bowley solution to a reinsurance game in the presence of endogenous default risk.

Next, we discuss existing research on optimal (re)insurance problems that considers background risk. 
The surplus of a company (e.g., the seller of (re)insurance policies) or the wealth of an individual insured  is exposed to various risks. Here, we broadly define \emph{background risk} as random risks that impact the seller's or buyer's surplus, and they are either uninsurable or not insured. 
Recall that the initial wealth of the (re)insurance buyer is a constant in the classical models (see \citet{borch1962equilibrium} and \citet{arrow1963uncertainty}); much of the effort is devoted to incorporating the \emph{buyer's} (referring to the insurer in a reinsurance context or the insured in an insurance setting) background risk into the model. \citet{doherty1983optimal} study optimal deductible insurance when the insured's initial wealth is random (due to background risk). \citet{mayers1983interdependence} consider a particular case in which the insured's background risk comes from the financial market, and they study the optimal demand for proportional insurance and financial assets.  \citet{chi2020optimal} and \citet{chen2024optimal} explore the optimal insurance problem under a general dependence structure between the insured's insurable and background risks. However, the \emph{seller's} (corresponding to the reinsurer of reinsurance  or the insurer of insurance contracts) background risk is at least as important as the buyer's background risk, if not more so; however, related research is quite limited. The model in \citet{biffis2012optimal} is a rare example and considers \emph{both} the buyer's and seller's background risk; but they do not obtain an explicit optimal contract. \citet{filipovic2015optimal} consider the seller's background risk and allow it to invest in a risky asset, while \citet{boonen2019equilibrium} extends their work to an equilibrium model with multiple insureds (buyers).

\section{Model}
\label{sec:model}

We consider a one-period model and study the optimal reinsurance problem for an insurer.
We fix a complete probability space $(\Omega, \Fc, \Pb)$ and denote by $\Eb[\cdot]$ the expectation taken under $\Pb$. All (in)equalities involving random variables should be understood in the $\Pb$ almost surely sense.
The insurer's aggregate loss (or a portfolio of risks) is modeled by a nonnegative $\Fc$-measurable random variable $X$, and we assume that $X$ is bounded from above, $0<M:=\mathrm{ess \, sup}\,X<\infty$. This assumption is common in the literature (see, e.g., \citet{chi2023optimal}), and it is rather mild, because $M$ can be arbitrarily large. The main results can be readily extended to the case of $M = \infty$ under certain integrability conditions (see Online Companion for full details on this extension), while $M=0$ trivializes the problem. 
To mitigate the risk exposure, the insurer purchases reinsurance from the reinsurance market, and we denote the \emph{indemnity} function of a reinsurance contract by $I$ and the corresponding stochastic indemnity payment  by $\Ic$ (traditionally $\mathcal{I}:=I(X)$). 
Given the one-to-one relation between a reinsurance contract and its indemnity function $I$, we often call $I$ a contract.
We assume that the reinsurer applies the expected-value premium principle to calculate the contract premium by 
\begin{align}
	\label{eq:pi}
    \pi(I) = (1 + \eta) \, \Eb[ I(X) ], 
\end{align}
in which $\eta \ge 0$ is the premium loading factor. Using the actual indemnity $\Ib$ to determine the premium leads to a ``loop'' issue, because the actual indemnity and premium are intertwined through the reinsurer's random reserve. This explains the standard form of the premium in \eqref{eq:pi}; see \citet{cai2014optimal} for the same premium principle when the seller’s default is present. Because the reinsurer may default on indemnity payment, the loading factor $\eta$ should reflect this  default risk.\footnote{Let $\bar{\eta} > 0$ denote the loading factor charged by a default-free reinsurer; then, $\eta < \bar{\eta}$. For instance, assuming that the default probability, $p_{de} \in (0,1)$, can be estimated, we may set $\eta$ so that $(1 + \eta) = (1 + \bar{\eta}) ( 1- p_{de})$.}

Once a contract $I$ is chosen by the insurer, the reinsurer's terminal available reserve, $R$, for settling claims is given by
\begin{align}
	\label{eq:R}
	R = (S+\pi(I))^+, 
\end{align}
in which $S$ is an $\Fc$-measurable random variable capturing the reinsurer's background risk, and $y^+ : = \max\{0, y\}$ for all $y \in \Rb$. The rationale is that the reinsurer sets aside an  initial reserve of amount $r$ at time 0, which consists of both ``cash'' and ``risky assets,''
and thus its value at time 1, $S$, is random. In that regard, one can also call $S$ the reinsurer's \emph{random reserve} (excluding premium), and the difference $S - r$ models the reinsurer's additive background risk. But for notational simplicity, we call $S$ the reinsurer's background risk in the sequel.

An \emph{endogenous} default from the reinsurer occurs if (and only if) $R< \Ic$, namely when the reinsurer's terminal reserve falls short of the contractual indemnity.  We assume that in case of an endogenous default, the insurer receives a fraction, $\tau \in [0,1]$, of the reinsurer's available reserve.\footnote{We assume that the so-called recovered rate $\tau$ is a constant in this paper. However, in reality, such a rate may be random, depending on additional risk factors at time 1; see \cite{bernard2012impact} and \cite{li2018optimal} for this setup. Our results in Section \ref{sec:main} can be readily extended to any random $\tau \in [0,1]$ without further assumptions. The findings in Section \ref{sec:ext}, however, require the additional assumption that $\tau$ is independent of the background risk $S$ to remain valid.} 
Note that our setup allows for large negative shocks of $S$ that result in a negative value of $S + \pi(I)$, which is why we take the positive part in \eqref{eq:R}.
Mathematically, we introduce a binary variable, $D:= D(I)$, to track the reinsurer's solvency status by
\begin{align}
	\label{eq:D}
	D = \begin{cases}
		1 \text{ (default)}, & \text{ if } R < \Ic, \\
		0 \text{ (solvent)}, & \text{ if } R \ge \Ic.
	\end{cases}
\end{align}
$D$ depends on the insurer's loss $X$ and contract $I$,  and also the reinsurer's background risk $S$. 

Because reinsurance contracts may be nonperforming due to the reinsurer's endogenous default, the \emph{actual} indemnity payment, $\Ib$, that the insurer receives from a contract $I$ is given by 
\begin{align}
	\label{eq:Ib}
	\Ib(I) = \Ic \cdot \mathbf{1}_{\{ D = 0 \} } + \tau R \cdot \mathbf{1}_{ \{ D = 1 \}}, 
\end{align}
in which $\mathbf{1}$ denotes an indicator function, and $R$ and $D$ are defined by \eqref{eq:R} and \eqref{eq:D}, respectively. Therefore, for a chosen contract $I$, the insurer's terminal wealth, $W$, equals 
\begin{align}
	\label{eq:W}
	W(I) = w - X - \pi(I) + \Ib(I), 
\end{align}
in which $w$ is the insurer's initial wealth, $\pi(I)$ is given by \eqref{eq:pi}, and $\Ib$ is the actual indemnity defined in \eqref{eq:Ib}. Following the classical literature on (re)insurance economics (e.g., \citet{arrow1963uncertainty} and \citet{mossin1968aspects}), we assume that the insurer's preferences  are modeled by the expected utility theory with a twice differentiable utility function $u$ that is strictly increasing and strictly concave (i.e., $u'>0$ and $u''<0$).  We formulate the main problem of the paper as follows.

\begin{problem}\label{prob:main}
	The insurer seeks an optimal reinsurance contract $I^* $ to maximize the expected utility of its terminal wealth under the reinsurer's endogenous default and background risk, i.e., $$I^* = \argsup_{I \in \widetilde \Ac} \, \Eb[u(W(I))],$$ in which $\widetilde{\Ac}$ is the admissible set to be specified later, and  $W(I)$ is given by \eqref{eq:W}.	
\end{problem}

\begin{remark}
	In Problem \ref{prob:main}, the premium of a reinsurance contract $\pi(I)$ is computed by the expected-value premium principle in \eqref{eq:pi}. As is standard in the study of optimal (re)insurance, the form of the optimal contract often depends on the chosen premium principle. For example, \cite{chen2019stochastic} show that the optimal contract (in their setting) is of deductible form under the expected-value premium principle, but it is of proportional form under the variance premium principle. Among the results obtained in this paper, Propositions \ref{prop:extre} and \ref{prop:a_bar} remain valid under mean-variance and distortion premium principles, but other results likely are limited to the expected-value premium principle. It is an interesting direction to revisit Problem \ref{prob:main} under different premium principles.
\end{remark}

By definition, the reinsurer defaults if $(S + \pi(I) )^+< \Ic$, suggesting that the reinsurer's background risk $S$ plays a key role in its solvency.  Because  $I\ge 0$, we first consider an extreme scenario of $S\le 0$ and show that the optimal strategy is to \emph{not} purchase any reinsurance.  

\begin{proposition}
	\label{prop:extre}
If $S \le 0$ almost surely (i.e., $\mathbb{P}(S \le 0) = 1$), then the optimal strategy to Problem \ref{prob:main} is no reinsurance with $I^* \equiv 0$.
\end{proposition}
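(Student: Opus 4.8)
The plan is to prove the stronger, pathwise statement that under $S\le 0$ no admissible contract can improve on buying nothing, after which the conclusion follows from monotonicity of $u$ and taking expectations. Concretely, I would show that for every $I\in\widetilde\Ac$ the terminal wealth satisfies $W(I)\le W(0)=w-X$ almost surely, and hence $\Eb[u(W(I))]\le\Eb[u(W(0))]$, so that $I^*\equiv 0$ attains the supremum in Problem \ref{prob:main}.

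The argument would proceed in three short steps. First, bound the reserve: since $\pi(I)=(1+\eta)\Eb[I]\ge 0$ and $S\le 0$ a.s., we have $S+\pi(I)\le\pi(I)$, hence $R=(S+\pi(I))^+\le\pi(I)$ a.s. Second, bound the actual indemnity $\Ib(I)$ by splitting on the default indicator $D$ from \eqref{eq:D}. On $\{D=0\}$ one has $I\le R\le\pi(I)$, so $\Ib(I)=I\le\pi(I)$; on $\{D=1\}$ one has $\Ib(I)=\tau R$, and since $\tau\in[0,1]$ and $R\ge 0$, $\tau R\le R\le\pi(I)$. Thus $\Ib(I)\le\pi(I)$ almost surely. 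Third, plug these into \eqref{eq:W}: $W(I)=w-X-\pi(I)+\Ib(I)\le w-X$, and the right-hand side is exactly $W(0)$ because $\pi(0)=0$ and $\Ib(0)=0$. Since $u$ is strictly increasing, $u(W(I))\le u(W(0))$ a.s.; taking expectations gives $\Eb[u(W(I))]\le\Eb[u(W(0))]$ for all $I\in\widetilde\Ac$, so $I^*\equiv 0$ (which lies in $\widetilde\Ac$) is optimal.

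There is no real analytical obstacle here; the proposition is essentially a sanity check that the model degenerates when the seller's reserve is nonpositive. The only points requiring a little care are (i) using $\pi(I)\ge 0$ together with $S\le 0$ to justify $R\le\pi(I)$, so the positive part in \eqref{eq:R} does not spoil the bound; (ii) the case $D=1$, where the recovery factor $\tau\le 1$ is exactly what is needed; and (iii) confirming that $I\equiv 0$ is admissible. One could additionally remark that the inequality $\Eb[u(W(I))]\le\Eb[u(W(0))]$ is strict whenever $\Pb(\Ib(I)<\pi(I))>0$, which pins down no insurance as the essentially unique optimizer, although the stated proposition only asserts optimality.
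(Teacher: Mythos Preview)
Your proof is correct and follows essentially the same approach as the paper's own proof: both establish the pathwise inequality $W(I)\le W(0)$ by bounding the actual indemnity above by $\pi(I)$ via $R=(S+\pi(I))^+\le\pi(I)$ when $S\le 0$, then invoke monotonicity of $u$. The paper presents the same chain of inequalities in a single compact display, whereas you break it into three labeled steps, but the logic is identical.
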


\begin{proof} 
	For any non-negative indemnity function $I\ge 0$ (thus $\pi(I) \ge 0$), we obtain 
	\begin{align}
		W(I)&=w - X - \pi(I) + \Ic \cdot \mathbf{1}_{\{\Ic \le (S+\pi(I))^+\}} + \tau (S+\pi(I))^+  \cdot \mathbf{1}_{\{\Ic >(S+\pi(I))^+\}} \\
		&\le w - X - \pi(I) + (S+\pi(I))^+ \\
		&\le w - X - \pi(I) + \pi(I)=w-X=W(0).\label{eq:prop:extre}
	\end{align}
	Because $u' >0$, $\Eb[u(W(I))] \le \Eb[u(W(0))]$, and the result follows. 
\end{proof}

The result in Proposition \ref{prop:extre} already showcases the important impact of counterparty default on decision making. 
We know from, for instance, \citet{arrow1963uncertainty} that, if the reinsurer's default risk is \emph{ignored}, the optimal contract is a deductible reinsurance. 
However, as Proposition \ref{prop:extre} shows, when the insurer is aware of the counterparty default risk, and the reinsurer's reserve is nonpositive, the optimal decision is to \emph{not} purchase any reinsurance but to fully rely on self-insurance. Because the case of $S \le 0$ is solved by Proposition \ref{prop:extre}, and it is likely unrealistic in practice, we study Problem \ref{prob:main} under the standing assumption that $\Pb(S > 0) > 0$ in the rest of the paper. 

Because there are two random sources, $X$ and $S$, in the model, we consider two types of indemnity functions in the analysis:
\begin{enumerate}
	\item Loss- and background-risk-dependent indemnities $I : = I(x, s)$ (denoting $\Ic : = I(X, S)$).
	
	\item Loss-dependent indemnities $I:=I(x)$ (denoting $\Ic:=I(X)$).
\end{enumerate}
Reinsurance contracts with indemnities in the form of $I(x, s)$ are examples of the so-called \emph{randomized} contracts in the reinsurance literature. Similar randomized (re)insurance contracts are  considered in \citet{albrecher2019randomized} with an independent Bernoulli random variable and  in \citet{asimit2021risk} within multiple indemnity environments. Moreover, we also consider the second type of contracts with indemnity $I(x)$  depending only on the insurer's own loss size $x$, but independent of the reinsurer's background risk level $s$. Note that the second type is the more conventional contract form and constitutes a subclass of the first type $I(x, s)$, and the two types coincide when there is no background risk ($S$ reduces to a constant). In either type of contracts, the reinsurer's endogenous default is induced by large losses from the insurer or negative shocks from its background risk. 

We discuss the actual indemnity payments of these two contract types as follows.  
\begin{enumerate}
	\item \( \Ic = I(X, S) \): The contractual payment \( \Ic \) depends on the realized values of both \( X \) and \( S \). Consequently, the actual indemnity payment \( \Ib \) in \eqref{eq:Ib} is contingent on \( S \) even when no default occurs, necessitating regulatory disclosure of the reinsurer's reserve \( S \) at time 1 (e.g., audited financial statements or quarterly reports) to ensure execution of the contract.  
	
	\item \( \Ic = I(X) \): The contractual payment \( \Ic \) depends only on \( X \), and thus the actual payment \( \Ib \) in \eqref{eq:Ib} is independent of \( S \) unless default occurs. Because default is readily observable, this contract aligns with conventional regulatory frameworks that focus on solvency verification rather than continuous reserve monitoring.
\end{enumerate}

The above distinction in payment mechanism underscores the fact that the first contract type—while offering enhanced risk sharing—requires stricter regulations and transparency, whereas the second type remains prevalent in practice due to their operational simplicity. 

\section{Optimal Loss- and Background-Risk-Dependent Indemnities}
\label{sec:main}

In this section, we consider loss- and background-risk-dependent indemnities in the form of $ I=I(x, s)$; i.e., the insurer is allowed to choose reinsurance contracts that depend on both its own loss size $x$ and the reinsurer's background risk level $s$. We impose the minimum condition--the so-called ``\emph{principle of indemnity}''--on the indemnity function $I$, which leads to the following  \emph{admissible} set $\widetilde{\Ac} = \Ac$:
\begin{align}
	\label{eq:Ac}
	\Ac :=  \{I: [0,M]\times\Rb \mapsto \Rb_+ \, | \, 0 \le I(x,s) \le x \text{ for all } x \ge 0 \text{ and all } s\in \Rb\}.
\end{align}

For every $I \in \Ac$, noting $R = (S+\pi(I))^+$,  the insurer's terminal wealth $W$ is given by 
\begin{align}
	\label{eq:W_no}
	W(I)
	= w - X - \pi(I) + I(X,S) \cdot \mathbf{1}_{ \{R \ge I(X,S)\} } + \tau R \cdot \mathbf{1}_{ \{ R < I(X,S) \} }.
\end{align} 

We state the first concrete version of Problem \ref{prob:main} as follows.

\begin{problem}
	\label{prob:no}
		The insurer seeks an optimal loss- and background-risk-dependent reinsurance contract $I^*:=I^*(x, s) \in \Ac$ to maximize the expected utility of its terminal wealth under the reinsurer's endogenous default and background risk, i.e., $$I^* = \argsup_{I \in \Ac} \, \Eb[u(W(I))],$$
	in which the admissible set $\Ac$ is defined in \eqref{eq:Ac}, and $W(I)$ is given by \eqref{eq:W_no}.  
\end{problem}

\begin{remark}
	\label{rem:ind}
	The admissible set $\Ac$ in \eqref{eq:Ac} can be seen as an extension to the one, $\{I: [0,M] \mapsto \Rb_+ \, | \, 0 \le I(x) \le x   \text{ for all } x \ge 0 \}$, used in the classical literature (see \citet{arrow1963uncertainty} and \citet{mossin1968aspects}) and is likely the largest admissible set one can consider for a meaningful optimal (re)insurance problem. Indeed,  related research often imposes further conditions on admissible indemnities. A prime example is the so-called \emph{incentive compatibility} (IC) condition, which reads as $0 \le I(x) - I(x') \le x - x'$ for all $x \ge x' \ge 0$, and is imposed to rule out certain \emph{ex post} moral hazard (see \citet{huberman1983optimal}). Under the IC condition, both the indemnity function $I$ and the retained loss function $x - I$ are nondecreasing and satisfy the 1-Lipschitz condition (implying that they are differentiable almost everywhere with the first-order derivatives bounded between 0 and 1). These desirable properties often help simplify the analysis and may even be necessary to obtain an optimal contract in analytical form.
	We choose to work with $\Ac$ in \eqref{eq:Ac} to formulate Problem \ref{prob:no} for at least two reasons. First, our method does not need the extra properties on $I$ derived from the IC condition. Second, the optimal contract $I^* \in \Ac$ we obtain automatically satisfies the IC condition (see Theorem \ref{thm:1_ass}), therefore there is no need to impose it \emph{a priori}. Put differently, if we were to impose the IC condition upfront, we would obtain exactly the same result.
\end{remark}

\subsection{Optimal Contract}
\label{sub:solution}

The goal of this section is to solve Problem \ref{prob:no}, and we obtain the optimal reinsurance contract in semiclosed form in Theorem \ref{thm:1_ass}. 
We explain the key methodology of solving Problem \ref{prob:no} in a two-step procedure as follows. 
For ease of presentation, denote by $\pf$ the premium of a \emph{full} reinsurance contract, which, according to \eqref{eq:pi}, equals

\begin{align}
	\label{eq:pf}
	\pf = (1 + \eta) \Eb[X].
\end{align} 
By the definition of admissible indemnities in \eqref{eq:Ac}, the premium of an admissible contract must be between 0 (no reinsurance) and $\pf$ (full reinsurance). 

\begin{enumerate}
	\item In the first step, we fix a premium level $a \in [0, \pf]$ and only consider admissible reinsurance contracts whose premium is equal to $a$. This leads to a reduced admissible set 
	\begin{align}
		\label{eq:Ac_a}
		\Ac_a := \{ I \in \Ac \, | \,  \pi(I) = a \}, \quad a \in [0, \pf].
	\end{align}
	We solve Problem \ref{prob:no} over $\Ac_a$ and call the solution 
	\begin{align}
		\label{eq:I_op_defa}
		I^*_a = \argsup_{I \in \Ac_a} \, \Eb[u(W(I))]
	\end{align} 
    a \emph{locally} optimal reinsurance contract.

	\item In the second step, we search over all premium levels $a \in [0, \pf]$ to find the optimal premium level $a^*$, defined by 
		\begin{align}
		\label{eq:a_op}
		a^* = \argsup_{a \in [0, \pf]} \, \Eb[u(W(I_a^*))].
	\end{align}
    We call $I^* = I^*_{a^*}$ a \emph{globally} optimal reinsurance contract because it solves Problem \ref{prob:no} over $\Ac = {\cup}_{a \in [0, \pf]} \Ac_a$.
\end{enumerate}

\vspace{1ex}
\noindent
\textbf{Step 1.} We first identify a key threshold $\bar{a}$ for the premium level and show that the optimal premium $a^*$ lies in the interval $[0, \bar{a}]$. As such, we only need to solve Problem\eqref{eq:I_op_defa} over all $a \in [0, \bar{a}]$ in Step 1. To begin, we present a technical lemma for finding $\bar{a}$. 
Recall that $\pf$ is defined in \eqref{eq:pf}.

\begin{lemma}\label{lem:a_bar}
	Let $g: [0, \pf] \mapsto \Rb$ be defined by 
	\begin{align}
		\label{eq:a_bar}
		g(a) := (1 + \eta) \, \mathbb{E} \left[ X - (X - (S + a)^+)^+ \right] - a,
	\end{align}
	and define
	\begin{align}\label{eq:abars}
		\bar{a}:=\inf \left\{a\in[0,\pi_f]\mid g(a)\le 0 \right\}.
	\end{align}
    Then, $g(\bar{a})=0$ and $g(a) > 0$ for all $a \in [0,\bar{a})$.
\end{lemma}
\begin{proof}   See Appendix \ref{prooflm3.1}. 
\end{proof}

Recall that once the insurer chooses an admissible contract $I \in \Ac_a$, the total available reserve from the reinsurer is $R = (S + a)^+$ by \eqref{eq:R}. Because the contract indemnity cannot exceed the loss $X$ itself or the reinsurer's reserve $R$, it follows that for all  $I \in \Ac_a$, the actual indemnity satisfies $\Ib(I) \le X \wedge (S+a)^+ = X - (X - ( S+ a)^+)^+$. Therefore, to account for the possible counterparty default, $I(x, s) = x - (x - (s + a)^+)^+ \in \Ac_a$ is the \emph{maximum} indemnity coverage the insurer may choose among all contracts in $\Ac_a$. Note that such maximum contracts are defined for all $a \in [0, \pf]$, and they are increasing in $a$. 
However, Lemma \ref{lem:a_bar} intuitively suggests that there exists a threshold $\bar{a}$ on the premium level, and the maximum contract at level $\bar{a}$ thus serves as the absolute upper bound among all contracts in $\Ac$. To be specific, the above discussion motivates us to consider a reinsurance contract with the following indemnity function:
\begin{align}
	\bar{I}(x,s) = x - (x - ( s +\bar{a})^+)^+, \label{eq:I_bar}
\end{align}
in which $\bar{a}$ is defined in \eqref{eq:abars}.  The premium of contract $\bar{I}$ is $\pi(\bar{I}) =  \bar{a}$ because $g(\bar{a}) = 0$. 
Furthermore, as $\bar{I}(x,s) \le (s + \bar{a})^+$ for all $x \ge 0$, we have $\Ib(\bar{I}) \equiv \bar{I}(X,S)$ and $D (\bar{I}) \equiv 0$ for contract $\bar{I}$, implying that $\bar{I}$ in \eqref{eq:I_bar} is a \emph{default-free} contract (we call $I$ a \emph{default-free} contract if the reinsurer will not default when the insurer chooses contract $I$).  
The discussion so far suggests that contract $\bar{I}$ in \eqref{eq:I_bar} serves as a ``threshold'' on the admissible indemnities, as confirmed below. 

\begin{proposition}\label{prop:a_bar}
	For all $a \in [\bar{a}, \pf]$ and all $I \in \Ac_a$, we have  
	\begin{align}
    \label{eq:ineq_Ibar}
		\Eb \big[ u(W(\bar{I}))\big] \ge \Eb \big[ u(W(I))\big],
	\end{align} 
    in which  $\bar{I} \in \Ac_{\bar{a}}$ is defined by \eqref{eq:I_bar}. In addition,  if $S\ge 0$ almost surely (i.e., $\Pb(S\ge 0)=1$) and $\eta=0$, then $ \bar{I}=I^*$ is the globally optimal reinsurance contract to Problem \ref{prob:no}.
\end{proposition}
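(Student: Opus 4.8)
The proposition bundles three assertions; I would prove them in the stated order, with the first two short and the third carrying the weight.

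\emph{The inequality \eqref{eq:ineq_Ibar}.} The plan is to establish the pointwise bound $W(\bar{I}) \ge W(I)$ $\Pb$-a.s.\ whenever $a \in [\bar{a}, \pf]$ and $I \in \Ac_a$, and then conclude from $u' > 0$. Two facts make this routine. First, as already noted, $\bar{I}$ of \eqref{eq:I_bar} is default-free with $\pi(\bar{I}) = \bar{a}$, so \eqref{eq:W_no} gives $W(\bar{I}) = w - X - \bar{a} + X \wedge (S + \bar{a})^+$ identically. Second, for any $I \in \Ac_a$ the reserve is $R = (S+a)^+$ and $\Ib(I) \le X \wedge R$: on $\{D = 0\}$ since $\Ib(I) = I \le X$ and $I \le R$, and on $\{D = 1\}$ since $\Ib(I) = \tau R \le R$ while $\tau R \le R < I \le X$. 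Hence $W(I) \le w - X - a + X \wedge (S+a)^+$, so
\begin{align}
W(\bar{I}) - W(I) \geq (a - \bar{a}) + \Big( X \wedge (S + \bar{a})^+ - X \wedge (S + a)^+ \Big),
\end{align}
and since $t \mapsto X \wedge (S + t)^+$ is nondecreasing and $1$-Lipschitz, for $a \ge \bar{a}$ the parenthesised term lies in $[-(a - \bar{a}), 0]$, making the right-hand side nonnegative.

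\emph{The case $X \wedge S \le 0$.} By Lemma \ref{lem:a_bar}, $\bar{a} = 0$, so $\bar{I}(x,s) = x \wedge s^+$ and $\pi(\bar{I}) = \bar{a} = 0$ force $\Eb[\bar{I}] = 0$, hence $\bar{I} \equiv 0$. Since $\bar{a} = 0$, the interval $[\bar{a}, \pf]$ in \eqref{eq:ineq_Ibar} covers every feasible premium, i.e.\ $\Ac = \bigcup_{a \in [0,\pf]} \Ac_a$, so $\bar{I} = 0$ dominates every admissible contract and $I^* \equiv 0$ is globally optimal.

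\emph{The case $S \ge 0$, $\eta = 0$.} This is the substantive part, and the pointwise comparison above genuinely fails for $a < \bar{a}$ (paying less premium can leave the buyer richer in favorable states), so I would switch to a stochastic-order argument and show that $W(\bar{I})$ second-order stochastically dominates $W(I)$ for \emph{every} $I \in \Ac$, which yields $\Eb[u(W(\bar{I}))] \ge \Eb[u(W(I))]$ for all strictly increasing concave $u$ and hence global optimality of $\bar{I}$. Write $a := \pi(I) = \Eb[I]$ (using $\eta = 0$) and $L_I := X - \Ib(I) \ge 0$, so $W(I) = w - a - L_I$; and using $S \ge 0$ with $g(\bar{a}) = 0$, i.e.\ $\Eb[X \wedge (S+\bar{a})] = \bar{a}$, one gets $W(\bar{I}) = w - \max\{X - S, \bar{a}\}$ with $\Eb[\max\{X - S, \bar{a}\}] = \Eb[X]$. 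The key step is the increasing-convex order relation $\max\{X - S, \bar{a}\} - a \preceq_{\mathrm{icx}} L_I$ (meaning $\Eb[\phi(\max\{X-S,\bar{a}\}-a)] \le \Eb[\phi(L_I)]$ for every increasing convex $\phi$), which by translation-invariance of the order is exactly the asserted dominance of $W(\bar{I})$ over $W(I)$. I would verify it by comparing $\Eb[(\,\cdot\, - t)^+]$ on the two sides for each $t$: when $t > \bar{a} - a$, the pointwise bound $L_I \ge (X - S - a)^+ \ge X - S - a$ (from $\Ib(I) \le X \wedge (S+a)$, using $S \ge 0$) gives $(L_I - t)^+ \ge (X - S - a - t)^+$, which coincides with the left side's $(\,\cdot\, - t)^+$; when $t \le \bar{a} - a$, the left side's $(\,\cdot\, - t)^+$ has expectation $\Eb[X] - a - t$, whereas $\Eb[(L_I - t)^+] \ge \Eb[L_I] - t \ge \Eb[X] - a - t$ because $\Eb[\Ib(I)] \le \Eb[I] = a$.

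The main obstacle is this third part: spotting the right comparison variable $\max\{X - S, \bar{a}\}$ — the retained loss of $\bar{I}$ with the premium gap added back — and arranging the verification of the stochastic order around the cutoff $t = \bar{a} - a$, so that the pointwise lower bound on $L_I$ controls large $t$ and the mean lower bound controls small $t$, with the identity $\Eb[X \wedge (S+\bar{a})] = \bar{a}$ from Lemma \ref{lem:a_bar} making the two means coincide. Everything else is bookkeeping; I would also confirm that degenerate cases are subsumed — e.g.\ $\bar{a} = \pf$ (so $\bar{I}$ is full insurance) or $\Pb(S > 0) > 0$ with $\Eb[X \wedge S] = 0$ (so $\bar{a} = 0$ and $\bar{I} \equiv 0$) — which they are, since the same inequalities apply.
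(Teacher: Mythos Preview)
Your proof is correct, but the route diverges from the paper's in an instructive way.

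For the first assertion ($a \ge \bar{a}$), you establish the \emph{pointwise} inequality $W(\bar{I}) \ge W(I)$ via the $1$-Lipschitz estimate on $t \mapsto X \wedge (S+t)^+$; this is more direct than the paper's argument, which instead bounds $\Eb[u(W(\bar{I}))] - \Eb[u(W(I))]$ from below using concavity, $u(y) - u(z) \ge u'(y)(y-z)$, then splits on $\{X < (S+\bar{a})^+\}$ versus its complement, replaces the marginal utility by the constant $u'(w-\bar{a})$, and arrives at the explicit lower bound $u'(w-\bar{a})\,\tfrac{\eta}{1+\eta}(a-\bar{a})$. Your argument is cleaner here; the paper's has the advantage of producing a quantitative gap.

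Where the approaches really separate is the third assertion ($S \ge 0$, $\eta = 0$). You correctly note that pointwise comparison fails for $a < \bar{a}$ and therefore build an increasing-convex-order argument, which is valid and in fact yields the stronger conclusion that $W(\bar{I})$ second-order stochastically dominates every $W(I)$. The paper, however, needs none of this: its concavity bound from the first part \emph{already} extends to all $a \in [0,\pf]$ once $S \ge 0$. The point is that on $\{X \ge S+\bar{a}\}$ the factor multiplying $u'(W(\bar{I}))$ equals $(S+\bar{a}) + a - \bar{a} - \Ib = (S+a) - \Ib \ge 0$ (using $(S+a)^+ = S+a$ and $\Ib \le S+a$), so the marginal-utility replacement is legitimate regardless of the sign of $a - \bar{a}$; and with $\eta = 0$ the residual lower bound $\tfrac{\eta}{1+\eta}(a-\bar{a})$ collapses to zero. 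Thus a single first-order concavity estimate handles all three assertions uniformly. Your stochastic-order proof is an elegant alternative and buys universality in $u$, but it is heavier than the problem requires---had you used the paper's concavity bound in Part~1 instead of the pointwise comparison, you would have discovered that Part~3 comes for free.
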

\begin{proof}   See  Appendix \ref{prop3.1} . 
\end{proof}

Proposition \ref{prop:a_bar} offers two important insights. First, consider the case of $S\ge 0$  and $\eta = 0$; the proposition shows that the optimal reinsurance is $\bar{I}$, a contract with \emph{partial} coverage. If the reinsurer's default is otherwise ignored, \citet{mossin1968aspects} shows that the optimal contract when $\eta = 0$ is full coverage ($I^*(x) = x$). As such, Proposition \ref{prop:a_bar} extends Mossin's result by incorporating the counterparty default risk. Note that the default risk does not impact the deductible choice, as $\bar{I}$ has a zero deductible (noting $\bar{I}(x,s) = x$ for all $x < (s + \bar{a})^+$), the same as in \citet{mossin1968aspects}. However, because of the possible default from the reinsurer, the insurer will not seek full coverage even when the loading $\eta$ is zero. Instead, the optimal contract has a policy limit of $(S + \bar{a})^+$, which equals the reinsurer's reserve.
Second, as $\bar{I}$ dominates all admissible $I$ with premiums greater than $\bar{a}$, the optimal premium level $a^*$ defined in \eqref{eq:a_op} is achieved in $[0, \bar{a}]$. As such, the remaining task in Step 1 is to solve \eqref{eq:I_op_defa} for all $a \in [0, \bar{a}]$, and the next theorem completes this task.

\begin{theorem}
	\label{thm:step_one}
     For every $a \in [0, \bar{a}]$,  the locally optimal reinsurance contract $I^*_a$ to Problem \ref{prob:no} over the constrained set $\Ac_a$ in \eqref{eq:Ac_a}  is given by 
	\begin{align}
		\label{eq:I_op}
		I_a^*(x,s) = (x - d(a))^+ - (x - d(a) - (s + a)^+)^+,
	\end{align}
    in which the deductible amount \( d(a) \in [0, M] \) solves the equation 
    \begin{align}
    	\label{eq:g_a}
    	g_a(y):= (1+\eta)\Eb[(X- y)^+-(X-y-(S+a)^+)^+] - a = 0. 
    \end{align}
\end{theorem}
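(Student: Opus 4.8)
The plan is to solve the constrained problem \eqref{eq:I_op_defa} in two stages: first show that it loses nothing to restrict attention to \emph{default-free} contracts in $\Ac_a$ (those with $I\le R:=(S+a)^+$), and then compare an arbitrary default-free contract against the candidate $I_a^*$ via the tangent-line inequality for the concave $u$. Throughout, since $X$ is bounded and the premium is fixed at $a$, every realization of $W(I)$ lies in a fixed compact interval, so $u$ and $u'$ are bounded there and all expectations below are finite.

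\emph{Step 1: reduction to default-free contracts.} Fix $a\in[0,\bar a]$ and an arbitrary $I\in\Ac_a$. Since $a\le\bar a$, Lemma \ref{lem:a_bar} gives $g(a)\ge 0$, i.e. $(1+\eta)\Eb[X\wedge(S+a)^+]\ge a$. Let $J:=\Ib(I)$ be the actual indemnity; then $0\le J\le (I\wedge R)\le X$, so $J\in\Ac$ is default-free with $\pi(J)\le\pi(I)=a$, whereas $K:=X\wedge R\in\Ac$ is default-free with $\pi(K)=(1+\eta)\Eb[X\wedge(S+a)^+]\ge a$ and $J\le K$. Along the affine path $I_\lambda:=(1-\lambda)J+\lambda K$, $\lambda\in[0,1]$, every $I_\lambda$ lies in $\Ac$, is default-free ($I_\lambda\le K\le R$) and dominates $J$, while $\lambda\mapsto\pi(I_\lambda)$ is continuous and runs from $\pi(J)\le a$ to $\pi(K)\ge a$; the intermediate value theorem yields $\lambda^*$ with $\tilde I:=I_{\lambda^*}\in\Ac_a$. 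Being default-free, $\Ib(\tilde I)=\tilde I\ge J=\Ib(I)$, so $W(\tilde I)=w-X-a+\tilde I\ge W(I)$ a.s., whence $\Eb[u(W(\tilde I))]\ge\Eb[u(W(I))]$. Thus $\sup_{\Ac_a}\Eb[u(W(\cdot))]$ is the supremum over the default-free subclass of $\Ac_a$.

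\emph{Step 2: the candidate and its optimality.} Writing $h_y(x,s):=(x-y)^+-(x-y-(s+a)^+)^+=(x-y)^+\wedge(s+a)^+\in[0,x]$, dominated convergence makes $g_a$ in \eqref{eq:g_a} continuous (indeed non-increasing) on $[0,M]$, with $g_a(0)=(1+\eta)\Eb[X\wedge(S+a)^+]-a=g(a)\ge 0$ and $g_a(M)=-a\le 0$ (as $X\le M$ a.s.); hence some $d(a)\in[0,M]$ solves $g_a(d(a))=0$, which is exactly $\pi(I_a^*)=a$, so $I_a^*\in\Ac_a$, and $I_a^*(x,s)=(x-d(a))^+\wedge(s+a)^+\le R$ is default-free. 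Now fix any default-free $I\in\Ac_a$. Since $W(I)=w-X-a+I$ and $W(I_a^*)=w-X-a+I_a^*$, concavity of $u$ gives $u(W(I))\le u(W(I_a^*))+u'(W(I_a^*))(I-I_a^*)$, so it suffices to show $\Eb[u'(W(I_a^*))(I-I_a^*)]\le 0$; as $\Eb[I]=a/(1+\eta)=\Eb[I_a^*]$, this equals $\Eb[(u'(W(I_a^*))-u'(c))(I-I_a^*)]$ with $c:=w-d(a)-a$. Partition the sample space by the size of $X$: on $\{X\le d(a)\}$ one has $I_a^*=0$, so $I-I_a^*\ge 0$, while $W(I_a^*)=w-X-a\ge c$, so $u'(W(I_a^*))-u'(c)\le 0$; on $\{d(a)<X\le d(a)+(S+a)^+\}$ one has $I_a^*=X-d(a)$, so $W(I_a^*)=c$ exactly and the factor $u'(W(I_a^*))-u'(c)$ vanishes; on $\{X>d(a)+(S+a)^+\}$ one has $I_a^*=(S+a)^+=R$, so $I-I_a^*=I-R\le 0$ by default-freeness, while $W(I_a^*)=w-X-a+(S+a)^+<c$, so $u'(W(I_a^*))-u'(c)>0$. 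In every region the integrand is $\le 0$, hence $\Eb[u'(W(I_a^*))(I-I_a^*)]\le 0$ and $\Eb[u(W(I))]\le\Eb[u(W(I_a^*))]$; combined with Step 1, $I_a^*$ solves \eqref{eq:I_op_defa}.

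\emph{Main obstacle.} The delicate point is Step 1: one must produce a single modification of $I$ that simultaneously dominates the actual payoff $\Ib(I)$ pointwise, remains admissible and default-free, and has premium exactly $a$. This is precisely where the hypothesis $a\le\bar a$ is used — it makes the maximal default-free contract $X\wedge R$ costly enough ($\pi(X\wedge R)=g(a)+a\ge a$) to interpolate back up to premium $a$ without ever falling below $\Ib(I)$. After that, the only real content is the three-region accounting in Step 2, which succeeds because $I_a^*$ pins the buyer's wealth to the constant $c$ exactly on the event $\{0<I_a^*<R\}$, forcing $u'(W(I_a^*))$ to straddle the level $u'(c)$ in the direction that matches the sign of $I-I_a^*$ (nonnegative where $I_a^*=0$, nonpositive where $I_a^*=R$).
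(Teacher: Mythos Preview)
Your proof is correct, and Step 2 is essentially identical to the paper's argument: the same tangent-line inequality for $u$, the same three-region partition according to whether $I_a^*$ equals $0$, $X-d(a)$, or $(S+a)^+$, and the same sign analysis using $u'$ decreasing.

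The one genuine difference is your Step 1. You first reduce to default-free contracts by interpolating between $\Ib(I)$ and $X\wedge R$, invoking $a\le\bar a$ (hence $g(a)\ge 0$) to guarantee the upper endpoint has premium at least $a$. The paper bypasses this entirely: it applies the concavity bound directly to $W(I)=w-X-a+\Ib(X,S;I)$, so the quantity appearing in place of your default-free $I$ is simply $\Ib(I)$ itself. Since $\Ib(I)\le (S+a)^+$ and $\Eb[\Ib(I)]\le \Eb[I]=\Eb[I_a^*]$ hold automatically for \emph{every} $I\in\Ac_a$, the three-region argument goes through with $\Eb[I_a^*-\Ib(I)]\ge 0$ replacing your equality $\Eb[I_a^*-I]=0$. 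Your detour is clean and makes the ``default-free suffices'' statement explicit, but the paper's route is shorter and shows the interpolation is unnecessary.
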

\begin{proof}   See Appendix \ref{thm3.1}. 
\end{proof}

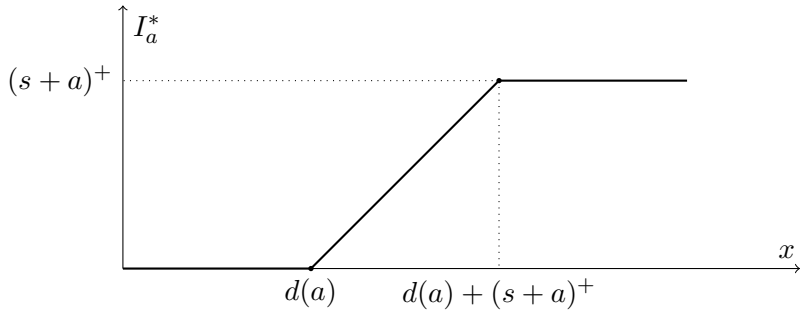
\begin{figure}[h]
	\centering
	\vspace{-2ex}
	\begin{tikzpicture}[scale = 0.5]
		\draw[->] (0,0) -- (18,0) node[above,xshift=-5pt] {$x$};
		\draw[->] (0,0) -- (0,7) node[below right] {$I_a^*$};
		
		\foreach \x in {5,10} 
		\draw (\x,0) node[below] {\ifnum\x=5 $d(a)$ \else $d(a) + (s+a)^+$ \fi};
		\foreach \y in {5} 
		\draw (0,\y) node[left] { $(s+a)^+$ };
		
		\draw[domain=0:5, thick] plot (\x,0);
		\draw[domain=5:10, thick] plot (\x, \x-5);
		\draw[domain=10:15, thick] plot (\x, 5);
		
		\draw[dotted] (10,5) -- (10,0);
		\draw[dotted] (10,5) -- (0,5);
		
		\fill (5, 0) circle (2pt);
		\fill (10, 5) circle (2pt);
		
	\end{tikzpicture}
	\vspace{-1ex}
	\caption{Optimal contract $I^*_a$ in \eqref{eq:I_op}.}
	\label{fig:I_op}
    \vspace{-2ex}
\end{figure}

Several remarks on Theorem \ref{thm:step_one} are due as follows. 
The optimal contract $I^*_a$ is obtained in semiclosed form in \eqref{eq:I_op}, and the only unknown $d(a)$ can be easily computed by a numerical method. 
As easily seen from Figure \ref{fig:I_op}, $I^*_a$ is a deductible reinsurance contract with a  policy limit; the deductible amount is $d(a)$, and the maximum covered loss is $d(a) + (s + a)^+$, yielding a cap of $(s + a)^+$ on the contractual indemnity. Such a contract structure is economically justifiable and commonly used in practice. 
On the one hand, the deductible is present due to the cost of risk transferring and indeed vanishes when $\eta = 0$ (i.e., we have $d(a) = 0$ in \eqref{eq:I_op} when $\eta = 0$, and this result is proved in Proposition \ref{prop:a_bar}).
On the other hand, the policy limit is the direct consequence of the reinsurer's default risk.
We observe from Figure \ref{fig:I_op} that $I^*_a(x, s) \le  (s + a)^+$ for all $x \ge 0$, and thus $I^*_a$ is a default-free  contract (i.e., $D(I^*_a) \equiv 0$ in \eqref{eq:D}).

Indeed, a contract that depends on the reinsurer's background risk \(S\) is efficient because it allows the insurer to tailor the coverage to the reinsurer's ability to pay. Default occurs when the promised indemnity \(I(X, S)\) exceeds the reinsurer's available resources \((S + \pi(I))^+\). If the insurer chooses a contract that may extend the reinsurer's reserve, this would create a deadweight cost in default states, in the sense that the insurer pays for a coverage $I(X,S)$ but receives strictly smaller $\tau (S + \pi(I))^+$. As such, the optimal contract $I^*_a$ should always be within the reinsurer's reserve (i.e., it is default-free); otherwise, one can easily construct a default-free contract that strictly dominates $I^*_a$, contracting its optimality (the exact proof is available upon request).

Thanks to Theorem \ref{thm:step_one}, we can reduce Problem \ref{prob:no}, an infinite-dimensional optimization problem, into a one-dimensional scalar optimization problem in \eqref{eq:a_op}, which we study in the next step.  The solution $d(a)$ to \eqref{eq:g_a} may not be unique in general. However, with mild assumptions imposed on the insurer's loss $X$ (see Corollary \ref{cor:unique_d}), the uniqueness of $d(a)$ is gained.

\vspace{1ex}
\noindent
\textbf{Step 2.}  We solve \eqref{eq:a_op} to obtain the optimal premium level $a^*$ and identify the globally optimal reinsurance contract  to Problem \ref{prob:no} as $I^* = I^*_{a^*}$. 
To that end, we make rather mild assumptions on the insurer's loss $X$ and the reinsurer's background risk $S$, and discuss them in Remark \ref{rem:X_dist}.

\begin{assumption}\label{ass} The insurer's loss $X$ and the reinsurer's background risk $S$ satisfy the following conditions: (1) $S \ge  0$ almost surely (i.e., $\Pb(S\ge 0)=1$); (2) both $X$ and $X-S$ have a finite number of jump points on $[0,M]$; (3)
$\Pb(X\le x)$ strictly increases with respect to  $x \in[0,M]$; (4) $\Eb[(X-y)^+-(X-y-S)^+]>0$ holds for all $y\in[0,M)$. 
\end{assumption}

\begin{remark}
	\label{rem:X_dist}
	The reinsurer is required to set aside a strictly positive initial reserve at time 0 (the inception time of a contract), then Condition (1) in Assumption \ref{ass}, also imposed in \citet{biffis2012optimal}, simply means that risky investments in the reinsurer's reserve, such as equities, have limited liabilities, consistent with most real-life scenarios.
	By Condition (2), both $X$ and $X-S$ can have jumps at any point, but the total number of jumps on $[0,M]$ must be finite.  In the literature, similar, but stronger, assumptions are often imposed in the study of optimal (re)insurance problems. For instance, \citet{bernard2015optimal} assume that $X$ has no atom, while \citet{cai2014optimal} assume that $X$ only has a jump at $0$. Condition (3) is also imposed in \citet{asimit2013optimal}  and  \citet{cai2014optimal}. Condition (4) is a rather mild condition and holds in real markets, because Condition (3), along with \(S > 0\), implies Condition (4). 
	
	On the technical side, because of $S \ge 0$, we have \( (S + a)^+ = (S + a) \) for all $a \ge 0$, and it helps avoid discontinuities when taking derivatives with respect to \( a \). Condition (2) is used to show that certain functions are continuously differentiable, except for a finite set. Conditions (3) and (4) ensure the uniqueness of some solutions. 
	These properties are used in the proofs of Corollary \ref{cor:unique_d}, Theorem \ref{thm:1_ass}, and Proposition \ref{prop:a_p}.

	In the special case \emph{without} the reinsurer's background risk, we have $S \equiv r > 0$, the initial reserve; recall that $S \le 0$ is already analyzed in Proposition \ref{prop:extre}. In this case, we can further remove the conditions on $S$ in Assumption \ref{ass}.  
\end{remark}

Recall that for a fixed $a \in [0, \bar{a}]$, the deductible $d(a)$ that solves $g_a(y) = 0$ in Theorem \ref{thm:step_one} may not be unique. However, under Assumption \ref{ass},  the next corollary shows that $d(a)$ is unique. 
\begin{corollary}\label{cor:unique_d}
	Let Assumption \ref{ass} hold. For every $a \in [0, \bar{a}]$, there exists a unique solution $d(a) \in [0, M]$ to  $g_a (y) = 0$ in \eqref{eq:g_a}.
\end{corollary}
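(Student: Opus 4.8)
The plan is to reduce everything to a monotonicity statement about the map \(y \mapsto g_a(y)\). Since Assumption \ref{ass} imposes \(S \ge 0\), we have \((S+a)^+ = S+a\) for every \(a \ge 0\), and a direct computation of the two positive parts shows that the integrand in \eqref{eq:g_a} collapses to a minimum, \((X-y)^+ - (X-y-(S+a))^+ = (X-y)^+ \wedge (S+a)\), so that \(g_a(y) = (1+\eta)\,\Eb\big[(X-y)^+ \wedge (S+a)\big] - a\). Existence of a root in \([0,M]\) is already part of Theorem \ref{thm:step_one}, but it also follows quickly from the intermediate value theorem: \(g_a\) is continuous (the integrand is continuous in \(y\) for each state and dominated by \(X \le M\), so dominated convergence applies), \(g_a(0) = g(a) \ge 0\) for \(a \in [0,\bar{a}]\) by Lemma \ref{lem:a_bar}, and \(g_a(M) = -a \le 0\) because \((X-M)^+ = 0\) almost surely. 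Hence the real content of the corollary is uniqueness.

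For uniqueness, the key observation is that \(g_a\) is nonincreasing, since \((X-y)^+ \wedge (S+a)\) is pointwise nonincreasing in \(y\). I would argue by contradiction: if \(y_1 < y_2\) were two roots in \([0,M]\), then continuity and monotonicity force \(g_a \equiv 0\) on \([y_1, y_2]\), so \(\Eb\big[(X-y)^+ \wedge (S+a)\big]\) is constant there and in particular \(\Eb\big[(X-y_1)^+ \wedge (S+a) - (X-y_2)^+ \wedge (S+a)\big] = 0\). The integrand here is pointwise nonnegative, so it must vanish almost surely, and the contradiction is obtained by exhibiting a positive-probability event on which it is strictly positive.

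This last step is where Conditions (3) and (4) of Assumption \ref{ass} enter, and it is the only delicate point; the difficulty is the boundary case where \(S+a\) can be zero. If \(a > 0\), then on the event \(\{y_1 < X \le y_2\}\) we have \((X-y_2)^+ = 0\) while \((X-y_1)^+ \wedge (S+a) \ge (X-y_1) \wedge a > 0\) (using \(S \ge 0\) and \(X > y_1\)), and Condition (3) gives \(\Pb(y_1 < X \le y_2) = \Pb(X \le y_2) - \Pb(X \le y_1) > 0\), so the integrand is strictly positive on a set of positive probability — contradiction. If \(a = 0\) this argument can fail, because \(S+a\) may be zero on part of \(\{y_1 < X \le y_2\}\); instead I would invoke Condition (4) directly, which says \(g_0(y) = (1+\eta)\,\Eb\big[(X-y)^+ - (X-y-S)^+\big] > 0\) for every \(y \in [0,M)\), while \(g_0(M) = 0\), so the unique root is forced to be \(d(0) = M\). (Note that Condition (4) at \(y = 0\) yields \(\Eb[X \wedge S] > 0\), so by Lemma \ref{lem:a_bar} \(\bar{a}\) is itself strictly positive and the interval \([0,\bar{a}]\) is nondegenerate.) Combining the two cases establishes uniqueness for all \(a \in [0,\bar{a}]\).
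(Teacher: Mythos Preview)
Your proof is correct and follows essentially the same approach as the paper: both split into the cases \(a>0\) and \(a=0\), use Condition (3) of Assumption \ref{ass} to show \(g_a\) is strictly decreasing on \([0,M]\) when \(a>0\) (the paper computes \(g_a(y_1)-g_a(y_2)\) directly and shows it is strictly positive on \(\{y_1<X<y_2+S+a\}\), whereas you argue by contradiction on the smaller event \(\{y_1<X\le y_2\}\), but the content is the same), and invoke Condition (4) to force \(d(0)=M\) when \(a=0\). Your rewriting of the integrand as \((X-y)^+\wedge(S+a)\) is a clean simplification.
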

\begin{proof}   See  Appendix \ref{Coroly3.1}.  
\end{proof}

We now solve for the optimal premium level $a^*$ and obtain a full solution to the insurer's optimal reinsurance problem (see Problem \ref{prob:no}). Note that when $S \ge 0$  and $\eta = 0$, Proposition \ref{prop:a_bar} shows that $\bar{I}$ in \eqref{eq:I_bar} is the optimal contract to Problem \ref{prob:no}.  Also, recall that $\bar{a}$ is defined by \eqref{eq:abars}.

\begin{theorem}\label{thm:1_ass}
	Let Assumption \ref{ass} hold. The globally optimal reinsurance contract $I^*$ to Problem \ref{prob:no} is given by 
	\begin{align}
		\label{eq:I_star}
		I^*(x,s) = \begin{cases}
			x-(x-(s+\bar{a}))^+, & \text{if } \eta = 0, \\
			(x-d(a^*))^+-(x-(d(a^*)+s+a^*))^+, & \text{if } 0<\eta<\dfrac{u'(w-M)}{\Eb[u'(w-X)]}-1, \\
			0, & \text{if } \eta\ge \dfrac{u'(w-M)}{\Eb[u'(w-X)]}-1,
		\end{cases}
	\end{align}
in which $M = \mathrm{ess \, sup} \, X < \infty$, $d(a)$ is established in Corollary \ref{cor:unique_d} for all $a \in [0, \bar{a}]$, and $a^*\in (0,\bar{a})$ is the unique solution to 
\begin{align}
	\label{eq:a_star}
	\Eb[u'(w-(X\wedge d(a))-a)]-\frac{u'(w-d(a)-a)}{1+\eta}=0.
\end{align}
\end{theorem}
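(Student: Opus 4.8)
The plan is to carry out Step 2 of the two-step procedure, i.e., to optimize the objective $V(a) := \Eb[u(W(I_a^*))]$ over $a \in [0, \bar a]$, using the semiclosed form of $I_a^*$ from Theorem \ref{thm:step_one}. First I would dispose of the two boundary/degenerate regimes. When $\eta = 0$, Proposition \ref{prop:a_bar} already identifies $\bar I$ as globally optimal, so nothing remains. When $\eta$ is large, I expect a ``no insurance is optimal'' conclusion: the marginal benefit of buying the first unit of coverage at premium $a=0$ is negative. Concretely, since $I_0^* \equiv 0$ (because $g_0(y)=0$ forces $d(0)=M$ by Lemma \ref{lem:a_bar}), I would compute the right-derivative $V'(0^+)$ and show it equals something proportional to $u'(w-M) - (1+\eta)\Eb[u'(w-X)]$; this is nonpositive precisely when $\eta \ge u'(w-M)/\Eb[u'(w-X)] - 1$, giving the third branch. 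For the intermediate range of $\eta$, I then need to show $V$ has an interior maximizer $a^* \in (0,\bar a)$ characterized by the first-order condition \eqref{eq:a_star}.

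The core computation is differentiating $V(a) = \Eb[u(w - X - a + I_a^*(X,S))]$ with respect to $a$. Using $I_a^*(x,s) = (x-d(a))^+ - (x-d(a)-(s+a))^+$ (here $(s+a)^+ = s+a$ since $S\ge 0$ by Assumption \ref{ass}(1)), the chain rule gives three contributions: the explicit $-1$ from the $-a$ term, the term from $\partial_a d(a)$ acting on the deductible, and the term from the explicit $a$ inside the policy-limit layer. The dependence of $d(a)$ on $a$ is governed by implicit differentiation of the defining equation $g_a(d(a)) = 0$ in \eqref{eq:g_a}; Assumption \ref{ass}(2) (finitely many jump points of $X$ and $X-S$) is what licenses differentiating under the expectation, and Conditions (3)--(4) give $d(a)$ well-defined and the relevant denominators nonzero (this is Corollary \ref{cor:unique_d}). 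After substituting $\partial_a d(a)$ and simplifying — the key algebraic miracle is that the contributions from the three regions $\{X \le d(a)\}$, $\{d(a) < X \le d(a) + S + a\}$, $\{X > d(a)+S+a\}$ collapse — I expect $V'(a)$ to reduce to a constant multiple of
\begin{align}
\Eb\big[u'(w - (X \wedge d(a)) - a)\big] - \frac{u'(w - d(a) - a)}{1+\eta},
\end{align}
matching \eqref{eq:a_star}. I would verify the sign at the endpoints: $V'(0^+) > 0$ in the intermediate-$\eta$ regime (complementing the boundary analysis above) and $V'(\bar a^-) < 0$ (since at $a = \bar a$ one has $d(\bar a) = 0$, so $I_{\bar a}^* = \bar I$, and the premium is ``too high'' relative to $\eta>0$), forcing an interior root.

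Finally, uniqueness of $a^*$: I would show the left-hand side of \eqref{eq:a_star}, call it $\Phi(a)$, is strictly monotone in $a$ on $(0,\bar a)$, so that $V$ is first increasing then decreasing (quasiconcave in $a$), pinning down $a^*$ uniquely and certifying it as the global maximizer. Monotonicity of $\Phi$ should follow from strict concavity of $u$ (so $u' $ is strictly decreasing), together with monotonicity properties of $a \mapsto d(a)$ and $a\mapsto d(a)+a$ derived from the implicit function analysis; Assumption \ref{ass}(3) ($X$ has strictly increasing cdf, hence no atoms in $(0,M)$ and $d(a)$ strictly decreasing where interior) is what makes the monotonicity strict. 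The main obstacle I anticipate is the differentiation-under-the-integral and the bookkeeping of the implicit derivative $\partial_a d(a)$: one must handle the kinks of $I_a^*$ in $x$, the possible jump points of $X$ and $X-S$ (where $g_a$ may fail to be differentiable — hence the ``finite number of jump points'' hypothesis and likely a one-sided-derivative or a.e.-differentiability argument), and the boundary case $d(a) = 0$ or $d(a) = M$ separately; getting the cancellation to land exactly on \eqref{eq:a_star} is delicate but should be mechanical once the regions are set up correctly.
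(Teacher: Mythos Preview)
Your plan matches the paper's proof closely: reduce to maximizing $\hat J(a):=\Eb[u(W(I_a^*))]$ over $[0,\bar a]$, differentiate using the implicit relation $g_a(d(a))=0$ (the paper packages the needed regularity of $d(\cdot)$ in Lemma \ref{lem:d_a}, where Assumption \ref{ass}(2) is used exactly as you anticipate), obtain $\hat J'(a)=-\Phi(a)$ with $\Phi$ the left side of \eqref{eq:a_star}, and conclude by boundary signs and monotonicity of $\Phi$. The paper's device for the first differentiation is a Fubini rewriting of $J(a,y)$ into integrals of $\Pb(X>t)$ and $\Pb(X-S>t)$, which sidesteps the region bookkeeping you flag as the main obstacle; you may find that cleaner than tracking the three loss regions directly.

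The one place your sketch is too optimistic is the uniqueness step. You propose to get strict monotonicity of $\Phi$ from monotonicity of $a\mapsto d(a)$ and $a\mapsto d(a)+a$ together with $u'$ decreasing. That does not go through: $d'(a)+1=\dfrac{(1+\eta)\Pb(X>d(a))-1}{(1+\eta)\Pb(d(a)<X\le d(a)+S+a)}$ need not have a fixed sign on all of $(0,\bar a)$, and in the first term of $\Phi$ the effects of $-a$ and of $-X\wedge d(a)$ pull in opposite directions. The paper instead differentiates once more and finds
\[
\hat J''(a)=\Eb\!\big[u''(w-X-a)\,\mathbf 1_{\{X\le d(a)\}}\big]+\frac{u''(w-d(a)-a)\,[1-(1+\eta)\Pb(X>d(a))]^2}{(1+\eta)^2\,\Pb(d(a)<X\le d(a)+S+a)},
\]
where both summands carry a factor $u''<0$, so $\hat J''<0$ on $(0,\bar a)\setminus B$ and $\Phi=-\hat J'$ is strictly increasing. (A minor correction: Assumption \ref{ass}(3) does \emph{not} rule out atoms of $X$ on $(0,M)$ --- it only says the cdf is strictly increasing --- and indeed the paper allows finitely many atoms via Condition (2); also $d(0)=M$ comes from Condition (4), not Lemma \ref{lem:a_bar}.)
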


\begin{proof}   See  Appendix \ref{proof:1_ass}.  
\end{proof}

 Theorem \ref{thm:1_ass} presents the optimal contract $I^*$ case by case based on the value of the premium loading $\eta$. Alternatively, we can write $I^*$ in the following uniform  expression: 
 \begin{align}
 	I^*(x,s) = (x-d(a^*))^+-(x-(d(a^*)+s+a^*))^+,
 \end{align}
because 
 \begin{align}
 	\label{eq:a_star_ex}
 	\begin{cases}
 		a^* = \bar{a} \text{ and } d(\bar{a}) = 0, & \text{ if } \eta = 0, \\[1ex]
 		a^* = 0 \text{ and } d(0) = M, & \text{ if } \eta\ge \frac{u'(w-M)}{\Eb[u'(w-X)]}-1.
 	\end{cases}
 \end{align}
As suggested by the uniform expression, the optimal contract $I^*$ is a deductible reinsurance with a policy limit, just as $I^*_a$ in \eqref{eq:I_op} (see Figure \ref{fig:I_op}). We remark that the presence of the policy limit in $I^*$ reflects the impact of counterparty default on the insurer's decision making; the policy limit vanishes when the reinsurer's reserve is sufficiently large (so that default never occurs). 
The first case in \eqref{eq:I_star} shows that the deductible in $I^*$ vanishes when the premium loading $\eta$ equals zero.
From the last case in \eqref{eq:I_star}, we see that if the premium loading $\eta$ is too high, the insurer is better off with 100\% self-insurance than purchasing reinsurance from the reinsurer. 
If reinsurance is reasonably priced as in the second case of \eqref{eq:I_star}, endogenous default may occur if the insurer chooses an arbitrary contract among all admissible choices in \eqref{eq:Ac}. 
However, if the insurer chooses the optimal contract $I^*$, we always have $D(I^*) \equiv 0$, and the reinsurer will never default on $I^*$. On a technical note, for the second case, we need to solve a nonlinear equation \eqref{eq:a_star} to get the optimal premium $a^*$, which may seem to be a challenging task. Luckily, we can show that both $g_a$ in \eqref{eq:g_a} and the left-hand side of \eqref{eq:a_star} are strictly decreasing functions, which allows us to efficiently compute $a^*$ and $d(a^*)$ (see Example \ref{exa2} below). 
Last, we observe that the optimal contract $I^*$ satisfies the IC constraint automatically (i.e., $0 \le I^*(x,s) - I^*(x',s) \le x - x'$ for all $0 \le x' \le x \le M$), which is why we do not impose the IC constraint upfront in defining the admissible set $\Ac$ in \eqref{eq:Ac}.

Due to the presence of endogenous default, we cannot establish the \emph{strict} concavity of the functional \( \mathcal{J}(I):= \mathbb{E}[u(W(I))] \). Fortunately, 
we can show that under Assumption \ref{ass}, \( I^* \) in \eqref{eq:I_star} is the unique globally optimal reinsurance contract.
\begin{proposition}\label{prop:unique}
	Let Assumption \ref{ass} hold. Then, $I^*$ in \eqref{eq:I_star} is the unique globally optimal reinsurance contract to Problem \ref{prob:no}.
\end{proposition}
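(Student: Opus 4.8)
The plan is to establish uniqueness by tracing back through the construction that produced $I^*$ and showing that each step admits only one optimizer. The argument naturally splits according to the three cases in \eqref{eq:I_star}, but the conceptual core is the same: we already know from Theorem \ref{thm:1_ass} that $I^*$ is \emph{a} globally optimal contract, so it suffices to show that any other candidate optimizer must coincide with it almost surely. First I would recall that a globally optimal contract $\hat I$ has some premium $\hat a := \pi(\hat I) \in [0,\pf]$, and by Proposition \ref{prop:a_bar} the optimal premium level must lie in $[0,\bar a]$ (when $\eta>0$; when $\eta = 0$ it must equal $\bar a$, and when $\eta$ is large it must be $0$). So $\hat I \in \Ac_{\hat a}$ with $\hat a \in [0,\bar a]$, and $\hat I$ must be a maximizer of $\Eb[u(W(I))]$ over $\Ac_{\hat a}$, i.e. a \emph{locally} optimal contract for premium level $\hat a$.

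The first key step is therefore a \emph{local uniqueness} claim: for each fixed $a \in [0,\bar a]$, the locally optimal contract over $\Ac_a$ is $\Pb$-a.s. unique and equals $I^*_a$ in \eqref{eq:I_op} with $d(a)$ the unique root from Corollary \ref{cor:unique_d}. I expect this to follow by re-examining the proof of Theorem \ref{thm:step_one}: that proof should construct, for an arbitrary $I \in \Ac_a$, the contract $I^*_a$ and show $\Eb[u(W(I^*_a))] \ge \Eb[u(W(I))]$ with equality forcing $I = I^*_a$ a.s. The mechanism is the usual one — the actual indemnity satisfies $\Ib(\cdot;I) \le X\wedge(S+a)^+$, and among all ways of distributing a fixed expected indemnity $a/(1+\eta)$ subject to this cap, concentrating coverage on the states where marginal utility $u'(w - X - a + \Ib)$ is largest (i.e. the largest losses, capped) is strictly better because $u$ is strictly concave and $\Pb(X \le x)$ is strictly increasing (Assumption \ref{ass}(3)); strict concavity plus the strictly increasing distribution rules out alternative optimal allocations. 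Here the conditions in Assumption \ref{ass} — especially (3) and (4), which give uniqueness of $d(a)$ — are exactly what upgrade the weak inequality to a strict one off the set $\{I = I^*_a\}$.

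The second key step is \emph{uniqueness of the optimal premium level} $a^*$. For $\eta > 0$ (second case), this is precisely the claim, already stated in Theorem \ref{thm:1_ass}, that \eqref{eq:a_star} has a unique solution $a^* \in (0,\bar a)$, together with the assertion in the text that the left-hand side of \eqref{eq:a_star} is strictly decreasing in $a$; strict monotonicity gives a unique root, and one checks the objective $a \mapsto \Eb[u(W(I^*_a))]$ is strictly quasi-concave (or that its derivative vanishes only at $a^*$) so no other $a$ can be optimal. For $\eta = 0$ the value is increasing on $[0,\bar a]$ so $a^* = \bar a$ uniquely, and $d(\bar a) = 0$ by \eqref{eq:a_star_ex}, giving $I^* = \bar I$; and here $\bar I$'s uniqueness among contracts with premium $\bar a$ comes from the equality case of \eqref{eq:ineq_Ibar} in Proposition \ref{prop:a_bar}. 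For $\eta \ge u'(w-M)/\Eb[u'(w-X)] - 1$, the objective is decreasing at $a = 0$, forcing $a^* = 0$ and $I^* \equiv 0$, which is trivially the only contract with zero premium in $\Ac$ (since $\pi(I) = 0 \Rightarrow \Eb[I] = 0 \Rightarrow I = 0$ a.s. as $I \ge 0$). Combining the two steps: any globally optimal $\hat I$ has premium $\hat a = a^*$ (by step two) and equals $I^*_{a^*} = I^*$ a.s. (by step one). The main obstacle will be step one — cleanly arguing that the equality case in the improvement inequality of Theorem \ref{thm:step_one}'s proof forces $I = I^*_a$ almost surely, which requires the strict concavity of $u$ to bite on a set of positive probability whenever $I \ne I^*_a$; this is where Assumption \ref{ass}(2)--(4) must be used carefully to exclude pathological ties, and I would lean on the structure of that proof rather than redo it from scratch.
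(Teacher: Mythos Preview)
Your proposal is essentially the same approach as the paper's: trace the equality cases through the chain of inequalities in the proofs of Proposition \ref{prop:a_bar} and Theorem \ref{thm:step_one} to get local uniqueness, then invoke the strict monotonicity established in the proof of Theorem \ref{thm:1_ass} to pin down $a^*$. One point you glossed over that the paper makes explicit: Proposition \ref{prop:a_bar} as stated only gives the \emph{weak} inequality $\Eb[u(W(\bar I))]\ge\Eb[u(W(I))]$ for $a\in[\bar a,\pf]$, so to exclude the possibility that some $\hat I$ with $\pi(\hat I)>\bar a$ is also globally optimal you must first upgrade \eqref{eq:ineq_Ibar} to a strict inequality for $a>\bar a$ --- the paper does this by checking that equality in \eqref{eq:prop:a_bar} would force $\Pb(X\ge (S+a)^+)=1$, contradicting Assumption \ref{ass}(1),(3). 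Once that is in place, your two-step plan goes through exactly as in the paper.
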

\begin{proof} 
  See Appendix \ref{prop3.2}.  
\end{proof} 

\subsection{Comparative Statics}
\label{sub:comp}

In this section, we conduct a comparative statics analysis on the optimal reinsurance contract $I^*$ obtained in Theorem \ref{thm:1_ass}. This goal can be easily achieved by a \emph{numerical} method once the model is given, but it is challenging to obtain \emph{analytical} results, which we are able to achieve under mild conditions (see Proposition \ref{prop:a_p}). Note that certain, but \emph{not} all, results in Proposition \ref{prop:a_p} requires a condition on the insurer's utility function, as stated below.

\begin{assumption}
	\label{asu:u}
	The insurer's utility function $u$ satisfies the decreasing absolute risk aversion (DARA) condition; i.e., the Arrow-Pratt coefficient of absolute risk aversion, defined by $\mathbb{A}_u= - \frac{u''}{u'}$, is a decreasing function.
\end{assumption}

By definition, agents with DARA risk preferences have reduced risk aversion when their wealth increases. This result is mostly consistent with empirical findings (see \citet{levy1994absolute}).
A prominent example of DARA risk preferences is the family of power utility functions $u(x) = \frac{1}{1 - \gamma} \, x^{1 - \gamma}$, in which $\gamma > 0$ and $\gamma \neq 1$.

Before we present the key results on comparative statics, we introduce the following notations: 
let $a^*$ denote the optimal premium level for all $\eta \ge 0$ (as defined in \eqref{eq:a_op} and calculated by \eqref{eq:a_star} or \eqref{eq:a_star_ex}); for the optimal reinsurance contract, $d^*:= d(a^*)$ is the deductible amount, and $U^* := a^* + d^* + S$ is the policy limit (maximum covered loss). The following proposition summarizes the \emph{analytical} results on how the specifications of the insurer's optimal contract ($a^*$, $d^*$, and $U^*$) are affected by model inputs. Because we allow discontinuities in the distribution of the insurer's loss $X$, the proof is technical and lengthy, and thus we defer it to Online Companion.

\begin{proposition}\label{prop:a_p}
	Let Assumption \ref{ass} hold. We have the following comparative statics results on the optimal reinsurance contract $I^*$ in \eqref{eq:I_star}. 
	\begin{enumerate}
		\item The optimal premium level $a^*$ increases with respect to the reinsurer's background risk $S$ (in the pointwise sense),
		so is $a^* + d^*$. Furthermore, if Assumption \ref{asu:u} holds,  then the optimal deductible $d^*$ decreases with respect to $S$. 
	
		\item If Assumption \ref{asu:u} holds, then the optimal premium level $a^*$ decreases with respect to the insurer's initial wealth $w$, but both the optimal deductible $d^*$ and $a^* + d^*$ increase with respect to $w$.

		\item The optimal premium level $a^*$ increases with respect to the insurer's (Arrow-Pratt) degree of risk aversion $\mathbb{A}_u$,
		but both the optimal deductible $d^*$ and $a^* + d^*$ decrease with respect to $\mathbb{A}_u$.
	\end{enumerate}
\end{proposition}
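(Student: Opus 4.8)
\textbf{Proof proposal for Proposition \ref{prop:a_p}.}

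The plan is to treat all three comparative statics via implicit differentiation of the two defining relations for the optimal contract: the constraint $g_{a}(d(a))=0$ from \eqref{eq:g_a}, which pins down the deductible $d(a)$ for each premium level $a$, and the first-order condition \eqref{eq:a_star}, which pins down the optimal premium $a^{*}$. Write $F(a):=\Eb[u'(w-(X\wedge d(a))-a)]-\tfrac{1}{1+\eta}\,u'(w-d(a)-a)$ for the left-hand side of \eqref{eq:a_star}. Since the excerpt already asserts that $g_{a}$ and $F$ are strictly decreasing (the latter being the key structural fact noted after Theorem \ref{thm:1_ass}), both $d(a)$ and $a^{*}$ are well-defined and, by the implicit function theorem away from the finitely many jump points of $X$ and $X-S$ (Condition (2) of Assumption \ref{ass}), are differentiable in the relevant parameters with signs of derivatives controlled by the signs of the partial derivatives of $g_{a}$ and $F$ with respect to the parameter in question. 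First I would establish the parametric dependence of $d(a)$: differentiating $g_{a}(d(a))=0$ gives $d'(a)=-\partial_{a}g_{a}(d(a))/\partial_{y}g_{a}(d(a))$, and I would record its sign (and the analogous formulas for the dependence of $d$ on $S$ and on $w$, the latter being vacuous since $g_a$ does not involve $w$, so $d(a)$ as a function of $a$ is $w$-free and all $w$-dependence flows through $a^{*}$).

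Next, for part (1), to say ``$a^{*}$ increases with respect to $S$ in the pointwise sense'' I would compare two background risks $S_{1}\le S_{2}$ pointwise. I would show $g_{a}$ (for the constraint) and $F$ (for the FOC) are monotone in $S$ pointwise: replacing $S$ by a pointwise-larger random variable increases $(S+a)^{+}$, hence increases $\Eb[(X-y)^{+}-(X-y-(S+a)^{+})^{+}]=\Eb[X\wedge(S+a)^{+}]-\Eb[(X-y)\wedge 0^{+}\dots]$ — more carefully, $\Eb[(X-y)^{+}-(X-y-(S+a))^{+}]$ is nondecreasing in $S$ because the integrand $(t)^{+}-(t-c)^{+}=t\wedge c$ (for $t=(X-y)^{+}\ge0$, $c=(S+a)\ge 0$) is nondecreasing in $c$. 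This shifts the zero of the constraint and of $F$; combining with the established signs of $\partial_{y}g_{a}$ and $F'$ yields the monotonicity of $a^{*}$ and of $a^{*}+d^{*}=U^{*}-S$ in $S$. The claim that $d^{*}$ \emph{decreases} in $S$ under DARA (Assumption \ref{asu:u}) is more delicate: it requires showing that the increase in $a^{*}$ outpaces the direct effect on $d$, and this is where DARA enters — I would use that $-u''/u'$ decreasing makes $u'(w-(X\wedge d)-a)$ behave suitably under the simultaneous increase of $a$ and shift of $d$, arguing through the sign of a covariance-type term (à la the standard Arrow--Pratt comparison of $u'$ at different wealth levels). Parts (2) and (3) follow the same template: for (2), $w$ enters only through $F$ (all terms are $u'$ evaluated at wealth shifted by $w$), so $\partial_{w}F>0$ (since $u'>0$ is being integrated and differentiated, one gets $\Eb[u''(\cdot)]-\tfrac{1}{1+\eta}u''(\cdot)$, whose sign needs the DARA/convexity-of-$u'$ argument), giving $a^{*}$ decreasing in $w$; then $d^{*}=d(a^{*})$ and $a^{*}+d^{*}$ move as dictated by $d'(a)<0$ composed with $da^{*}/dw<0$. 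For (3), a pointwise-more-risk-averse $u$ (i.e.\ larger $\mathbb{A}_{u}$) is a concave transform of $u$, and I would invoke the classical result that increasing risk aversion increases insurance demand, formalized here as $\partial F<0$ under such a transform, forcing $a^{*}$ up and $d^{*}$, $a^{*}+d^{*}$ down.

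The main obstacle will be the DARA-dependent claims — specifically, signing the mixed second-order effects where $a$ and $d$ move together (part (1)'s $d^{*}$ monotonicity and part (2)). Unlike the ``pointwise'' monotonicity of $a^{*}$ in $S$, which is a clean first-order comparison, these require controlling the interplay between the envelope of $d(a)$ and the relocation of the root of $F$. I expect the clean way to handle this is to substitute $d=d(a)$ into $F$, obtain a reduced one-dimensional equation $\Phi(a;\theta)=0$ in the premium alone (for each parameter $\theta\in\{S,w,\mathbb{A}_{u}\}$), verify $\partial_{a}\Phi<0$ using strict monotonicity of $F$ and the chain rule with $d'(a)$, and then the sign of $da^{*}/d\theta$ reduces to the sign of $\partial_{\theta}\Phi$, which in turn decomposes into a direct term plus a term proportional to $d'(a)\,\partial_{\theta}d$; the DARA hypothesis is exactly what makes the combination sign-definite. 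The discontinuities of $X$ and $X-S$ (allowed by Assumption \ref{ass}) mean all differentiation is justified only off a finite exceptional set, so I would additionally note that $d(a)$, $a^{*}$ are continuous and piecewise-$C^{1}$, so monotonicity established on each piece extends to the whole range by continuity — this bookkeeping is routine but is why the authors defer the full argument to the Online Companion. Since the excerpt grants me everything through Theorem \ref{thm:1_ass} and Proposition \ref{prop:unique}, I would not reprove existence/uniqueness of $a^{*}$ or $d(a)$, and would lean on the already-stated strict monotonicity of $g_{a}$ and of the left-hand side of \eqref{eq:a_star} as the backbone of every implicit-differentiation step.
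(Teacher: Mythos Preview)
Your strategy is essentially the paper's: work from the pair $\{g_a(d)=0,\ F(a)=0\}$, establish monotonicity of each side in the relevant parameter, and read off signs via the implicit function theorem, handling the finitely many jump points by continuity. The paper does two things slightly more cleanly than your outline, and both concern exactly the places you flag as delicate. First, for the $S$-dependence of $d^*$ and of $a^*+d^*$, you plan to argue that the movement of $a^*$ ``outpaces'' the direct effect of $S$ on $d(\cdot)$; the paper avoids this two-effect bookkeeping altogether by observing that the FOC $\tilde J_2(a,d):=-\Eb[u'(w-(X\wedge d)-a)]+\tfrac{1}{1+\eta}u'(w-d-a)$ does not involve $S$ at all. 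Hence, once $a^*$ is known, $d^*$ is the unique root of $\tilde J_2(a^*,\cdot)=0$ and all $S$-dependence enters through $a^*$ alone. The change of variables $(a,d)\to(a,v)$ with $v=a+d$ yields the analogous $S$-free equation $\tilde J_5(a,v)=0$ for $v^*=a^*+d^*$, and one checks directly (no DARA needed) that $\partial_a\tilde J_5\le 0$, $\partial_v\tilde J_5\ge 0$ on the relevant range $d\ge\underline{x}_0$, so $v^*$ moves with $a^*$. Second, for the DARA steps (your $\partial_w F$ and the $d^*$-in-$S$ claim), the paper normalizes to $\tilde J_3:=\tilde J_2/u'(w-d-a)$, for which the DARA inequality $\mathbb A_u(w-(X\wedge d)-a)\le\mathbb A_u(w-d-a)$ immediately gives $\partial_w\tilde J_3\le 0$ and $\partial_a\tilde J_3\ge 0$, sidestepping the covariance-type argument you anticipate. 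One small correction: your $F$ is actually $-\tilde J_2$, and the proof of Theorem~\ref{thm:1_ass} shows $\tilde J_2(a,d(a))$ is strictly decreasing in $a$, so $F$ is strictly \emph{increasing}; the sign bookkeeping in your part (2) sketch needs to be adjusted accordingly, but the conclusion is unchanged. For parts (2) and (3) your pure chain-rule route through $d^*=d(a^*)$ does work, since $g_a$ is independent of $w$ and $u$; the one missing ingredient is that $d'(a)\le -1$ on $[0,\hat a]$ (not merely $<0$), which follows from $d(a)\ge\underline{x}_0$ there and is what gives $a^*+d^*$ the correct direction.
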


\begin{proof} 
		See Online Companion II.  
\end{proof}

In the first item of Proposition \ref{prop:a_p}, we analyze the impact of the reinsurer's background risk $S$ on the optimal contract; note that ``increase'' or ``decrease'' is in the pointwise sense (i.e., if $S_1$ increases to $S_2$, we have $S_2(\omega) \ge S_1(\omega)$ for all $\omega \in \Omega$, except for a negligible set). Recall that $S + \pi(I)$ is the total available reserve for settling claims at time 1, thus the larger the $S$, the lower the default possibility. 
As such, when $S$ increases, we expect the insurer to seek more coverage for its risk exposure, which is confirmed by the decrease of the deductible $d^*$ and the increase of the policy limit $U^*$ (recall $U^* = a^* + d^* + S$); the increase of coverage naturally means a higher premium paid to the reinsurer.

Next, we study how the insurer's initial wealth $w$ affects the optimal contract. As implied by the very definition of DARA risk preferences, when $w$ increases, the insurer's risk aversion decreases, and thus its demand for reinsurance coverage reduces, leading to a higher deductible and a lower premium, both of which are consistent with the results in \citet{mossin1968aspects} and \citet{schlesinger1981optimal}. 
In addition, the impact of $w$ on $d^*$ is more significant than that on $a^*$, which is why the policy limit $U^* = a^* + d^* + S$ changes in the same direction as $d^*$ when $w$ changes.

Last, to investigate the impact of risk aversion on reinsurance decision, consider two insurers with different utility functions, $u_1$ and $u_2$, and call them Insurer 1 and Insurer 2, respectively.
We say that Insurer 1 has a higher (Arrow-Pratt) degree of risk aversion than Insurer 2 if \[ \mathbb{A}_{u_1}(x) = - \frac{u_1''(x)}{u_1'(x)} \ge \mathbb{A}_{u_2}(x) = - \frac{u_2''(x)}{u_2'(x)},\] 
and assume, without loss of generality, that this is the case.
Because  risk aversion is a key driver for reinsurance, we anticipate that Insurer 1 chooses a contract with a lower deductible and is willing to spend more on reinsurance than Insurer 2 does, as confirmed by Item 3 of Proposition \ref{prop:a_p}.

We close this section with a numerical example, and it serves two purposes. First, we demonstrate that finding the optimal contract $I^*$ is an easy task once the model parameters are given. Second, we use the example to visualize the analytical results obtained in Proposition \ref{prop:a_p}.

\begin{example}\label{exa2}
	Consider an insurer with a power utility function $u(x) = \frac{1}{1 - \gamma} \, x^{1 - \gamma}$, in which $\gamma > 0$ and $\gamma \neq 1$. Assume that the insurer's loss distribution has a probability mass of 10\% at 0 and 10\% at 10 (i.e., $\Pb(X = 0) = \Pb(X=10) = 10\%$) and a continuous density function over $(0,10)$, which takes the form of a truncated Pareto density, 
	\begin{align}
		f_X(x) = \frac{96}{35}\frac{10^3}{(x+10)^4} \,\mathbf{1}_{\{x\in(0,10)\}} .
	\end{align}
    We also assume that there is no background risk, then $S$ equals the reinsurer's initial reserve $r >0$.
\end{example}

First, we fix $\gamma = 1/2$, $w=15$, and $S=r=5$.  The threshold value of the premium loading $\eta$ in \eqref{eq:I_star} is calculated by $\frac{u'(w-M)}{\Eb[u'(w-X)]} - 1 = 0.4669$. We plot the optimal premium $a^*$ and the optimal deductible $d^*=d(a^*)$ as a function of $\eta$, the premium loading, over $[0, 0.5]$ in Figure \ref{fig:eta}. When $\eta \ge  0.4669$, Figure \ref{fig:eta} shows that $d^* = M$, or equivalently $I^* \equiv 0$, which confirms Case 3 in \eqref{eq:I_star}. When $0 < \eta < 0.4669$, we numerically solve $a^*$ from \eqref{eq:a_star}, and the left panel of Figure \ref{fig:eta} shows that $a^*$ is a strictly decreasing function of $\eta$ in this range. Because reinsurance contracts become more expensive when $\eta$ increases, the insurer reacts to the price increase by choosing a higher deductible, which is confirmed by the right panel of Figure \ref{fig:eta}.

\begin{figure}[htbp]
	\centering
	\vspace{-2ex}
	\begin{minipage}[b]{0.49\linewidth}
		\centering
		\includegraphics[width=\linewidth, height = 5cm]{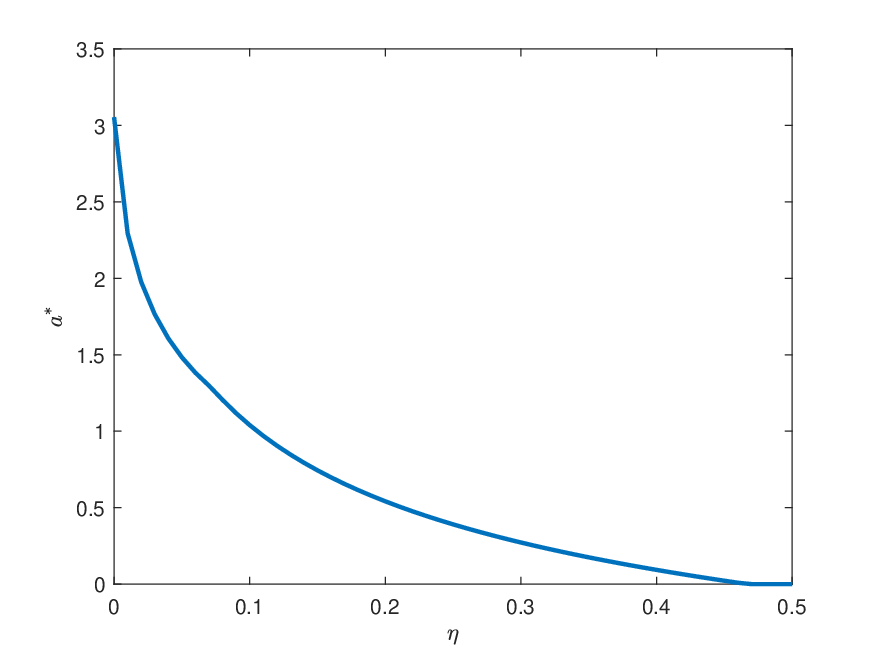}
	\end{minipage}
	\begin{minipage}[b]{0.49\linewidth}
		\centering
		\includegraphics[width=\linewidth, height = 5cm]{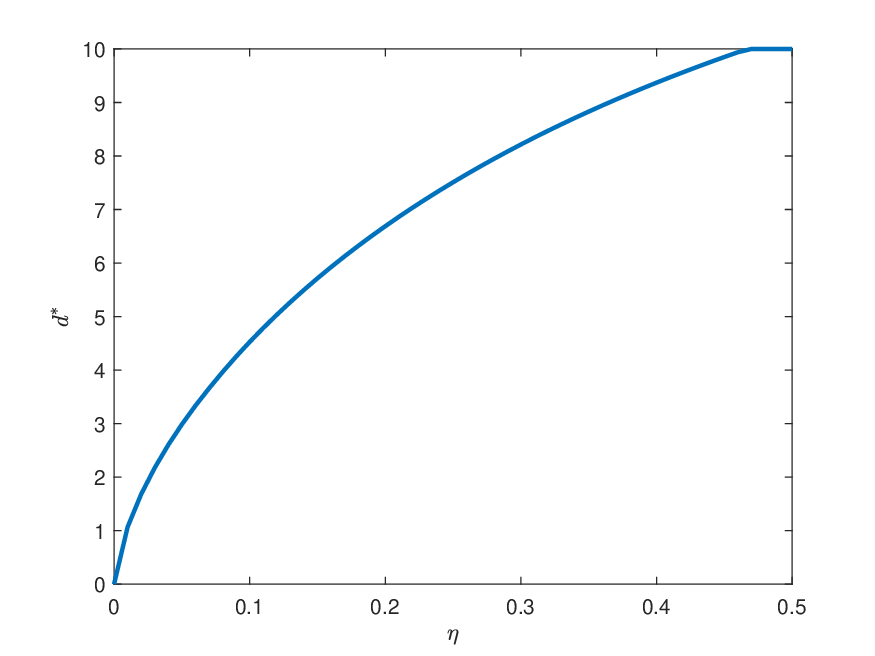}
	\end{minipage}
	\vspace{-1ex}
	\caption{Optimal premium $a^*$ (left) and deductible $d^*$ (right) with respect to  $\eta$}
	\label{fig:eta}
    \vspace{-2ex}
\end{figure}
Next, we investigate the impact of $\gamma$, the insurer's risk aversion, on the optimal contract. In this study, we fix the premium loading $\eta = 0.2$ and the insurer's initial wealth $w = 15$, but consider two different levels for the reinsurer's reserve $r$, $r=2$ (low reserve case) and $r=8$ (high reserve case). With those parameters, we plot the optimal premium $a^*$ and the optimal deductible $d^*=d(a^*)$ as a function of $\gamma$ over $(0.2, 3)$ ($1-\gamma$ over $(-2,0.8)$) in Figure \ref{fig:gamma}. For both the low and high reserve cases, we observe the same sensitivity results: when the risk aversion $\gamma$ increases,the optimal premium level $a^*$ increases, but the optimal deductible $d^*$ decreases, both of which confirm the results in Item 3 of Proposition \ref{prop:a_p}. 
As shown in the right panel of Figure \ref{fig:gamma}, the two optimal deductible $d^*$ curves are really close because $d^*$ is somehow insensitive to $r$, but we still observe that $d^*|_{r = 2} > d^*|_{r = 8}$, which confirms Item 1 of Proposition \ref{prop:a_p}.
The left panel of Figure \ref{fig:gamma} demonstrates the significant impact of endogenous default on the optimal contract: in the high reserve case of $r=8$, the reinsurer essentially never defaults, thus $a^*$ under $r=8$ (dotted line) coincides with the optimal premium level in a model without default risk. However, in the low reserve case of $r = 2$, the reinsurer's default risk becomes a major concern to the insurer; in response, the insurer chooses a much lower policy limit, resulting in a sharp decrease in the optimal premium $a^*$, compared to the high reserve case.

\begin{figure}[htbp]
	\centering
	\vspace{-2ex}
	\begin{minipage}[b]{0.49\linewidth}
		\centering
		\includegraphics[width=\linewidth, height = 5cm]{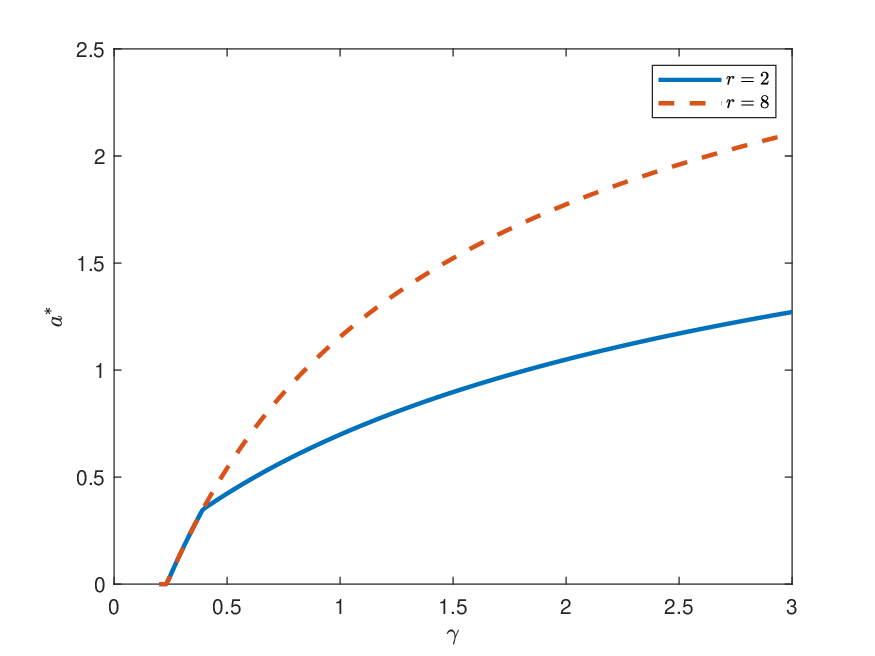}
	\end{minipage}
	\begin{minipage}[b]{0.49\linewidth}
		\centering
		\includegraphics[width=\linewidth, height = 5cm]{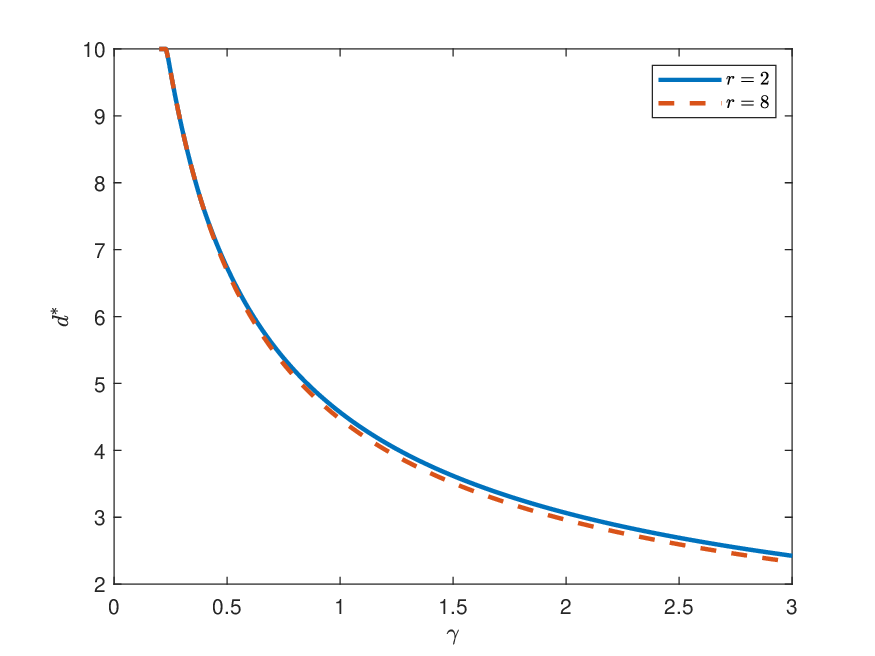}
	\end{minipage}
	\vspace{-1ex}
	\caption{Optimal premium $a^*$ (left) and deductible $d^*$ (right) with respect to $\gamma$}
	\label{fig:gamma}
    \vspace{-2ex}
\end{figure}

\section{Optimal Loss-Dependent Indemnities}\label{sec:ext}

In this section, we consider loss-dependent indemnities in the form of $I = I(x)$; namely,  the insurer is only allowed to choose contracts that depend on the loss size $x$ itself, but \emph{not} on  the reinsurer's background risk level $s$. The method that we use to solve Problem \ref{prob:no} in the previous section no longer applies here;
in particular, we do not have a proper ``upper'' bound on the default-free indemnities for the problem in this section as $\bar{I}$ in \eqref{eq:I_bar} for Problem \ref{prob:no}.
We will develop a different approach to obtain the optimal loss-dependent reinsurance contract.

Recall from Problem \ref{prob:no} that admissible indemnities there only need to satisfy the ``principle of indemnity'' condition, and it yields the most general (largest) set $\Ac$ in \eqref{eq:Ac} (see Remark \ref{rem:ind}).
But in this section, we further impose the IC condition on admissible indemnities,
and this leads to the following admissible set $\widetilde{\Ac} = \AcIC$:
\begin{align}
	\label{eq:Acl}
	\AcIC :=\{I: [0,M] \mapsto \Rb_+ \, | \, I(0) = 0 \text{ and } 0 \le I(x) - I(x') \le x - x' \, \text{for all}\, x \ge x' \ge 0 \}.
\end{align}
Note that the IC condition helps rule out certain \emph{ex post} moral hazard (see, e.g., \citet{huberman1983optimal}, \citet{xu2019optimal}, and \citet{boonen2022mean}) and is also imposed in several related works (see \citet{asimit2013optimal} and \citet{cai2014optimal}). 
In the proofs, we use the IC condition to handle contractual indemnity in a piecewise way and to get the monotonicity of \( x \) and \( x - I(x) \), as well as to establish an up-crossing property which we use to prove Theorem \ref{thm:N}.

For every $I \in \AcIC$, denote its actual indemnity by $\Ib_S$. 
By \eqref{eq:Ib}, we have 
\begin{align}
	\Ib_S(I)
	&=  I(X) \cdot \mathbf{1}_{\{I(X) \leq (S + \pi(I))^+\}} + \tau (S + \pi(I))^+ \cdot \mathbf{1}_{\{I(X) > ( S + \pi(I))^+\}}, 
	\label{eq:Ib_S}
\end{align}
in which $\tau \in [0,1]$ is the recovery rate.
As such, for a chosen contract $I \in \AcIC$, the insurer's terminal wealth, $W_S(I)$, equals 
\begin{align}\label{eq:Ws}
	W_S(I) = w - X + \Ib_S(I) - \pi(I).
\end{align}

We now formulate the second concrete version of Problem \ref{prob:main} as follows.

\begin{problem}	\label{prob:s}
	The insurer seeks an optimal loss-dependent reinsurance contract $I_S^*:= I_S^*(x) \in \AcIC$  to maximize the expected utility of its terminal wealth under the reinsurer's endogenous default risk and background risk, i.e., $$I_S^* = \argsup_{I \in \AcIC} \, \Eb[u(W_S(I))],$$
in which the admissible set $\AcIC$ is defined in \eqref{eq:Acl}, and $W_S(I)$ is given by \eqref{eq:Ws}.  
\end{problem}

Even with the additional IC condition imposed in \eqref{eq:Acl}, solving Problem \ref{prob:s} for a general $S$ is still unlikely. In comparison, Theorem \ref{thm:step_one} solves Problem \ref{prob:no} (with $I = I(x, s)$) over $\Ac_a$ without imposing any assumption on the distribution of $S$, and Theorem \ref{thm:1_ass} finds the global solution to Problem \ref{prob:no}  over $\Ac$ with mild assumptions (Assumption \ref{ass}). In order to obtain an analytical solution to Problem \ref{prob:s}, we impose the following assumptions on $S$.

\begin{assumption}\label{ass2}
	The reinsurer's background risk \(S\) is independent of the insurer's loss \(X\) and follows an \(N\)-point discrete distribution 
	\begin{align}
		\Pb(S = s_i) = p_i >0, \quad i = 1, \cdots, N,
	\end{align}
	in which $N$ is an arbitrary positive integer, $s_1 < s_2 < \cdots < s_N$ and $\sum_{i=1}^N p_i = 1$. Let the set of possible values of \( S \) be denoted by \( \Sc := \{s_1, s_2, \cdots, s_N\} \).
\end{assumption}

Recall that $S$ can be interpreted as random shocks (e.g., from the financial markets) to the reinsurer's reserve,  thus the independence assumption is overall reasonable (see Section III in \cite{doherty1983optimal} for the same assumption and Section 7.4.2 in \cite{schlesinger2013theory} for further discussion and related reference). The assumption that $S$ is a discrete random variable allows us to solve for the optimal contract conditioning on each realization $S = s_i$ (see Remark \ref{rem:proof}), and this assumption is not restrictive because $N$ can be arbitrarily large and there are no additional conditions on the realized values $s_1, \ldots, s_N$ or their  probabilities $p_1, \ldots, p_N$.

Under Assumption \ref{ass2}, we rewrite the insurer's objective as 
 \begin{align*}
 	\mathbb{E}[u(W_S(I))] 
 	&= \sum_{i=1}^N  \, p_i \, \mathbb{E} \left[u \left(w - X + \Ib_{s_i}(I) - \pi(I) \right) \right],
 \end{align*}
where $\Ib_{s_i}$ is defined by \eqref{eq:Ib_S} with $S = s_i \in \Sc$ for all $i=1, \cdots, N$. An application of Proposition \ref{prop:extre} directly yields the following result, and thus we omit its proof.

\begin{corollary}
	\label{cor:negative}
	Let Assumption \ref{ass2} hold and recall $s_N = \max \, S$. If $s_N \le 0$, then the optimal contract to Problem \ref{prob:s} is no reinsurance, $I^*_S \equiv 0$. 
\end{corollary}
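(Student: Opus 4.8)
The plan is to reduce Corollary \ref{cor:negative} to the already-established Proposition \ref{prop:extre} by showing that, under Assumption \ref{ass2} with $s_N \le 0$, the seller's background risk $S$ satisfies $S \le 0$ almost surely, so the hypothesis of Proposition \ref{prop:extre} is met. Indeed, since $S$ takes values in $\Sc = \{s_1, \dots, s_N\}$ with $s_1 < \cdots < s_N$, the condition $s_N = \max S \le 0$ forces $s_i \le 0$ for every $i$, hence $\Pb(S \le 0) = 1$. At that point, Proposition \ref{prop:extre} applies verbatim: for any admissible $I$ (here $I \in \AcIC \subseteq \Ac$, so the principle of indemnity still holds), the chain of inequalities $W_S(I) \le w - X - \pi(I) + (S + \pi(I))^+ \le w - X = W_S(0)$ holds pointwise, using $S + \pi(I) \le \pi(I)$ and $I \ge 0$.

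The key steps, in order, are: (1) observe that $s_N \le 0$ together with the ordering of the atoms gives $S \le 0$ $\Pb$-a.s.; (2) note that $\AcIC \subseteq \Ac$ (every incentive-compatible indemnity satisfies $0 \le I(x) \le x$, since $I(0)=0$ and the 1-Lipschitz property gives $I(x) = I(x) - I(0) \le x$, and monotonicity gives $I(x) \ge 0$), so the admissible contracts in Problem \ref{prob:s} form a subset of those to which Proposition \ref{prop:extre} applies; (3) invoke Proposition \ref{prop:extre} to conclude $\Eb[u(W_S(I))] \le \Eb[u(W_S(0))]$ for all $I \in \AcIC$, with $W_S(0) = w - X$; (4) conclude $I_S^* \equiv 0$ is optimal. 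One small point worth checking is that the actual-indemnity formula \eqref{eq:Ib_S} used in Problem \ref{prob:s} is consistent with the one used in Proposition \ref{prop:extre}, but this is immediate since \eqref{eq:Ib_S} is just \eqref{eq:Ib} specialized to $I = I(X)$, and the bound $\Ib_S(X;I) \le (S+\pi(I))^+$ holds regardless of whether the indemnity depends on $S$.

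There is essentially no obstacle here; the result is a direct corollary, which is why the authors state it without proof. The only thing to be mildly careful about is not to over-claim uniqueness: the statement asserts that no insurance is \emph{an} optimal contract, not the unique one, so the weak inequality from Proposition \ref{prop:extre} suffices and no strict-concavity or uniqueness argument (as in Proposition \ref{prop:unique}) is needed. If one wanted to be thorough one could also remark that the conclusion does not require the independence or discreteness parts of Assumption \ref{ass2} at all — only the finiteness of the support enters, and even that only through the statement $s_N = \max S$ being well-defined — but for the corollary as stated, invoking Assumption \ref{ass2} and reducing to Proposition \ref{prop:extre} is the cleanest route.
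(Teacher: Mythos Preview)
Your proposal is correct and follows exactly the route the paper intends: the authors write that ``an application of Proposition \ref{prop:extre} directly yields the following result,'' and you carry this out by observing that $s_N \le 0$ forces $S \le 0$ a.s., that $\AcIC$ is contained in the class of indemnities to which the pointwise bound in the proof of Proposition \ref{prop:extre} applies, and then invoking that proposition. Your additional remarks on uniqueness and on which parts of Assumption \ref{ass2} are actually needed are accurate but not required for the corollary.
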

Thanks to Corollary \ref{cor:negative}, we focus on Problem \ref{prob:s} when $s_N > 0$ in the rest of the section.  The next lemma identifies a upper bound on the optimal premium.

\begin{lemma}\label{lem:a_barN}
	Suppose $s_N>0$. There exists a unique solution $\bar{a}_N$ over $[0, \pf]$ to the equation 
	\begin{align}
		\label{eq:a_barN}
		g_N(a):= (1 + \eta) \, \mathbb{E} \left[ X - (X - (s_N + a))^+ \right] - a = 0.
	\end{align}
	Moreover, $g_N(a) > 0$ for all $a \in [0, \bar{a}_N)$, and $g_N(a) < 0$ for all $a \in (\bar{a}_N, \pf]$.
\end{lemma}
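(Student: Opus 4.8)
The plan is to follow the same route as the (omitted) proof of Lemma \ref{lem:a_bar}, which simplifies substantially here because the random reserve $S$ is replaced by the constant $s_N>0$. First I would reduce: since $s_N+a>0$ for every $a\ge 0$, we have $X-(X-(s_N+a))^+=X\wedge(s_N+a)$, so $g_N(a)=(1+\eta)\,\Eb[X\wedge(s_N+a)]-a$; as $0\le X\wedge(s_N+a)\le M$, this is finite and, either by dominated convergence or by the Lipschitz bound $|\Eb[X\wedge(s_N+a)]-\Eb[X\wedge(s_N+a')]|\le|a-a'|$, continuous in $a$ on $[0,\pf]$. Then I would evaluate the endpoints: $g_N(0)=(1+\eta)\,\Eb[X\wedge s_N]>0$ because $X\wedge s_N\ge 0$ with $X\wedge s_N>0$ on the positive-probability event $\{X>0\}$ (recall $\mathrm{ess \, sup}\,X=M>0$), and, using $\pf=(1+\eta)\Eb[X]$,
\[
g_N(\pf)=(1+\eta)\,\Eb\big[(X\wedge(s_N+\pf))-X\big]\le 0
\]
since $X\wedge c\le X$ pointwise. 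The intermediate value theorem then yields at least one zero of $g_N$ in $[0,\pf]$.

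The key step is to observe that $g_N$ is \emph{concave} on $[0,\pf]$: writing $\Eb[X\wedge c]=\int_0^c\Pb(X>t)\,\dd t$ exhibits $c\mapsto\Eb[X\wedge c]$ as the integral of the nonincreasing survival function, hence concave; composing with the affine shift $a\mapsto s_N+a$ and subtracting the linear term $a$ both preserve concavity. I would stress that this needs no smoothness of the law of $X$, so the naive identity $g_N'(a)=(1+\eta)\Pb(X>s_N+a)-1$ (valid only when $X$ has a density) is deliberately avoided.

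Finally I would combine concavity with $g_N(0)>0$ to get uniqueness and the sign pattern. If $g_N$ had two zeros $a_1<a_2$ in $[0,\pf]$, then either $a_1=0$ (contradicting $g_N(0)>0$) or $a_1$ lies strictly between $0$ and $a_2$, whence concavity forces $0=g_N(a_1)\ge\mu\,g_N(0)+(1-\mu)\,g_N(a_2)=\mu\,g_N(0)>0$ for the weight $\mu=1-a_1/a_2\in(0,1)$ — again a contradiction; hence the zero, denoted $\bar{a}_N$, is unique, and $\bar{a}_N>0$. For $a\in[0,\bar{a}_N)$, $g_N(a)$ is nonzero by uniqueness and cannot be negative (else the intermediate value theorem on $[0,a]$ would produce a second zero), so $g_N(a)>0$. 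For $a\in(\bar{a}_N,\pf]$, $g_N(a)$ is again nonzero, and if it were positive then $\bar{a}_N\in(0,a)$ and concavity would give $0=g_N(\bar{a}_N)\ge\mu\,g_N(0)+(1-\mu)\,g_N(a)>0$, impossible; hence $g_N(a)<0$. This is exactly the asserted sign pattern.

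I expect the only real care to be needed in the concavity step (reading it off the layer/stop-loss representation of $\Eb[X\wedge c]$ rather than from a density), together with the harmless degenerate case $g_N(\pf)=0$, i.e.\ $\mathrm{ess \, sup}\,X\le s_N+\pf$, in which $\bar{a}_N=\pf$ and the second half of the sign statement is vacuous; the argument above already covers it.
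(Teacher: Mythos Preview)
Your proof is correct and follows essentially the same route as the paper's (omitted) argument, which refers back to the proof of Lemma~\ref{lem:a_bar}: check continuity and the boundary signs $g_N(0)>0$, $g_N(\pf)\le 0$, establish concavity of $g_N$, and combine concavity with $g_N(0)>0$ to get uniqueness and the sign pattern. The only cosmetic difference is that the paper derives concavity from the subadditivity $(a+b)^+\le a^++b^+$, whereas you read it off the layer representation $\Eb[X\wedge c]=\int_0^c\Pb(X>t)\,\dd t$; both are standard and equally valid here.
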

\begin{proof}
	See Appendix \ref{prooflm4.2}.
\end{proof}

\begin{theorem}\label{thm:N}
	Let Assumption \ref{ass2} hold. The optimal reinsurance contract \(I_S^*\) to Problem \ref{prob:s} over \(\AcIC\) is of the following parametric form:
	\begin{align}\label{eq:opI}
		I_S^*(x) = \sum_{i=1}^{N} \left[ \big(x - l_i - (  \pi(I^*_S) + s_{i-1})^+ \big)^+ - \big(x - l_i - (  \pi(I^*_S) + s_i)^+ \big)^+\right],
	\end{align}
	in which \(s_0 := -r - \pi(I^*_S)\), and the constants \(\{l_i\}_{i=1, \cdots, N}\) are free parameters satisfying the constraints \(0 \leq l_i \leq l_{i+1} \leq M\). Moreover, if $s_N > 0$, then $\pi(I_S^*)\in[0,\bar{a}_N]$.
\end{theorem}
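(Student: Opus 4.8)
The plan is to reformulate Problem \ref{prob:s} under Assumption \ref{ass2} as a two-stage optimization: an outer optimization over the premium level $a = \pi(I)$ and an inner optimization over all $I \in \AcIC$ with $\pi(I) = a$, mirroring the structure used for Problem \ref{prob:no} but working with the IC-constrained class. For a fixed $a$, the seller's reserve takes only the values $R_i := (s_i + a)^+$, $i = 1, \dots, N$, since $S$ is discrete and independent of $X$; on the event $\{S = s_i\}$ the actual indemnity is $\Ib_{s_i}(x;I) = \min\{I(x), R_i\}$ when $I$ is IC (no default) and degrades to $\tau R_i$ once $I(x) > R_i$. The key observation is that, because the IC condition forces $I$ to be nondecreasing and $1$-Lipschitz, the ``reserve ceilings'' $R_1 \le R_2 \le \dots \le R_N$ partition the loss axis into layers: on the layer where the buyer's chosen $I$ lies between consecutive ceilings $R_{i-1}$ and $R_i$, only the states $s_i, s_{i+1}, \dots, s_N$ still pay in full. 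I would fix $a$ and show, by a pointwise exchange / ``ironing'' argument on $\mathcal{J}_a(I) := \Eb[u(W_S(I))]$ over $\{I \in \AcIC : \pi(I) = a\}$, that the optimal $I$ must be flat (slope $0$) or full (slope $1$) almost everywhere, and that any segment with slope $1$ begins exactly at a point where $I$ equals one of the ceilings $(\pi(I)+s_{i-1})^+$. This yields the multi-layer form \eqref{eq:opI}: between the $(i-1)$-st and $i$-th ceiling the contract is a deductible-$l_i$ layer capped at the next ceiling, with $s_0 := -r - \pi(I_S^*)$ encoding that the first layer starts from a genuine deductible $l_1 \ge 0$. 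This is essentially Theorem \ref{thm:N}'s first assertion, which I assume is already in hand; what remains for the \emph{final} statement is the range of $\pi(I_S^*)$.

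For the claim $\pi(I_S^*) \in [0, \bar a_N]$ when $s_N > 0$: the lower bound is immediate since premiums of admissible contracts are nonnegative. For the upper bound, I would argue by contradiction, assuming an optimal contract $I^*_S$ with $a^* := \pi(I^*_S) > \bar a_N$, and construct a strictly dominating feasible contract. The natural candidate is the ``maximal default-free'' contract at reserve ceiling $R_N = s_N + a^*$, namely $\bar I_N(x) := x - (x - (s_N + a^*))^+ \wedge$ (whatever truncation makes it IC) — more precisely $\bar I_N(x) = \min\{x, s_N + a^*\}$, which is IC, default-free in every state $s_i$ since $\bar I_N(x) \le s_N + a^* $ but we also need $\bar I_N(x) \le s_i + a^*$ fails for $i < N$; so default can still occur in lower states. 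The cleaner route: use Lemma \ref{lem:a_barN}. Since $g_N(a^*) < 0$ for $a^* > \bar a_N$, one has $(1+\eta)\Eb[X - (X - (s_N + a^*))^+] < a^*$, i.e. the premium of the full-coverage-up-to-$R_N$ contract is strictly less than $a^*$. This is the engine of the contradiction: the buyer is ``overpaying'' relative to the most they could ever recover. I would then show that replacing $I^*_S$ by a contract with a strictly smaller premium but the same (or larger) actual indemnity in every state and every loss realization strictly increases $\Eb[u(W_S)]$, because $u$ is strictly increasing and the wealth $W_S = w - X + \Ib_S - \pi(I)$ rises pointwise.

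To make that replacement precise, note that for \emph{any} $I \in \AcIC$ the actual indemnity satisfies $\Ib_{s_i}(x;I) \le (s_i + a)^+ \le (s_N + a)^+$ pointwise (the recovery $\tau R_i \le R_i$, and the full-payment branch is bounded by $R_i$), hence $\Ib_S(x;I) \le \min\{x, s_N + a\}$. So the contract $\hat I(x) := \min\{x, s_N + a^*\}$ delivers in every state an actual indemnity at least as large as $\min\{\Ib_{s_i}(x; I^*_S), s_N + a^*\} = \Ib_{s_i}(x;I^*_S)$... but $\hat I$ itself may default in low states. Instead I would iterate: let $a_1 := \pi(\hat I) \le (1+\eta)\Eb[\min\{X, s_N + a^*\}] < a^*$ by $g_N(a^*)<0$; the contract $\hat I$ with its \emph{actual} payout has wealth $w - X + \Ib_S(\hat I) - a_1$, and since $\Ib_S(\hat I) \ge \Ib_S(I^*_S) - (a^* - a_1)$ is \emph{not} automatic, the honest argument is to first pass to the default-free truncation of $I^*_S$ at level $(s_N + a^*)$ within $\AcIC$, observe its premium is $\le g_N$-value $< a^*$ — wait, this needs care. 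The right statement: among contracts with $\pi(I) = a > \bar a_N$, every actual-indemnity profile is pointwise $\le \min\{X, s_N + a\}$, whose \emph{own} premium is $< a$; so decreasing the premium to that level while keeping indemnity fixed is feasible and strictly better, contradicting optimality of $a^* > \bar a_N$. Formalizing the feasibility of this premium-reduction-with-indemnity-preservation step — i.e. exhibiting a genuine $I \in \AcIC$ with the reduced premium whose actual indemnity dominates that of $I^*_S$ in every state and every $x$ — is the main obstacle, and I expect it to require the layer decomposition from the first part of Theorem \ref{thm:N} together with Lemma \ref{lem:a_barN}'s sign information on $g_N$.
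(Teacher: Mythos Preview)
Your sketch of the first assertion via ``ironing'' is reasonable in spirit but quite different from the paper's construction, and you essentially punt on it. The paper does not argue variationally on slopes; instead, for a fixed $I \in \AcIC$ with premium $a$, it locates the default thresholds $x_i = \inf\{x : I(x) > (a+s_i)^+\}$, partitions $[0,M]$ into the intervals $A_i = (x_{i-1},x_i]$, and on each $A_i$ constructs a single-layer piece $I_1|_{A_i}$ (deductible $l_i^{(1)}$, cap $(a+s_i)^+$) with the \emph{same mean} as $I|_{A_i}$. The concavity of $u$ and an up-crossing comparison on each $A_i$ then give $\mathcal{J}(I_1)\ge \mathcal{J}(I)$ with $\pi(I_1)=\pi(I)$. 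A second construction drops the top layer and rescales the premium to reach the $N$-parameter form \eqref{eq:opI}. The mean-matching on each default interval is the key device; a bare slope-$0$/slope-$1$ ironing argument does not by itself tell you that the slope-$1$ segments must terminate exactly at the reserve ceilings $(a+s_i)^+$.

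On the premium bound $\pi(I_S^*)\le \bar a_N$, your contradiction route is the wrong angle and you correctly sense it does not close. You are trying to dominate an arbitrary optimizer with $a^*>\bar a_N$ by a cheaper contract, and you run into the fact that your candidate $\hat I(x)=\min\{x,s_N+a^*\}$ still defaults in low-$S$ states, so you cannot compare actual indemnities cleanly. The paper's argument is a one-liner \emph{after} the parametric form is established: any $I_S^*$ of the form \eqref{eq:opI} is pointwise decreasing in each $l_i$, so setting all $l_i=0$ gives the telescoping upper bound
\[
I_S^*(x)\ \le\ x - \big(x-(\pi(I_S^*)+s_N)\big)^+.
\]
Taking expectations and multiplying by $(1+\eta)$ yields $g_N(\pi(I_S^*))\ge 0$, whence $\pi(I_S^*)\le \bar a_N$ by Lemma~\ref{lem:a_barN}. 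In short: use the structure you just proved, rather than re-running a domination argument from scratch.
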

\begin{proof} 
  See Appendix \ref{thm4.1}.   
\end{proof} 

\begin{remark}
	\label{rem:proof}
	In this technical remark, we explain the essential idea that helps us obtain Theorem \ref{thm:N}. For every $I \in \AcIC$, there exists a critical point $x_i$ corresponding to each realization of the background risk $S = s_i$, $i=1,\cdots,N$, so that the reinsurer defaults if the loss $X$ exceeds this critical value $x_i$, given $S= s_i$. These $N$ critical points, along with $x_0:=0$ and $x_{N+1}:= M$, partition the loss domain $[0, M]$ into $N + 1$ sub-intervals, $A_1 = [0,x_1]$ and $A_i = (x_{i-1}, x_i]$, $i=2, \cdots, N+1$. We prove Theorem \ref{thm:N} in the following two steps:
	\begin{enumerate}
		\item For every $I \in \AcIC$, we construct a new indemnity $I_1$ satisfying two key conditions:\\
		(i)  $\Eb \left[I_1(X) \, \mathbf{1}_{X \in A_i} \right] = \Eb \left[I(X) \, \mathbf{1}_{X \in A_i} \right]$, for all $i = 1, \cdots, N+1$, and each equation determines one parameter in the proposed form of $I_1$; combining all $N+1$ equations yields $\Eb \left[I(X)\right] = \Eb \left[I_1(X)  \right]$.
		\\ (ii) $I_1$ ``up-crosses'' $I$ in each $A_i$, so that $\Eb [u(W_S(I_1)) \, \mathbf{1}_{X \in A_i}] \ge \Eb [u(W_S(I)) \, \mathbf{1}_{X \in A_i}]$. 
		\\
		As such,  we have $\pi(I_1) = \pi(I)$ and $\Eb [u(W_S(I_1)) ] \ge  \Eb [u(W_S(I))]$. 
		
		\item For any $x \in A_{N+1}$, the indemnity $I_1(x)$ necessarily results in default, which is intuitively suboptimal. Therefore, we construct an alternative indemnity $I_2$ such that $A_{N+1} = \emptyset$. Specifically, based on $I_1$ from Step 1, we construct an indemnity $I_2$, which takes the form in \eqref{eq:opI} and involves $N$ parameters (recall that $I_1$ contains $N+1$ parameters). We show that $\Ib_S(I_1) - \Ib_S(I_2) \le \pi(I_1) - \pi(I_2)$ and, by recalling the definition of $W_S$ in \eqref{eq:Ws}, $W_S(I_2) \ge W_S(I_1)$, verifying the optimality of $I_S^*$ in \eqref{eq:opI} to Problem \ref{prob:s}.   
	\end{enumerate}
\end{remark}

\begin{remark}
In this remark, we consider the case where \(S\) is replaced by \(\mathrm{VaR}_{\alpha}(I(X))\), which depends on the indemnity \(I\), for a given threshold \(\alpha \in (0,1)\). Using a similar constructive approach, we obtain the optimal reinsurance contract $I^*$ over $\mathcal{A}_{\textup{IC}}$ in the following parametric form:
    \begin{align*}
        I^*(x)=(x-d_1)^+-(x-\mathrm{VaR}_{\alpha}(X))^++(x-d_2)^+-(x-(d_2+\pi(I^*)))^+,
    \end{align*}
    in which the two parameters $d_1$ and $d_2$ satisfy $0\le d_1\le \mathrm{VaR}_{\alpha}(X)\le d_2$. The detailed proof is provided in Online Companion. This result is consistent with \citet{cai2014optimal}, but our constructive approach is more concise and requires weaker assumptions. Furthermore, by varying premium levels, we show that the optimal contract \(I^*\) above is default-free and features a policy limit of \(I^*(\mathrm{VaR}_{\alpha}(X)) + \pi(I^*)\), a finding that aligns with Theorem \ref{thm:step_one}.
\end{remark}

Although both Problems \ref{prob:no} and \ref{prob:s} take into account the reinsurer's background risk, they are largely different from the mathematical viewpoint, as seen from Remark \ref{rem:proof} and the proofs (in the appendix). Upon examining their (local) solutions $I_a^*(x,s)$ in Theorem \ref{thm:step_one} and $I_S^*(x)$ in Theorem \ref{thm:N}, we notice that $I_a^*$ in \eqref{eq:I_op}  is a single deductible reinsurance contract with policy limit, but $I_S^*$ in \eqref{eq:opI} consists of $N$ such contracts. We plot the optimal contract $I_S^*$ when $N = 2$ in Figure \ref{fig:s_op}, and it shows that $I_S^*$ is a \emph{multi-layer} reinsurance contract (see Example 4.2 in \citet{jin2024optimal} for a similar result), and $(I_S^*)'$ is either 0 or 1. 
More importantly, $I_a^*(X, S) \le (  a + S)^+ = R$, implying that  $I_a^*$ is a default-free contract. 
However, an endogenous default from the reinsurer is possible if the insurer chooses contract $I_S^*$. To see this, consider the case of $N = 2$ in Figure \ref{fig:s_op} and assume $S = s_1$; the cap on the reinsurer's reserve is $R = (  \pi(I_S^*) + s_1)^+$, but for all $X > l_2 +(  \pi(I_S^*) + s_1)^+$, we easily see from Figure \ref{fig:s_op} that $I_S^*(X) > R$, under which an endogenous default occurs. Also, we obtain  Theorem \ref{thm:step_one} without imposing the IC constraint on indemnities or any assumption on the background risk $S$; in contrast, Theorem \ref{thm:N} requires the IC condition (see $\AcIC$ in \eqref{eq:Acl}) and Assumption \ref{ass2} on $S$. 

\begin{figure}[h]
	\centering
	\vspace{-1ex}
	\begin{tikzpicture}[scale = 0.55]
		\draw[->] (0,0) -- (18,0) node[above, xshift=-5pt] {$x$};
		\draw[->] (0,0) -- (0,10) node[below right] {$I_S^*$};
		
		\foreach \x in {1,5,10,15} 
		\draw (\x,0) node[below] {\ifnum\x=1 $l_1$ \else \ifnum\x=5 \small{$l_1\!+\!(\pi(I_S^*)\!+\!s_1)^+$} \else \ifnum \x=10 \small{$l_2\!+\!(\pi(I_S^*)\!+\!s_1)^+$} \else {\small$l_2\!+\!(\pi(I_S^*)\!+\!s_2)^+$} \fi \fi \fi};
		\foreach \y in {4,9} 
		\draw (0,\y) node[left] {\ifnum\y=4 {\small$(\pi(I_S^*)\!+\!s_1)^+$} \else {\small$(\pi(I_S^*)\!+\!s_2)^+$} \fi};
		
		\draw[domain=0:1, thick] plot (\x,0);
		\draw[domain=1:5, thick] plot (\x,\x-1);
		\draw[domain=5:10, thick] plot (\x, 4);
		\draw[domain=10:15, thick] plot (\x, \x-6);
		\draw[domain=15:17, thick] plot (\x, 9);
		
		\draw[dotted] (5,4) -- (5,0);
		\draw[dotted] (10,4) -- (10,0);
		\draw[dotted] (10,4) -- (0,4);
		\draw[dotted] (15,9) -- (15,0);
		\draw[dotted] (15,9) -- (0,9);
		
		\fill (1, 0) circle (2pt);
		\fill (5, 4) circle (2pt);
		\fill (10, 4) circle (2pt);
		\fill (15, 9) circle (2pt);
		
	\end{tikzpicture}
	\caption{Optimal contract $I^*_S$ in \eqref{eq:opI} when $N = 2$.}
	\label{fig:s_op}
    \vspace{-2ex}
\end{figure}
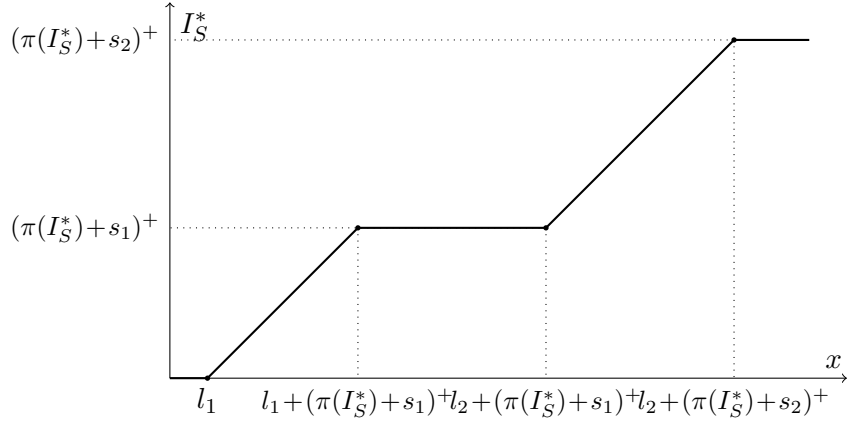

Suppose that $S= r \in \Rb$ (i.e., there is no background risk), and note that this case is covered  under both Theorems \ref{thm:step_one} and \ref{thm:N}. For $N = 1$ in Assumption \ref{ass2}, under which $I_S^*$ becomes a single deductible reinsurance contract with policy limit. In addition, we observe that $l_1$ is the deductible amount of contract $I_S^*$,  and the policy limit is $l_1 +  r + \pi(I_S^*)$. Correspondingly, in Theorem \ref{thm:step_one}, when \( S = r \), \( d(a) \) is the deductible amount, and the policy limit is \( d(a) + r + a \), as shown in \eqref{eq:I_op}. Note that $\pi(I_S^*) = a$ is the premium level.  Thus, for the special case of \( S = r \), the results of Theorem \ref{thm:N} and Theorem \ref{thm:step_one} coincide, as one would expect. However, the two theorems are obtained under \emph{different} admissible sets: Theorem \ref{thm:N} finds the optimal contract over $\AcIC$ defined in \eqref{eq:Acl} (with the IC condition imposed), whereas Theorem \ref{thm:step_one} finds the optimal contract over \( \Ac \) defined in \eqref{eq:Ac}. Because $\AcIC \subset \Ac$ the result of Theorem \ref{thm:step_one} is stronger than those of Theorem \ref{thm:N} when \( N = 1 \). Also, the proof of Theorem \ref{thm:step_one}  is much simpler than that of Theorem \ref{thm:N}.

As implied by the method in Remark \ref{rem:proof}, for every $I \in \AcIC$, we can find an $I_S^*\in \AcIC$ in the form of \eqref{eq:opI} satisfying $\Eb[u(W_S(I_S^*))]\ge\Eb[u(W_S(I))]$; as such, \eqref{eq:opI} provides an \emph{analytical} characterization of the optimal contract. 
In consequence, Theorem \ref{thm:N} reduces Problem \ref{prob:s}, an infinite-dimensional optimization problem over $I \in \AcIC$, into an \( N \)-dimensional optimization problem over $0 \le l_1 \le l_2 \le \cdots \le l_N \le M$. Denoting $\vec{l} = \{l_i\}_{i=1,\cdots,N}$ and $I_S^*$ in \eqref{eq:opI} by $I_S^*(\cdot; \vec{l})$, our next agenda is to solve for the optimal parameters $\vec{l}^*$ defined by 
\begin{align*}
   \vec{l}^* = \argsup_{0\le l_1 \le \cdots \le l_N \le M} 
   \; \Eb \left[ u \left(W_S \big(I_S^*(\cdot; \vec{l})\big) \right) \right].
\end{align*}
However, there is no hope finding $\vec{l}^*$ in a general setup. 
Below, we consider a special case with $N = 2$ and $\tau=1$ and obtain semi-explicit expressions for the \emph{optimal} parameters \( l_1^* \) and \( l_2^* \) under a given premium level. The detailed results are summarized in Proposition \ref{prop:N=2}. 
Applying Theorem \ref{thm:N} to the case of $N=2$, we first obtain the optimal contract in a parametric form by 
\begin{align}
		I_S^*(x; l_1, l_2)&= (x-l_1)^+- \left(x-l_1- \left(   a + s_1 \right )^+ \right)^+ \\
		&\quad + \left(x-l_2- \left(  a + s_1 \right)^+ \right)^+ - \left(x-l_2- \left(  a + s_2 \right)^+ \right)^+, \label{eq:opI_N2}
	\end{align}
	in which $a = \pi(I_S^*) \in [0, \bar{a}_N]$ denotes the premium level, and  $ 0\le l_1 \le l_2 \le M$ are free parameters.

\begin{proposition}\label{prop:N=2}
	Let Assumption \ref{ass2} and Conditions (2) and (3) of Assumption \ref{ass} hold and further assume \(s_N>0\), \( N = 2 \) and \( \tau = 1 \). For a given premium level $a \in [0, \bar{a}_N]$, the optimal reinsurance contract to Problem \ref{prob:s} is given by $I_S^*(\cdot; l_1^*, l_2^*)$ in \eqref{eq:opI_N2}, in which the optimal parameters $l_1^*$ and $l_2^*$ are determined by one of the following cases.	

	\begin{description}
		\item[Case 1:] If $ a+s_1\le 0$, then $l_1^* =l_2^*$ and $l_2^* \in[0,M]$ uniquely solves the following equation:
		$$(1+\eta)\Eb[(X-l_2)^+-(X- l_2 -( a+s_2))^+]-a=0.$$
		
		\item[Case 2:] If $ a+s_1>0$ and \(p_2\le \frac{u'(w - \underline{l_1} - a)}{u'(w - \overline{l_2}- a)}\), in which
			\begin{align*}
			\underline{l_1} &= \inf\{ l_1 \in [0, M] \mid (1 + \eta) \mathbb{E}[(X-l_1)^+-(X-l_1-( a+s_1))^+] - a\le0 \},\\
			\overline{l_2} &=\inf\{l_2 \in[\underline{l_1},M] \mid 
		(1 + \eta) \, \Eb [I_S^*(X; \underline{l_1}, l_2)]
		- a\leq 0 
			\},
		\end{align*}
		then \(l_1^* =\underline{l_1}\) and \(l_2^* =\overline{l_2}\).

		\item[Case 3:] If $ a+s_1>0$ and \( p_2> \frac{u'(w - \underline{l_1} - a)}{u'(w - \overline{l_2}- a)} \), then 
		$(l_1^*, l_2^*)\in[0,M-a-s_1]\times[l_1,M-a-s_1]$ uniquely solves the following equations:
		\begin{align}
			\begin{cases}
			u'(w-l_1-a)=p_2u'(w-l_2-a),\\
			(1 + \eta) \, \Eb[I_S^*(X; l_1, l_2)] = a.
			\end{cases}
		\end{align}
	\end{description}
\end{proposition}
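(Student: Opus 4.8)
The plan is to reduce to a two-parameter constrained optimisation via Theorem~\ref{thm:N} and then carry out a perturbation analysis along the premium-constraint curve. First, by Theorem~\ref{thm:N} with $N=2$, the optimal contract must be $I_S^*(\cdot;l_1,l_2)$ of the form \eqref{eq:opI_N2}, so it suffices to maximise $\Phi(l_1,l_2):=\Eb[u(W_S(I_S^*(X;l_1,l_2)))]$ over $\{0\le l_1\le l_2\le M\}$ subject to $P(l_1,l_2):=(1+\eta)\Eb[I_S^*(X;l_1,l_2)]=a$. Writing $c_1:=(a+s_1)^+$ and $c_2:=(a+s_2)^+$ (so $0\le c_1\le c_2$ and $c_2=a+s_N>0$), I would record the decomposition $I_S^*(x;l_1,l_2)=\min\{(x-l_1)^+,c_1\}+\min\{(x-l_2-c_1)^+,c_2-c_1\}$: since $\tau=1$, the seller's reserve in state $S=s_i$ is $c_i$, and because the second summand is supported on $\{x>l_2+c_1\}$ — where the first summand already equals $c_1$ — the actual indemnity in state $S=s_1$ is exactly the first layer $\min\{(X-l_1)^+,c_1\}$, while in state $S=s_2$ there is no default and the buyer receives the whole $I_S^*$. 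Hence the retained loss is the \emph{constant} $l_1$ on $\{l_1<X\le l_1+c_1\}$ in both states and the constant $l_2$ on $\{l_2+c_1<X\le l_2+c_2\}$ in state $S=s_2$; these flat regions are the crux of the computation.

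For Case~1 ($a+s_1\le0$), $c_1=0$, the first layer vanishes, $I_S^*(\cdot;l_1,l_2)=\min\{(X-l_2)^+,c_2\}$ is independent of $l_1$, and the premium constraint becomes $(1+\eta)\Eb[(X-l_2)^+-(X-l_2-(a+s_2))^+]=a$. By Condition~(3) of Assumption~\ref{ass} the left side is continuous and strictly decreasing in $l_2\in[0,M]$, running from $(1+\eta)\Eb[\min\{X,a+s_N\}]=g_N(a)+a\ge a$ (Lemma~\ref{lem:a_barN}, using $a\le\bar{a}_N$) down to $0$, so it has a unique root $l_2^*$, and I would set $l_1^*=l_2^*$ since $\Phi$ does not see $l_1$.

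For Cases~2 and~3 ($a+s_1>0$, hence $c_1,c_2>0$) I would set up the Lagrangian and exploit the flat-retained-loss regions: off the finitely many $(l_1,l_2)$ coming from the atoms of $X$ (finite by Condition~(2)), $\Phi$ and $P$ are differentiable with $\partial_{l_1}\Phi=-u'(w-a-l_1)q_1(l_1)$, $\partial_{l_2}\Phi=-p_2u'(w-a-l_2)q_2(l_2)$, $\partial_{l_1}P=-(1+\eta)q_1(l_1)$, $\partial_{l_2}P=-(1+\eta)q_2(l_2)$, where $q_1(l_1):=\Pb(l_1<X\le l_1+c_1)$ and $q_2(l_2):=\Pb(l_2+c_1<X\le l_2+c_2)$ are strictly positive on the relevant ranges (Condition~(3)). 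The constraint $P=a$ defines a continuous, strictly decreasing curve $l_2=\phi(l_1)$ on an interval $[\underline{l_1},l_1^{\circ}]$ with $\phi(\underline{l_1})=\overline{l_2}$ and $\phi(l_1^{\circ})=l_1^{\circ}$, where $\underline{l_1}$ — the least $l_1$ for which a feasible $l_2$ exists — coincides with the quantity in the statement and $l_1^{\circ}$ solves $(1+\eta)\Eb[\min\{(X-l_1)^+,c_2\}]=a$. Plugging $\phi'=-q_1(l_1)/q_2(\phi(l_1))$ into $\tfrac{d}{dl_1}\Phi(l_1,\phi(l_1))=\partial_{l_1}\Phi+\phi'\partial_{l_2}\Phi$ cancels $q_1(l_1)$ and gives
\begin{align*}
\tfrac{d}{dl_1}\Phi(l_1,\phi(l_1))=q_1(l_1)\,u'(w-a-\phi(l_1))\left[p_2-\frac{u'(w-a-l_1)}{u'(w-a-\phi(l_1))}\right].
\end{align*}
Because $u'$ and $\phi$ are both strictly decreasing, $l_1\mapsto u'(w-a-l_1)/u'(w-a-\phi(l_1))$ is strictly increasing, equal to $u'(w-a-\underline{l_1})/u'(w-a-\overline{l_2})$ at $l_1=\underline{l_1}$ and to $1\,(\ge p_2)$ at $l_1=l_1^{\circ}$. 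So if $p_2\le u'(w-a-\underline{l_1})/u'(w-a-\overline{l_2})$ the derivative is $\le0$ on the whole curve and the maximiser is the corner $(\underline{l_1},\overline{l_2})$ (Case~2); if $p_2> u'(w-a-\underline{l_1})/u'(w-a-\overline{l_2})$ the objective rises then falls along the curve, with a unique interior maximiser determined by $u'(w-a-l_1^*)=p_2u'(w-a-l_2^*)$ together with $P(l_1^*,l_2^*)=a$ (Case~3; here $p_2<1$ forces $l_1^*<l_2^*$, so $l_1\le l_2$ is slack, and $l_1>M-c_1$ is excluded by $P\le a$, placing the optimum in the stated product set). Uniqueness everywhere follows from the strict monotonicities.

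The hard part will be the atoms of $X$: $\Phi$ and $P$ are only piecewise $C^1$, so the Lagrange/implicit-function step is not literally valid at the kink points. The rigorous argument replaces all derivatives by comparisons of finite increments of $\Phi$ along premium-neutral moves $(l_1,l_2)\mapsto(l_1+\epsilon,l_2-\delta)$ with $\delta\,q_2=\epsilon\,q_1$; this works precisely because the retained loss is \emph{exactly} constant on the shifted intervals, so the increments telescope to the displayed expressions with the $q_i$'s in place of densities. One must also verify the geometry of the constraint curve — that it connects the configuration with vanishing second layer to the diagonal $l_1=l_2$ while staying inside the triangle $0\le l_1\le l_2\le M$ — but the rest is routine monotonicity bookkeeping.
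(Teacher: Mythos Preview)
Your approach is essentially the same as the paper's. Both reduce via Theorem~\ref{thm:N}, impose the premium constraint $P(l_1,l_2)=a$, parametrise the feasible set as a monotone curve $l_2=\phi(l_1)$ (the paper writes $l_2(l_1)$), and compute the one-variable derivative along that curve; the paper obtains $\psi'(l_1)=q\cdot[-u'(w-l_1-a)+p_2u'(w-l_2(l_1)-a)]$ and shows the bracket is strictly decreasing via $\widetilde{\psi}'<0$ (using $u''<0$ and $l_2'\le0$), while you reach the same sign analysis by noting the ratio $u'(w-a-l_1)/u'(w-a-\phi(l_1))$ is strictly increasing --- these are equivalent. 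The paper separates out an intermediate ``Condition~2'' (when $(1+\eta)\Eb[(X-(M-(a+s_1))^+)^+]\ge a$, so $\underline{l_1}\ge M-c_1$ and the second layer degenerates) that you fold into the general curve analysis; your closing caveat about ``verifying the geometry of the constraint curve'' is exactly where that case hides, and handling it explicitly (the interval $[\underline{l_1},l_1^{\circ}]$ collapses to a point, automatically landing in Case~2) would make your argument complete.
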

\begin{proof} 
		See Online Companion III.     
\end{proof}

To summarize, Theorem \ref{thm:N} shows that the locally optimal contract to Problem \ref{prob:s} must be in the parametric form of $I_S^*(x; \vec{l})$ in \eqref{eq:opI}, in which $\vec{l} = \{l_i\}_{i=1,\cdots,N}$ is a vector of $N$ free parameters. With additional conditions, Proposition \ref{prop:N=2} finds the optimal parameters $\vec{l}^*$ and thus fully determines the locally optimal contract as $I_S^*(x; \vec{l}^*)$. Recall that a local solution is obtained for a fixed contract premium $a$. For Problem \ref{prob:no}, we go on an extra mile to solve for the optimal premium $a^*$ in Theorem \ref{thm:1_ass}, which leads to the \emph{globally} optimal contracts $I^* = I^*_{a^*}$. Regarding Problem \ref{prob:s}, we can not obtain a similar characterization for $a^*$, at least not without strong assumptions. Nevertheless, assuming a particular distribution for $S$, we can numerically solve for the optimal premium $a^*$, leading to a complete solution to Problem \ref{prob:s}.  We provide one such example below.

\begin{example}\label{exa4}
	Consider an insurer with power utility $u(x) =  x^{1/2}$ and set $\eta=0.1$ (premium loading), 
	$w = 15$ (insurer's initial wealth), and $\tau =1$ (recovery rate). The insurer's loss $X$ follows the same distribution as in Example \ref{exa2}. In addition, the distribution of the reinsurer's background risk (random reserve) $S$ is given by $\Pb(S = s_1 = 2) = 10\%$ and $\Pb(S = s_2 = 8) = 90\%$ (with $N =2$).

For Problem \ref{prob:no}, we use Theorem \ref{thm:1_ass} to compute the optimal premium and deductible, yielding $a^*= \pi(I^*) = 1.00$ and $d(a^*)=4.53$. 

For Problem \ref{prob:s}, we numerically compute the optimal premium and parameters, resulting in $a^* = \pi(I_S^*) = 0.74$, $l_1^*=4.60$, and $l_2^*=6.44$. Recall that the reinsurer's total available reserve is the summation of $S$ and the contract premium $a^*$. For $(x,s)\in[0,10]\times\{2,8\}$, we obtain the optimal contracts $I^*$ and $I^*_S$ by 
\begin{align*}
	I^*(x,s) &= (x-4.53)^+-(x-(5.53+s))^+,  &&(\text{solution to Problem \ref{prob:no}}), \\
	I^*_S(x) &=(x-4.6)^+-(x-7.34)^++(x-9.18)^+ , &&(\text{solution to Problem \ref{prob:s}}).
\end{align*}
It is easy to check that $I^*(X,S)= \Ib(I^*)$; as such, $I^*$ is a default-free contract. However, when $S = s_1 = 2$, $\Ib_{s_1}(I^*_S)=(X-4.6)^+-(X-7.34)^+$, and we have $\Ib_{s_1}(I^*_S) = 2.74 = S + a^* < I_S^*(X)$ for all $X > 9.18$, a scenario corresponding to the reinsurer's endogenous default. The strikingly different behavior between the two optimal contracts $I^*$ and $I^*_S$ highlights the critical impact of feasible contracts on the insurer's reinsurance decision. 

\end{example}

\section{Conclusion} \label{sec:con}
In a one-period economic model,  an insurer with a risk exposure $X$ purchases reinsurance contracts to cover $X$  from a representative reinsurer who applies the expected-value principle to determine premiums.
Motivated by both empirical and theoretical evidence, we model the reinsurer's reserve by a random variable $S$, due to its \emph{background risk}, and take into account the impact of the reinsurer's \emph{endogenous default} on the insurer's reinsurance demand. The insurer seeks an optimal contract that maximizes its expected utility of terminal wealth, subject to the reinsurer's endogenous default and background risk.
First, we consider loss- and background-risk dependent contracts with indemnities in the form of $I(x, s)$ and study the insurer's optimal reinsurance problem in a general setup.
We obtain the (globally) optimal contract in semiclosed form and show that it is a single deductible reinsurance with a policy limit. Next, we consider loss-dependent contracts with indemnities in the form of $I(x)$ and revisit the problem, imposing the IC condition on $I(x)$ and assuming the discrete $S$ is independent of $X$. We obtain an analytical characterization of the (locally) optimal contract, for a fixed premium, and show that it is a multi-layer reinsurance contract.

\section*{Acknowledgment}
\noindent
The first author acknowledges the financial support from the National Natural Science Foundation of China (Grant No.12271290). 

\section*{Competing interests}
\noindent
The authors declare no competing interests. 

\setlength{\bibsep}{0pt plus 0.3ex}
\bibliographystyle{apalike}
\bibliography{references}

\appendix
\section{Proofs of Section \ref{sec:main}}\label{sub:proof3}
\subsection{Proof of Lemma \ref{lem:a_bar}}\label{prooflm3.1}
\begin{proof} 
	By the definition in \eqref{eq:a_bar}, $g$ is continuous; in addition, $g(0)\ge0$ and $g(\pf) \le 0$.
	Denote $ \mathcal{P} := \left\{ a \in [0, \pf] \mid g(a) \le 0 \right\}\neq\emptyset $. Recall from \eqref{eq:abars} that $ \bar{a} = \inf \mathcal{P} $, and we immediately have $g(a)>0$ for all $a\in[0,\bar{a})$. By the continuity of $g$ and the boundary results, we must have $g(\bar{a}) = 0$.
\end{proof}

\subsection{Proof of Proposition \ref{prop:a_bar}}\label{prop3.1}
\begin{proof} 
	By definition, \( \bar{I}(x,s) = x - (x - (s + \bar{a})^+)^+ \) and $\Ib(\bar{I})=\bar{I}(X,S)$ for all $X \ge 0$ and $S\in\Rb$ (i.e., $\bar{I}$ is a default-free reinsurance contract). For all \( a \in [\bar{a}, \pf] \) and  \( I \in \Ac_a \), we have 
	\begin{align}
		&\quad\Eb[u(W(\bar{I}))]-\Eb[u(W(I))]
		\ge\Eb[u'(w-X+\bar{I}(X,S)-\bar{a})(\bar{I}(X,S)-\Ib(I)+a-\bar{a})]\\
		&=\Eb \left[u'(w-\bar{a})(\bar{I}(X,S)-\Ib(I)+a-\bar{a})\,\mathbf{1}_{\{ X<(S+\bar{a})^+\}} \right]\\
		&\quad +\Eb \left[u'(w-X+(S+\bar{a})^+-\bar{a})((S+\bar{a})^++a-\bar{a}-\Ib(I))\,\mathbf{1}_{\{X\ge (S+\bar{a})^+\}} \right]\\
		&\ge\Eb[u'(w-\bar{a})(\bar{I}(X,S)-\Ib(I)+a-\bar{a})]\ge u'(w-\bar{a})\frac{\eta}{1+\eta}(a-\bar{a}).\label{eq:prop:a_bar}
	\end{align}
	The first inequality is due to $u''<0$, and the next line is obtained by recalling the definition of $\bar{I}$. The second inequality arises from the fact that $u'(w-x+(s+\bar{a})^+-\bar{a})$ is an increasing function of $x$. Additionally, for $a\ge \bar{a}$, $(s+\bar{a})^++a-\bar{a}\ge (s+a)^+$, and $\Ib(I)\le (S+a)^+$ because $(S+a)^+$ is the maximum reserve of the reinsurer for all $I \in \Ac_a$. To get the third inequality, note that $\Eb[\bar{I}(X,S)]= \frac{\bar{a}}{1+\eta}$ and $\Eb[\Ib(I)]\le \frac{a}{1+\eta}$. Therefore, $\Eb[u(W(\bar{I}))]\ge\Eb[u(W(I))]$. 
	
	 If \( S \geq 0 \) and \( \eta = 0 \), then in the second inequality, for all \( a \in [0, \pf] \), \( (S + \bar{a})^+ + a - \bar{a} = (S + a)^+ \). Finally, Proposition \ref{prop:a_bar}  follows. 
\end{proof}

\subsection{Proof of Theorem \ref{thm:step_one}}\label{thm3.1}
\begin{proof} 
	First, we show that for all \(a \in [0, \bar{a}]\), there exists a solution $d = d(a)$ to the equation $g_a(y) = 0$ over $y \in [0, M]$, in which $g_a(y) := (1 + \eta) \Eb[(X - y)^+ - (X - y - (S + a)^+)^+] - a$.
	We deduce $g_a(0) \ge 0$ from Lemma \ref{lem:a_bar} and $g_a(M) = - a \le 0$ as $M = \mathrm{ess \, sup} \, X$. These results, along with the continuity of $g_a$, imply the existence of a solution to $g_a(d) = 0$ over $[0, M]$. 
	
	Motivated by the desirable properties of contract $\bar{I}$, we construct an admissible indemnity  $\tI \in \Ac_a$ in the following form:
	$\tI (x,s) = (x - d(a))^+ - (x - d(a) - (s + a)^+)^+$.
    Because $g_a(d(a)) = 0$, $\pi(\tI) = a$ and $\tI \in \Ac_a$. Using the above definition and \eqref{eq:D}, we easily see that $D(\tI) \equiv 0$, and $\tI$ is a default-free reinsurance contract (i.e., $\Ib(\tI) = \tI(X,S)$). 
	
	For all $a \in [0, \bar{a}]$ and $I \in \Ac_a$,  we obtain 
    \begin{align}
		& \Eb[u(W(\tI))]-\Eb[u(W(I))]
		\ge \Eb \left[u'(w-X+\tI(X,S)-a)(\tI(X,S)-\Ib(I)) \right]\\
		=& \Eb \left[u'(w\!-\!X\!-\!a)(-\Ib(I)) \mathbf{1}_{\{X\le d(a)\}}\right] \\
        & +\Eb \left[u'(w\!-\!d(a)\!-\!a)(\tI(X,S)-\Ib(I)) \mathbf{1}_{\{d(a)<X<d(a)+(S+a)^+\}}\right]\\
		&+ \Eb \left[u'(w-X+(S+a)^+-a)((S+a)^+-\Ib(I)) \mathbf{1}_{\{X\ge d(a)+(S+a)^+\}}\right]\\
		\ge& \, u'(w-d(a)-a) \, \Eb[\tI(X,S)-\Ib(I)]\ge 0,\label{eq:thm:step_one}
	\end{align}
in which all (in)equalities follow from similar arguments as in the proof of Proposition \ref{prop:a_bar}, except the last inequality, which is due to \( \Eb[\tI(X,S)]=\Eb[I(X,S)] \ge \Eb[\Ib(I)] \).  Therefore, we conclude that $\tI = I^*_a$ is the optimal reinsurance contract over the admissible set $\Ac_a$ for all $a \in [0, \bar{a}]$. 
\end{proof}
\subsection{Proof of Corollary \ref{cor:unique_d}}\label{Coroly3.1}
\begin{proof} 
For all \(0\le y_1<y_2\le M\) and $a>0$, we have
\begin{align}
    g_a(y_1)\!-\!g_a(y_2)
    =(1\!+\!\eta)\Eb\big[((X\!-\!y_1)\wedge (S\!+\!a)\wedge(y_2\!-\!y_1)\wedge (S\!+\!a\!+\!y_2\!-\!X))\,\mathbf{1}_{\{y_1<X<y_2+S+a\}}\big] \ge 0.
\end{align}
On $\{y_1 < X < y_2 + S + a\}$, we have $(X-y_1)\wedge (S+a)\wedge(y_2-y_1)\wedge (S+a+y_2-X)>0$. By Condition (3) of Assumption \ref{ass}, we have \( \Pb(y_1 < X < y_2 + S + a)> 0 \). Thus, the above inequality is strict ($g_a$ is strictly increasing for all $a > 0$), and the uniqueness result follows. By Condition (4) of Assumption \ref{ass}, when $a=0$, $d(0)=M$ is the only solution to $g_a(y) = 0$. 
\end{proof}

\subsection{Proof of Theorem \ref{thm:1_ass}}
\label{proof:1_ass}

Based on Theorem \ref{thm:step_one}, the insurer's optimization problem can be transformed into the following one-dimensional optimization problem with respect to the premium \(a\):
\[
\max_{a \in [0, \bar{a}]} \mathbb{E} \left[ u\left( w - X + (X - d(a))^+ - \left(X - (d(a) + S + a)\right)^+ - a \right) \right],
\]
where the deductible amount \(d(a)\) is defined in Corollary \ref{cor:unique_d}. Through continuity and derivative analysis, we obtain a semi-explicit solution for the optimal premium \(a^*\), as shown in Theorem \ref{thm:1_ass}.

To show Theorem \ref{thm:1_ass}, we first present a technical Lemma below regarding the continuity and differentiability of \(d(a)\); its proof is provided in Online Companion I.

\begin{lemma}\label{lem:d_a} 
	Let Assumption \ref{ass} hold. For every $a \in [0, \bar{a}]$, denote $d(a)$ the unique solution to $g_a(y) = 0$ in \eqref{eq:g_a}. Define a set $B_0$ by \(B_0 = \{(a,y)\in[0,\bar{a}]\times[0,M] \mid y \in \mathcal{X}_\Delta \text{ or } y + a \in \mathcal{Z}_\Delta\}\),	and a set $B$ by $B = \{ a \in [0, \bar{a}] \mid (a,d(a))\in B_0\}$, in which sets $\mathcal{X}_\Delta$ and $\mathcal{Z}_\Delta$ includes all the jump points of $X$ and $X-S$ on $[0,M]$. The following two assertions hold:
	\begin{enumerate}
		\item The solution \( d(a) \), as a function of $a$, is continuous on \([0, \bar{a}]\) and continuously differentiable on \((0, \bar{a}] \setminus B\), with the first-order derivative given by
		\begin{align}\label{da}
			d'(a)=\dfrac{(1+\eta) \, \Pb(X>d(a)+S+a)-1}{(1+\eta) \, \Pb(d(a)<X\le d(a)+S+a)}.
		\end{align}
		\item \( B \) is a finite set.
	\end{enumerate}
\end{lemma}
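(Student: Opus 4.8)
\textbf{Proof proposal for Lemma \ref{lem:d_a}.}

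The plan is to exploit the implicit characterization $g_a(d(a)) = 0$ together with the monotonicity and regularity of the two-variable function $G(a,y) := g_a(y) = (1+\eta)\,\Eb[(X-y)^+ - (X-y-(S+a))^+] - a$ (using $S \ge 0$ so that $(S+a)^+ = S+a$). First I would record the basic structural facts: by Corollary \ref{cor:unique_d}, for each $a \in [0,\bar a]$ the equation $G(a,y)=0$ has a unique root $d(a) \in [0,M]$, and from the proof of that corollary, $y \mapsto G(a,y)$ is \emph{strictly} increasing on $[0,M]$ for $a > 0$. For continuity of $a \mapsto d(a)$, I would argue by a standard sandwich/subsequence argument: $G$ is jointly continuous (it is an expectation of a uniformly bounded, jointly continuous integrand, so dominated convergence applies), and if $a_n \to a$ with $d(a_n) \to y_0$ along a subsequence, then $G(a,y_0) = 0$, whence $y_0 = d(a)$ by uniqueness; since $[0,M]$ is compact this forces $d(a_n) \to d(a)$.

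Next, for differentiability off the bad set $B$, the tool is the implicit function theorem, so I must compute the partial derivatives of $G$ and show the denominator is nonzero. Differentiating under the expectation (justified since the integrand is Lipschitz in each variable with a dominating constant), one gets, at points where no mass sits exactly on the relevant boundaries,
\begin{align}
\partial_y G(a,y) &= (1+\eta)\big[\Pb(X > y) - \Pb(X > y + S + a)\big] = (1+\eta)\,\Pb\big(y < X \le y + S + a\big),\\
\partial_a G(a,y) &= (1+\eta)\,\Pb(X > y + S + a) - 1.
\end{align}
The role of $B_0$ and $B$ is precisely to excise the $(a,y)$ for which $y$ is a jump point of $X$ or $y+a$ is a jump point of $X - S$ — at such points the distribution functions appearing above fail to be differentiable (there is an atom of $X$ at $y$, or an atom of $X-S$ at $y+a$ creating a kink in $a \mapsto \Pb(X - S > -(y+a-S))$-type terms), and elsewhere $G$ is $C^1$ by Condition (2) of Assumption \ref{ass}. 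On $(0,\bar a]\setminus B$ we have $(a,d(a)) \notin B_0$, so $G$ is $C^1$ near $(a,d(a))$; moreover $\partial_y G(a,d(a)) = (1+\eta)\,\Pb(d(a) < X \le d(a)+S+a) > 0$ by Condition (4) of Assumption \ref{ass} (note $\Eb[(X-y)^+ - (X-y-S)^+] > 0$ is equivalent to $\Pb(y < X \le y+S) > 0$, and adding $a \ge 0$ only enlarges the event). Hence the implicit function theorem applies and yields
\begin{align}
d'(a) = -\frac{\partial_a G(a,d(a))}{\partial_y G(a,d(a))} = \frac{(1+\eta)\,\Pb(X > d(a) + S + a) - 1}{(1+\eta)\,\Pb(d(a) < X \le d(a) + S + a)},
\end{align}
which is exactly \eqref{da}, and this expression is continuous in $a$ on $(0,\bar a]\setminus B$ since $d$ is continuous there and the probabilities are continuous off the excluded points.

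For the second assertion, that $B$ is finite, the idea is that $B$ is the set of $a$ for which the continuous curve $a \mapsto (a, d(a))$ meets the set $B_0$, and $B_0$ is a finite union of the ``horizontal'' lines $\{y = \xi\}$ for $\xi \in \mathcal{X}_\Delta$ and the ``diagonal'' lines $\{y + a = \zeta\}$ for $\zeta \in \mathcal{Z}_\Delta$; by Condition (2) of Assumption \ref{ass} there are only finitely many such lines. So it suffices to show that for each fixed $\xi$ the equation $d(a) = \xi$ has finitely many solutions, and likewise $d(a) + a = \zeta$ has finitely many solutions. The cleanest route: $d(a) = \xi$ forces $G(a,\xi) = 0$, and $a \mapsto G(a,\xi)$ is \emph{strictly decreasing} in $a$ (from $\partial_a G = (1+\eta)\Pb(X > y+S+a) - 1 < 0$, or more robustly from the concavity/monotonicity computation already used for $g$ in the proof of Lemma \ref{lem:a_bar} applied with a fixed deductible $\xi$), hence has at most one root — so at most one $a$ with $d(a)=\xi$. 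For the diagonal lines, observe that $d(a) + a = \zeta$ means the root of $g_a$ is $\zeta - a$; substituting, $h(a) := G(a, \zeta - a) = (1+\eta)\Eb[(X - \zeta + a)^+ - (X - \zeta - S)^+] - a$, and $a \mapsto (X-\zeta+a)^+$ is nondecreasing and $1$-Lipschitz, so $h(a) - ((1+\eta)-1)a$ has bounded-variation monotone pieces; more directly, $h'(a) = (1+\eta)\Pb(X > \zeta - a) - 1$, which, since $\Pb(X > \zeta - a)$ is monotone in $a$, changes sign at most once, so $h$ is quasi-concave and $h(a) = 0$ has at most two roots. Summing the (finitely many, uniformly bounded) counts over the finitely many lines in $B_0$ gives that $B$ is finite. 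The main obstacle I anticipate is the careful bookkeeping in this last step — ensuring the differentiation-under-the-integral is legitimate exactly off $B_0$ and that the atoms of $X$ and of $X-S$ are the only obstructions to $C^1$-ness, and then getting a clean monotonicity statement for $a \mapsto G(a,\xi)$ and $a \mapsto G(a,\zeta-a)$ that survives the presence of atoms; everything else is routine once the implicit-function-theorem setup is in place.
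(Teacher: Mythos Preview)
Your approach is essentially the paper's: implicit function theorem for Item~1 and line-by-line counting against the finitely many horizontal/diagonal lines for Item~2. Two corrections are needed, though. First, a sign slip: $g_a$ is strictly \emph{decreasing} in $y$ (Corollary~\ref{cor:unique_d} shows $g_a(y_1)\ge g_a(y_2)$ for $y_1<y_2$), so $\partial_y G(a,y)=-(1+\eta)\,\Pb(y<X\le y+S+a)<0$; your implicit-function formula $d'(a)=-\partial_a G/\partial_y G$ then matches \eqref{da} only after this sign is fixed, and your own text is internally inconsistent on this point. Second, and more substantively, the claim that $a\mapsto G(a,\xi)$ is strictly decreasing is false in general: $\partial_a G=(1+\eta)\,\Pb(X>\xi+S+a)-1$ can be positive when $\xi$ and $a$ are small. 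The paper instead uses that this map is \emph{concave} in $a$ (exactly your parenthetical ``more robustly'' backup), and combines concavity with the boundary value $G(0,\xi)=(1+\eta)\,\Eb[(X-\xi)^+-(X-\xi-S)^+]>0$ from Condition~(4) to force a unique root. Likewise, for the diagonal lines your $h$ is \emph{convex} (you yourself computed $h'(a)=(1+\eta)\,\Pb(X>\zeta-a)-1$ nondecreasing), not quasi-concave; strict convexity is what yields at most two roots, matching the paper's argument.
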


\begin{proof}[Proof of Theorem \ref{thm:1_ass}.]
	From Proposition \ref{prop:a_bar}, we have $I^*(x,s) = \bar{I}(x,s) = x - \left(x - \left(s + \bar{a} \right)\right)^+$ in the case of $\eta = 0$. Also, the same proposition implies that the optimal premium level $a^*$ is achieved on $[0, \bar{a}]$ for all $\eta > 0$. As such, we fix an arbitrary $\eta > 0$ in the rest of the proof. 
	
	From Theorem \ref{thm:step_one}, we know that $I^*_a$ given by \eqref{eq:I_op} is the optimal contract over set $\Ac_a$ and $\pi(I^*_a) = a$, for all $a \in [0, \bar{a}]$. To find the optimal premium level $a^*$, we consider the objective value of contract $I^*_a$, and it equals $\hat{J}(a) :=J(a,d(a))$,  in which
	\begin{align*}
		J(a,y) := \Eb\left[u\left(w - X + (X - y)^+ - (X - (y + S + a))^+ - a\right)\right].
	\end{align*}
	Based on  Lemma \ref{lem:d_a}, the function $\hat{J}$ is continuous on $[0,\bar{a}]$. Using Fubini's theorem, we obtain two equivalent expression of $J$ by 
	\begin{align}\label{eq:J_y}
		J(a,y)=\int_{t\le y}u'(w-t-a)\Pb(X\le t)\dd t-\int_{t>y+a}u'(w-t)\Pb(X-S>t)\dd t+u(w-y-a)\phantom{ee}
	\end{align}
and 
\begin{align}\label{eq:J_a}
	J(a,y)=\Eb[u(w\!-\!X\!-\!a)\mathbf{1}_{\{X\le y\}}]\!-\!\int_{t>y+a}u'(w\!-\!t)\Pb(X\!-\!S>t)\dd t\!+\!u(w\!-\!y\!-\!a)\Pb(X>y).
\end{align}
For all \( (a,y) \in [0, \bar{a}]\times[0,M] \setminus B_0 \), we obtain \( \frac{\partial J}{\partial y} \) from \eqref{eq:J_y} and \( \frac{\partial J}{\partial a} \) from \eqref{eq:J_a} as follows:
\begin{align*}
	\frac{\partial J}{\partial y}(a,y)
	&=-u'(w-y-a)\Pb(y<X\le y+S+a),\\
	\frac{\partial J}{\partial a}(a,y)
	&=-\Eb[u'(w-X-a)\,\mathbf{1}_{\{X\le y\}}]-u'(w-y-a)\Pb(y<X\le y+S+a).
\end{align*}
For all  $a\in(0,\bar{a}] \setminus B$, taking the derivatives of $\hat{J}$ and using \eqref{da}, we obtain
	\begin{align*}
		\hat{J}'(a)
        =&-\Eb \left[u'(w-X-a)\,\mathbf{1}_{\{X\le d(a)\}} \right] +(-d'(a)-1) u'(w-d(a)-a)\Pb(d(a)<X\le d(a)+S+a)\\
        =&-\Eb \left[u'(w-X-a)\,\mathbf{1}_{\{X\le d(a)\}} \right] + u'(w-d(a)-a)\left[\frac{1}{1+\eta}-\Pb (X>d(a)) \right]\\
		=&-\Eb[u'(w-(X\wedge d(a))-a)]+\frac{u'(w-d(a)-a)}{1+\eta}.
	\end{align*}

Define a new function $\bar{J}$ by 
\begin{align}\label{eq:wJ_a}
    \bar{J}(a,y) :=-\Eb[u'(w-(X\wedge y)-a)]+\frac{u'(w-y-a)}{1+\eta},
\end{align}
and let $\widetilde{J}(a) :=\bar{J}(a,d(a))$.
Note that \(\widetilde{J} = \hat{J}'\) on \( (0, \bar{a}] \setminus B \), and \( \widetilde{J} \) is continuous on \([0, \bar{a}]\). Again using Fubini's theorem, we get
\begin{align}\label{eq:wJ_y}
	\bar{J}(a,y)=-\int_{t\le y}u''(w-t-a)\Pb(X\le t)\dd t-\frac{\eta}{1+\eta}u'(w-y-a)
\end{align}
for all \( (a,y) \in [0, \bar{a}]\times([0,M] \setminus \mathcal{X}_{\Delta}) \). Computing \( \frac{\partial J}{\partial y} \) from \eqref{eq:wJ_y} and \( \frac{\partial J}{\partial a} \) from \eqref{eq:wJ_a}, for all  $a\in(0,\bar{a}] \setminus B$, we have
	\begin{align*}
		\widetilde{J}'(a) &=\Eb[u''(w-X-a)\,\mathbf{1}_{\{X\le d(a)\}}]+(d'(a)+1)u''(w-d(a)-a)\left[\Pb(X>d(a))-\frac{1}{1+\eta}\right]\\
		&=\Eb \left[u''(w-X-a)\,\mathbf{1}_{\{X\le d(a)\}}\right]+\frac{u''(w-d(a)-a)\left[1-(1+\eta)\Pb(X>d(a))\right]^2}{(1+\eta)^2 \, \Pb \big( d(a)<X\le d(a)+S+a \big) }.
	\end{align*} 
From \( u'' < 0 \), we deduce \( \widetilde{J}'(a) < 0 \) for \(a\in (0, \bar{a}) \setminus B \). Additionally,  as \(B\) is a finite set, \( \widetilde{J} \) is a strictly decreasing function. On the two boundary points $0$ and $\bar{a}$, we compute 
	\begin{align*}
		\widetilde{J}(0)=-\Eb[u'(w-X)]+\frac{u'(w-M)}{1+\eta}
		\quad \text{and} \quad 
		\widetilde{J}(\bar{a})=\frac{-\eta}{1+\eta}u'(w-\bar{a}) < 0,
	\end{align*}
in which the inequality is due to $\eta > 0$. 
Note that if we take $\eta = 0$, $\widetilde{J}(\bar{a}) = 0$, $J'(a)>0$ for  \(a\in (0, \bar{a}) \setminus B \), and thus $J$ is strictly increasing and reaches its maximum value at $\bar{a}$ (i.e., $a^* = \bar{a}$); in such a case, we easily see that $I^* = \bar{I}$, recovering the result in Proposition \ref{prop:a_bar}. 

With the above results in hand, we discuss two distinctive cases based on the sign of $\widetilde{J}(0)$ and derive the optimal premium $a^*$ in each case accordingly. Thus, the proof is  complete. 
\end{proof}
\subsection {Proof of Proposition \ref{prop:unique}}\label{prop3.2}
\begin{proof} 
First, we show that the inequality in \eqref{eq:ineq_Ibar} is strict. Assume to the contrary that there exists a constant \( a > \bar{a} \) and \( I \in \mathcal{A}_a \)  such that \( \mathbb{E}[u(W(\bar{I}))] =\mathbb{E}[u(W(I))] \). From \eqref{eq:prop:a_bar} in the proof of Proposition \ref{prop:a_bar}, \( \mathbb{E}[u(W(\bar{I}))] =\mathbb{E}[u(W(I))] \) holds if and only if \( \Ib(I) = I(X,S) \) (the last inequality) and \( \bar{I}(X,S) - \bar{a} = \Ib(I) - a \) (the first inequality). From the definition of $\bar{I}$ in \eqref{eq:I_bar}, for \( X < (S + a)^+ \), we have \(\bar{I}(X,S)-\bar{a}+a>X\) and thus \( \mathbb{P}(X \ge (S + a)^+) = 1 \), which contradicts Conditions (1) and (3) of Assumption \ref{ass}. As such, it follows that \( \mathbb{E}[u(W(\bar{I}))] > \mathbb{E}[u(W(I))] \) for all $a > \bar{a}$.

Second, we prove the uniqueness in Theorem \ref{thm:step_one}. That is, for all \( a \le \bar{a} \), and for all \( I \in \mathcal{A}_a \) such that \( \mathbb{E}[u(W(I))] = \mathbb{E}[u(W(I_a^*))] \), it must hold that \( I(X,S) = I_a^*(X,S) \). Recalling the proof of Theorem \ref{thm:step_one}, the inequalities in \eqref{eq:thm:step_one} become equal if and only if \( \Ib(I) = I(X,S) \) (the last inequality) and \( I_a^*(X,S) = \Ib(I) \) (the first inequality). 

Finally, from the proof of Theorem \ref{thm:1_ass}, under Assumption \ref{ass}, for all \( a \le \bar{a} \) and \( a \neq a^* \), we have \( \mathbb{E}[u(W(I^*))]> \mathbb{E}[u(W(I_a^*))] \). 
\end{proof}

\section{Proofs of Section \ref{sec:ext}}
\subsection{Proof of Lemma \ref{lem:a_barN}}\label{prooflm4.2}
\begin{proof} 
Given that $s_N > 0$, it follows that $g_N(0) = (1+\eta)\Eb[X - (X - s_N)^+] > 0$ and $g_N(\pf) = -(1+\eta)\Eb[(X - (s_N + \pf))^+] \le 0$. Let $\bar{a}_N$ be a solution to $g_N(a) = 0$ over $[0,\pf]$. For all $a_1\le a_2$ and $\alpha\in[0,1]$, using the inequality $(a+b)^+\leq a^++b^+$, we have
\begin{align*}
	&g_N(\alpha a_1 + (1 - \alpha)a_2) = (1 + \eta) \mathbb{E} \left[ X - \left( X - (s_N +(\alpha a_1 + (1 - \alpha)a_2)) \right)^+ \right] - (\alpha a_1 + (1 - \alpha)a_2) \\
	&= (1 + \eta) \mathbb{E} \left[ X - \left( \alpha(X - (s_N +a_1)) + (1 - \alpha)(X - (s_N +a_2)) \right)^+ \right] - (\alpha a_1 + (1 - \alpha)a_2) \\
	&\geq (1 + \eta) \mathbb{E} \left[ X - \alpha(X - (s_N +a_1))^+ - (1 - \alpha)(X - (s_N +a_2))^+ \right] - (\alpha a_1 + (1 - \alpha)a_2) \\
	&= \alpha g_N(a_1) + (1 - \alpha) g_N(a_2).
\end{align*}
The concavity of \(g_N\), together with the fact that \(g_N(0) > 0\), implies that \(\bar{a}_N\) is the unique solution to \(g_N(a) = 0\) within \([0, \pi_f]\).
\end{proof}

\subsection{Proof of Theorem \ref{thm:N}}\label{thm4.1}
\begin{proof} 
We first outline the key ideas behind the proof as follows. Denote the insurer's objective function by $\mathcal{J}(\cdot) := \Eb [ u (W_S(\cdot)) ] $, in which $W_S$ is given by \eqref{eq:Ws}. The goal is to show that for all $I \in \AcIC$, there exists an $I_S^*$ in the form of \eqref{eq:opI} such that $\mathcal{J}(I_S^*) \ge \mathcal{J}(I)$. To that end, we fix an arbitrary admissible indemnity \(I \in \AcIC\) and denote its premium \(a := \pi(I) \in [0, \pf]\) in the rest of the proof. 
 
Recall from Assumption \ref{ass2} that the reinsurer's background risk $S$ takes values from $\Sc = \{s_1, \cdots, s_N\}$, and thus the reinsurer's available reserve $R$ takes values from $\{ (a + s_i)^+ \, | \, s_i \in \Sc\}$. These $N$ values of $R$ help partition the loss domain $[0, M]$ into (a maximum of) $N + 1$ sub-intervals $A_i$. We then proceed to complete the proof in two steps. In Step 1, we construct a new indemnity $I_1$ such that, on each sub-interval $A_i$ (i.e., $ X \in A_i$), $I_1(X)$ and $I(X)$ have the same mean, but $I_1$ dominates $I$ in terms of $\mathcal{J}$ (i.e., $\Eb[u(W_S(I_1))\, \mathbf{1}_{\{X \in A_i\}}]\ge \Eb[u(W_S(I))\, \mathbf{1}_{\{X \in A_i\}}]$). This construction, if indeed achieved, immediately shows that $I_1$ is an improvement over $I$ to the insurer. We note that the equal mean constraint helps identify a parameter $l_i$ in $I_1$ for each sub-interval $A_i$, and there are possibly $N + 1$ parameters yet to be determined in the construction of $I_1$. 
In Step 2, we construct $I_2$ based on $I_1$ from Step 1,  which is of the form \eqref{eq:opI}, and show that $I_2$ dominates $I_1$. As such, the optimal contract to Problem \ref{prob:s} must be in the form of $I_2$ in \eqref{eq:opI}. We remark that there are up to $N$ parameters in $ I_2 (=I_S^* )$, but up to $N+1$ parameters in $I_1$.

\medskip
\noindent
\textbf{Step 1.}  As briefly explained above, when $S = s_i$, we have $R = (a + s_i)^+$, and any realization of loss $X$ such that $I(X) > R$ leads to an endogenous default event. This motivates us to define critical points $x_i$ by (we set $\inf \, \emptyset = \infty$ by convention)
	\begin{align}
		x_i := \inf \{x \in [0, M] \, | \, I(x) > ( a + s_i)^+\} \wedge M, \quad i =1, \cdots, N. 
	\end{align}
By \(I(0) = 0\) and the continuity of \(I\), we have \(I(x_i) = ( a + s_i)^+\) for all \(x_i < M\). 
 We denote \(x_0 := 0\) and \(x_{N+1} := M\) and define
$N_0 := \inf\{i \in \{1, 2, \ldots, N+1\} \,|\, x_i = M\}$.
By definition, \(x_{N_0} = M\) holds. Because we do not make any assumptions on the value of each $s_i$, it is possible that $N_0 \le N$, under which $x_{N_0} = x_{N_0 + 1} = \cdots = x_{N + 1} = M$. For all \(i \leq N_0 - 1\), we have \(x_i < M\) and \(I(x_i) = ( a + s_i)^+\) .

Using the points $\{x_0, x_1, \cdots, x_N, x_{N+1}\}$, define $N+1$ sub-intervals $A_i$ by 
\begin{align*}
	A_1 = [x_0, x_1] \quad \text{and} \quad A_i = (x_{i-1}, x_i], \; i = 2, \cdots, N+1.
\end{align*} 
(If $N_0 \le N$, $A_{N_0 + 1}, \cdots, A_{N+1}$ are empty sets.) 
Then, we can partition \([0, M]\) into $\cup_{i=1}^{N_0} A_i$.

	We now construct an alternative admissible indemnity \(I_1 \in \AcIC\) with the same premium as $I$ (i.e., $\pi(I_1) =a$) and show that the insurer prefers $I_1$ to $I$. Denoting \(s_0 := - a\) and \(s_{N+1} := - a + M\) and recalling that $S \in \{s_1, s_2, \cdots, s_N\}$, we define $I_1$ by 
	\begin{align}
		\label{eq:I1_def}
		I_1(x) =\sum_{i=1}^{N_0} \left[ \left(x - l_i^{(1)} - ( a + s_{i-1})^+ \right)^+ - \left(x - l_i^{(1)} - (a + s_i)^+ \right)^+\right],
	\end{align}
	in which $l_i^{(1)}$s are constants yet to be determined. We select constants \(l_i^{(1)} \in [x_{i-1} - ( a + s_{i-1})^+, x_i - ( a + s_i)^+]\) for $i \le N_0 -1$ (or \(l_{N_0}^{(1)} \in [x_{N_0-1} - ( a + s_{N_0-1})^+, x_{N_0} - ( a + s_{N_0-1})^+]\))  so that 
	\begin{align}
		\label{eq:l_eq}
		\mathbb{E} \left[I_1(X) \, \mathbf{1}_{\{X \in A_i\}} \right] = \mathbb{E} \left[I(X) \, \mathbf{1}_{\{X \in A_i\}} \right], \quad i=1, \cdots, N_0,
	\end{align}
    which in turn implies that \(\mathbb{E}[I_1(X)] = \mathbb{E}[I(X)]\) and \(\pi(I_1) = a\).
    
    With the construction of $I_1$ above, Step 1 boils down to the following two tasks:
    \\
    (i) Show that there exist constants $l_i^{(1)}$ such that \eqref{eq:l_eq} holds for all $i =1, \cdots, N_0$. 
    \\
    (ii) Show that $I_1 $ dominates $I$ in terms of $\mathcal{J}(\cdot)$ when $X \in A_i$ for all $i =1, \cdots, N_0$.

In the rest of Step 1, we fix an $i=1, \cdots, N_0$  and focus on the losses that fall in $A_i$ (i.e., $X \in A_i$). 

\medskip 
\noindent
\textbf{Task (i).} 
	Because \(l_{i}^{(1)} + ( a + s_{i})^+ \leq x_i \leq l_{i+1}^{(1)} + ( a + s_i)^+\) for \( 1\le i \leq N_0 - 1\), we obtain, for all $x \in A_i = (x_{i-1}, x_i]$, $2\le i\le N_0$ (or $A_1 = [0,x_1]$), that (see the definition of $I_1$ in \eqref{eq:I1_def})
	\begin{align}\label{eq:I1}
		I_1(x) &= ( a +s_{i-1})^+ + \left(x - l_i^{(1)} - ( a+ s_{i-1})^+ \right)^+ - \left(x - l_i^{(1)} - ( a + s_i)^+ \right)^+ .
	\end{align}
	We plot $I_1(x)$ over $x \in A_i = (x_{i-1}, x_i]\,(2\le i\le N_0-1)$ in Figure \ref{fig:sa_op} to visualize $I_1$. 
	
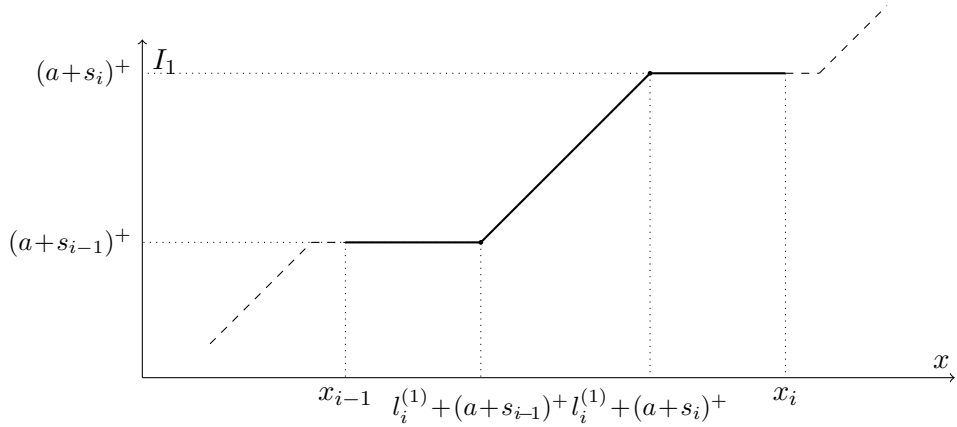
\begin{figure}[h]
	\centering
	\vspace{-1ex}
	\begin{tikzpicture}[scale = 0.45]
		\draw[->] (0,0) -- (24,0) node[above, xshift=-5pt] {$x$};
		\draw[->] (0,0) -- (0,10) node[below right] {$I_1$};
		
		\foreach \x in {6,10,16,19} 
		\draw (\x,0) node[below] {\ifnum\x=6 $x_{i-1}$ \else \ifnum\x=10 {\small$l_i^{(1)}\!+\!(a\!+\!s_{i\!-\!1})^+$} \else \ifnum \x=16 {\small$l_i^{(1)}\!+\!(a\!+\!s_i)^+$} \else  $x_i$ \fi \fi \fi};
		\foreach \y in {4,9} 
		\draw (0,\y) node[left] {\ifnum\y=4 {\small$(a\!+\!s_{i-1})^+$} \else {\small$(a\!+\!s_i)^+$} \fi};
		
		\draw[domain=2:5, dashed] plot (\x,\x-1);
		\draw[domain=5:6, dashed] plot (\x,4);
		\draw[domain=6:10, thick] plot (\x, 4);
		\draw[domain=10:15, thick] plot (\x, \x-6);
		\draw[domain=15:19, thick] plot (\x, 9);
		\draw[domain=19:20, dashed] plot (\x, 9);
		\draw[domain=20:22, dashed] plot (\x, \x-11);
		
		\draw[dotted] (6,4) -- (6,0);
		\draw[dotted] (10,4) -- (10,0);
		\draw[dotted] (10,4) -- (0,4);
		\draw[dotted] (15,9) -- (15,0);
		\draw[dotted] (19,9) -- (19,0);
		\draw[dotted] (15,9) -- (0,9);
		
		\fill (10, 4) circle (2pt);
		\fill (15, 9) circle (2pt);
		
	\end{tikzpicture}
	\caption{Indemnity $I_1$ in \eqref{eq:I1_def} on $x \in (x_{i-1},x_i]$.}
	\label{fig:sa_op}
    \vspace{-2ex}
\end{figure}

Based on the definition of \(x_i\) and the continuity and monotonicity of \(I\), for \(x \in (x_{i-1}, x_i]\) (or \(x \in [0, x_1]\)), we have \(( a + s_{i-1})^+ < I(x) \le ( a + s_i)^+\) (or \(0 \le I(x) \le ( a + s_1)^+\)).
On the one hand, we have
\begin{align*}
	I(x) &\leq \left\{ I(x_{i-1}) + (x - x_{i-1}) \right\} \wedge ( a + s_i)^+\\
	&=( a + s_{i-1})^+ + (x - x_{i-1}) - (x - x_{i-1} + ( a + s_{i-1})^+ - ( a + s_i)^+)^+.
\end{align*}
On the other hand, for all \(i \leq N_0 - 1\),
\begin{align*}
	I(x) &\geq ( a + s_{i-1})^+ \vee (x - x_i + I(x_i)) 
	= ( a + s_{i-1})^+ + (x - x_i + ( a + s_i)^+ - ( a + s_{i-1})^+)^+,
\end{align*}
while for \(i = N_0\), \(I(x) \geq ( a + s_{i-1})^+\). Thus, the existence of such an $l_i^{(1)}$ to \eqref{eq:l_eq} is established for all \(i = 1, \cdots, N_0\). 

\vspace{2ex}
\noindent
\textbf{Task (ii).} For losses $X \in A_i$, the actual indemnity \(\Ib_S(I)\) of contract \(I\) (see its definition in \eqref{eq:Ib_S}) is given by  
	\begin{align*}
		\Ib_{s_j}(I)= \tau( a+s_j)^+, \; j=1, \cdots, i-1,
        \quad \text{and} \quad
		\Ib_{s_j}(I) = I(X), \; j= i, \cdots, N,
	\end{align*}
in which $s_j$ is the realized value of the background risk $S$. 
Similarly, the actual indemnity \(\Ib_S(I_1)\) of contract \(I_1\) in \eqref{eq:I1} is given by 
	\begin{align}
		\Ib_{s_j}(I_1) &= \tau( a+s_j)^+, && j=1, \cdots, i-2,\\
		\Ib_{s_{j}}(I_1) &=( a+s_{i-1})^+ \, \mathbf{1}_{\{X\in(x_{i-1}, \, l_{i}^{(1)}+( a+s_{i-1})^+]\}}&&\\ 
		& \quad +\, \tau( a+s_{i-1})^+ \, \mathbf{1}_{\{X\in(l_{i}^{(1)} +( a+s_{i-1})^+,x_i]\}}, && j = i-1,\\
		\Ib_{s_j}(I_1) &=I_1(X),&& j= i, \cdots, N.
	\end{align}

Comparing \(\Ib_S(I)\) and \(\Ib_S(I_1)\), we easily see that 
\begin{align*}
	\Ib_{s_j}(I) = \Ib_{s_j}(I_1) \text{ for all } j = 1, \cdots, i-2, \quad \text{and} \quad \Ib_{s_{i-1}}(I) \le \Ib_{s_{i-1}}( I_1).
\end{align*}
It then follows that, for all $j=1, \cdots, i-1$, 
\begin{align*}
	\Eb \left[u \left(w-X+ \Ib_{s_j}(I_1) -a \right)\mathbf{1}_{\{X\in A_i\}}\right]\ge \Eb \left[u \left(w-X+\Ib_{s_j}(I)-a \right)\mathbf{1}_{\{X\in A_i\}} \right].
\end{align*}

Next, we consider the cases when $j = i, \cdots, N$. By the above results on $\Ib_{s_j}$ and the construction of $l_i^{(1)}$, we have 
\begin{align*}
	\Eb \left[\Ib_{s_j}(I_1)\mathbf{1}_{\{X\in A_i\}} \right] = \Eb \left[I_1(X)\mathbf{1}_{\{X\in A_i\}}\right] = \Eb \left[I(X)\mathbf{1}_{\{X\in A_i\}}\right] = \Eb \left[\Ib_{s_j}(I)\mathbf{1}_{\{X\in A_i\}}\right].
\end{align*}
In addition, the following (in)equalities hold:
	\begin{align*}
		\Ib_{s_j}(I_1)&=( a+s_{i-1})^+\le \Ib_{s_j}( I),&& \text{if } X \in (x_{i-1},  l_{i}^{(1)} +( a+s_{i-1})^+],\\
		\Ib_{s_j}(I_1)&= X - l_i^{(1)}, && \text{if } X \in (l_{i}^{(1)}\! +\!( a\!+\!s_{i-1})^+, (l_{i}^{(1)} \!+\!( a\!+\!s_{i})^+)\wedge x_i],\\
		\Ib_{s_j}(I_1)&=( a+s_{i})^+\ge \Ib_{s_j}(x; I),&& \text{if } X \in ((l_{i}^{(1)} +( a+s_{i})^+)\wedge x_i, x_{i}].
	\end{align*}
    Note that \(x_{i} < l_{i}^{(1)} +( a+s_{i})^+\) can only possibly hold when \(i = N_0\). Following a similar argument in the proof of Theorem \ref{thm:step_one} and using above results, we obtain, for all $j=i, \cdots, N$, that 
     \begin{align}
		& \Eb \left[u \left(w-X+ \Ib_{s_j}( I_1) -a \right)\mathbf{1}_{\{X\in A_i\}}\right] -  \Eb \left[u \left(w-X+\Ib_{s_j}(I)-a \right)\mathbf{1}_{\{X\in A_i\}} \right]\\
		\ge \, & \Eb \left[u'(w-X+\Ib_{s_j}(I_1)-a) \cdot (\Ib_{s_j}(I_1)-\Ib_{s_j}( I)) \cdot \mathbf{1}_{\{X\in A_i\}}\right]\\
		= \, &\Eb \left[u'(w-X+( a\!+\!s_{i-1})^+-a) \cdot (( a\!+\!s_{i-1})^+-\Ib_{s_j}( I)) \cdot  \mathbf{1}_{\{X\in (x_{i-1}, \, l_{i}^{(1)} +( a+s_{i-1})^+]\}} \right]\\
        &+\Eb \left[u'(w-l_i-a) \cdot  (\Ib_{s_j}(I_1)-\Ib_{s_j}( I)) \cdot  \mathbf{1}_{ \{X \in (l_{i}^{(1)} +( a+s_{i-1})^+, \, (l_{i}^{(1)} +( a+s_{i})^+)\wedge x_i ]\}} \right]\\
		&+\Eb\left[u'(w-X+( a\!+\!s_{i})^+-a) \cdot (( a\!+\!s_{i})^+-\Ib_{s_j}(I)) \cdot \mathbf{1}_{\{X \in ((l_{i}^{(1)}+( a+s_{i})^+)\wedge x_i, \, x_i]\}} \right]\\
		\ge& \, u'(w-l_i-a) \, \Eb \left[(\Ib_{s_j}( I_1)-\Ib_{s_j}( I)) \cdot \mathbf{1}_{\{X\in A_i\}} \right]= 0.\label{eq:thm:N}
	\end{align}

Finally, combining the results for $j \le i-1$ and $j \ge i$, we have 
	\begin{align*}
		\Eb[u(W_S(I_1))]&=\sum_{i=1}^{N_0}\sum_{j=1}^N \Pb(S=s_j) \, \Eb[u(w-X+\Ib_{s_j}( I_1)-a)\,\mathbf{1}_{\{X\in A_i\}}]\\
		&\ge \sum_{i=1}^{N_0}\sum_{j=1}^N \Pb(S=s_j) \, \Eb[u(w-X+\Ib_{s_j}( I)-a)\,\mathbf{1}_{\{X\in A_i\}}]
		=\Eb[u(W_S(I))].
	\end{align*} 
With the completion of both Tasks (i) and (ii), we complete Step 1 of the proof.

\medskip 
\noindent
\textbf{Step 2.} We show that $I_S^*$ in \eqref{eq:opI} outperforms $I_1$ in \eqref{eq:I1_def} in terms of the insurer's objective $\mathcal{J}(\cdot)$. 

Note that there are up to $N$ constants, $l_1, \cdots, l_N$, in $I_S^*$, but up to $N+1$ constants in $I_1$. Indeed, if $N_0 < N + 1$, we can set $l_{N_0+1}^{(1)} = \cdots l_{N + 1}^{(1)} = M$. In this way, we can remove the ``unknown'' $N_0$ in \eqref{eq:I1_def} and write $I_1$ as 
	\begin{align}
		I_1(x) &= 
		\left(x-l_1^{(1)}\right)^+ - \left(x - l_1^{(1)} -(  a +s_1)^+\right)^+ +  \left(x- l_{N+1}^{(1)} - ( a+s_N)^+ \right)^+\\ 
		&\quad +
		\sum_{i=2}^{N}\Big[ \left(x-l_i^{(1)}-(  a +s_{i-1})^+ \right)^+ - \left(x-l_i^{(1)}-(  a +s_i)^+ \right)^+\Big], 
        \label{eq:I1_new}
	\end{align}
in which $a = \pi(I_1)$ denotes the premium of contract $I_1$. Recall that \( \pi(I_1) = \pi(I) = a \), and the parameters \( l_i^{(1)} \) are determined in the first step based on \( I \).

Next, we define a new indemnity function $I_2$ by 
\begin{align}
	I_2(x)
	&= \left(x- l_1^{(2)}\right)^+ - \left(x - l_1^{(2)} -(  a_2 +s_1)^+\right)^+ 
	\\
	&\quad + 
	\sum_{i=2}^{N}
	\left[ \left(x - l_i^{(2)} - (  a_2 + s_{i-1})^+ \right)^+ - \left(x - l_i^{(2)} - (  a_2 + s_{i})^+ \right)^+ \right], \label{eq:I2_def}
\end{align}
in which the constants $l_i^{(2)}$s are defined by 
\begin{align}
	\label{eq:l2}
	l_1^{(2)} = l_1^{(1)} \quad \text{and} \quad l_i^{(2)} =l_i^{(1)}+( a+s_{i-1})^+-(  a_2 +s_{i-1})^+, \; i = 2, \cdots, N,
\end{align}
and $a_2 = \pi(I_2)$ is the premium of contract $I_2$ and takes values in $[0, a]$. Given that \( a_2 \le a \) and \( s_{i-1} \le s_{i} \), we have \( l_i^{(2)} \le l_{i+1}^{(2)} \), implying \( I_2 \in \AcIC \). The definition of $I_2$ in \eqref{eq:I2_def} is \emph{not} complete yet because it is in a parametric form of $a_2$, a \emph{free} parameter in $[0, a]$; this is largely different from the definition of $I_1$ in \eqref{eq:I1_new}, in which $a \in [0, \pf]$ is a \emph{given} constant and equals the premium  $\pi(I)$. 
Denote $I_2(\cdot)$ in \eqref{eq:I2_def} by $I_2(\cdot; a_2)$; the particular $a_2$ we need in \eqref{eq:I2_def} should solve $a_2 = (1 + \eta) \Eb[I_2(X; a_2)]$. It is easy to verify that $(1+\eta) \, \Eb[I_2(X; 0)] \ge 0$ and $(1+\eta) \, \Eb[I_2(X; a)] \le a$, which establishes the existence of a solution $a_2$ to $a_2 = (1 + \eta) \Eb[I_2(X; a_2)]$. By now, $I_2$ in \eqref{eq:I2_def} is well defined.

Note that by setting $l_i = l_i^{(2)}$ and  $s_0 = -r - a_2 \, (= - r - \pi(I_S^*))$,  $I_S^*$ in \eqref{eq:opI} is identical to $I_2$ defined in \eqref{eq:I2_def}. We proceed to show that $I_2$ dominates $I_1$. 
To that end, for every $s_j (=S)$, $j = 1, \cdots, N$, we use \eqref{eq:Ib_S} to derive the actual indemnity of $I_1$ and $I_2$ and, by noting \(( a+s_j)^+-( a_2+s_j)^+\le a-a_2\), obtain 
\begin{align*}
	\Ib_{s_j}(I_1)-\Ib_{s_j}(I_2) &= (I_1(X)-I_2(X))\mathbf{1}_{\{X\le l_{j+1}^{(1)}+( a+s_{j})^+\}}\\
    &\hspace{1em} +(\tau( a+s_j)^+-\tau( a_2+s_j)^+)\mathbf{1}_{\{X>l_{j+1}^{(1)}+( a+s_{j})^+\}}\le a-a_2 .
\end{align*}
Thus, recalling \eqref{eq:Ws}, we have $W_S(I_2) - W_S(I_1) = \Ib_S(I_2)-\Ib_S( I_1)-a_2+a\ge 0$,
implying that $\Eb \left[u(W_S(I_2))\right] \ge \Eb \left[u(W_S(I_1))\right]$ as claimed.  

Finally,  combining the results from Steps 1 and 2, we conclude that the optimal reinsurance contract to Problem \ref{prob:s} is in the form of  $I_S^*$ in \eqref{eq:opI}. Moreover, if $s_N>0$, from $I_S^*(x)\le x\wedge (\pi(I_S^*)+s_N)$, we have
$$\pi(I_S^*)=(1+\eta)\Eb[I_S^*(X)]\le(1+\eta)\Eb[X-(X-(\pi(I_S^*)+s_N))^+]. $$
Recalling Lemma \ref{lem:a_barN}, it follows that $\pi(I_S^*)\in [0,\bar{a}_N]$. 
\end{proof}

\clearpage 
\newpage
\setcounter{page}{1}

\setcounter{section}{0}
\renewcommand\thesection{\Roman{section}}
\renewcommand\thesubsection{\thesection.\arabic{subsection}}

\begin{center}
	\large 
	Online Companion for 
	\\
	``Optimal Reinsurance under Endogenous Default and Background Risk''
	\\
	Zongxia Liang, Zhaojie Ren, and Bin Zou
\end{center}\label{online appendix}

In this Online Companion, we provide technical proofs to Lemma \ref{lem:d_a}, Proposition \ref{prop:a_p}, and Proposition \ref{prop:N=2} in the main paper, along with an extension to the unbounded case where $M=\infty$. Recall that we define $d(a)$ as the solution to \eqref{eq:g_a}, for a given premium $a$, in Theorem \ref{thm:step_one}. In this companion, to avoid potential confusion, we write such a solution by $d (\cdot)$ to emphasize that it is a function defined over $[0, \bar{a}]$, and use $d$ as a generic constant or argument.

\section{Proof of Lemma \ref{lem:d_a}}\label{plem:d_a}

Proof of Item 1. Define  $G:[0,\bar{a}]\times[0,M]\to\Rb$  by
\begin{align}\label{eq:G}
	G(a, y)=(1+\eta)\Eb[(X-y)^+-(X-(y+S+a))^+]-a.
\end{align}
The function \(G\) is continuous on \([0,\bar{a}] \times [0,M]\). Denote the distribution functions of $X$ and $X-S$ by $F_1$ and $F_2$, respectively. Using Fubini's Theorem, we get
\begin{align*}
	\Eb[(X-y)^+]=&\int_{x>y}(x-y)\dd F_1(x)=\int_{x>y}\int_{y<t<x}\dd t\dd F_1(x)
	=\int_{t>y}\Pb(X>t)\dd t.
\end{align*}
Similarly, we have
\begin{align*}
	\Eb[(X-S-y-a)^+]=\int_{t>y+a}\Pb(X-S>t)\dd t.
\end{align*}
Thus, 
\begin{align}\label{gnew}
	G(a, y)=(1+\eta)\left[\int_{t>y}\Pb(X>t)\dd t-\int_{t>y+a}\Pb(X-S>t)\dd t\right]-a.
\end{align}
Recall that \(B_0 = \{(a,y)\in[0,\bar{a}]\times[0,M] \mid y \in \mathcal{X}_\Delta \text{ or } y + a \in \mathcal{Z}_\Delta\}\). From \eqref{gnew}, $G$ has the following partial derivatives on \([0,\bar{a}] \times [0,M] \setminus B_0\):
\begin{align*}
	\frac{\partial G}{\partial a}(a,y) &= (1 + \eta) \mathbb{P}(X > y + S + a) - 1, \\
	\frac{\partial G}{\partial y}(a,y) &= -(1 + \eta) \mathbb{P}(y < X \le y + S + a).
\end{align*}

Let $a_0$ be an arbitrary fixed point in $[0, \bar{a}]$ and recall that $d(a_0)$ is the unique solution to $G(a_0, d(a_0)) = 0$. In particular, \( d(0) = M \), and if \( d(a) = M \), then \( a = 0 \). For $a_0\in(0,\bar{a}]$, by the strict decrease in $y$ and continuity of $G$, for $\epsilon >0$ small enough (say \(\epsilon < \min\{d(a_0), M - d(a_0)\}\)), we have 
	\[
	\lim_{a \to a_0} G(a, d(a_0) - \epsilon) = G(a_0, d(a_0) - \epsilon) > 0,
	\]
	and
	\[
	\lim_{a \to a_0} G(a, d(a_0) + \epsilon) = G(a_0, d(a_0) + \epsilon) < 0.
	\]
If \(d(a_0) = 0\), we only need to consider the second limit with $\epsilon<M$. As such, there exists a positive $\delta$ such that for all \(a\) satisfying \( |a - a_0| < \delta \), we have \( G(a, d(a_0) - \epsilon) > 0 \) and \( G(a, d(a_0) + \epsilon) < 0 \). From the definition of \(d(\cdot)\), it follows that \( d(a) \in (d(a_0) - \epsilon, d(a_0) + \epsilon) \), implying the continuity of \(d(\cdot)\) at \(a_0\). If $a_0=0$, for $0<\epsilon<M$, $G(0,M-\epsilon)>0$. A similar argument shows that \( d \) is continuous at 0. Therefore, \( d (\cdot) \) is continuous on \([0, \bar{a}]\).
	
	Recall that \( B = \{a\in[0,\bar{a}] \mid (a, d(a)) \in B_0\}\). For all $(a,y)\in(0,\bar{a}]\times[0,M)\setminus B_0$, we have $\frac{\partial G(a,y)}{\partial y}<0$  using Condition (3) in Assumption \ref{ass}.
	By the implicit function theorem, $d(\cdot)$ is a continuously differentiable function on $(0,\bar{a}]\setminus B$, and its derivative is given by \eqref{da} as claimed.

	\medskip
	\noindent
	{Proof of Item 2.} For every \( a \in B \), we discuss the following two cases: \( d(a)  \in \mathcal{X}_\Delta\) or \( d(a)  + a  \in \mathcal{Z}_\Delta \).
	
	\textbf{Case 1:}  Let $a$ be such that \( d(a) = y \in \mathcal{X}_\Delta\). Define a function $G_1$ by 
	\[
	G_1(\tilde{a}) = (1 + \eta) \mathbb{E}[(X - y)^+ - (X - (y + S + \tilde{a}))^+] - \tilde{a}.
	\]
If $y=M$, then $a=0$. We only need to consider $y<M$. By the definition of $d (\cdot)$, we have \( G_1(a) = 0 \). Similar to the proof of Lemma \ref{lem:a_barN}, we can show that \( G_1 \) is a continuous and concave function. In addition, \( G_1(0) = (1 + \eta) \mathbb{E}[(X - y)^+ - (X - (y + S))^+] > 0 \) and \( G_1(\pf) \leq 0 \). Hence, $a$ is identified as the unique zero of $G_1$. 
	
	\textbf{Case 2:} Let $a$ be such that \( d(a) + a = y\in\mathcal{Z}_\Delta \).  Define a function $G_2$ by 
	\[
	G_2(\tilde{a}) = (1 + \eta) \mathbb{E}[(X - (y  - \tilde{a}))^+ - (X -( y +S))^+] - \tilde{a}.
	\]
Using the definition of $d (\cdot)$, we identify $a$ as a solution to \( G_2(a) = 0 \), and our remaining task is to show that there are finitely many such solutions. Recall the function $d (\cdot)$ takes values over $[0, M]$, the solutions $a$ are in the interval $[0, y ]$.  For all \( 0 \leq \tilde{a}_1 < \tilde{a}_2 \leq y \) and \( \alpha \in (0, 1) \), we have 
\begin{align*}
G_2(\alpha \cdot \tilde{a}_1 + (1 - \alpha) \cdot \tilde{a}_2) - \alpha \cdot G_2(\tilde{a}_1) - (1-\alpha) \cdot G_2(\tilde{a}_2) =(1+\eta) \Eb \big\{  \mathbf{1}_{\{X \in (y - \tilde{a}_2, y  - \tilde{a}_1)\}} \\
\qquad \cdot \left[\left(\alpha(X - y  + \tilde{a}_1) + (1 - \alpha)(X - y + \tilde{a}_2)\right)^+ - (1 - \alpha)(X - y + \tilde{a}_2)\right] \big\} < 0,
\end{align*}
The last inequality holds because \( y - \tilde{a}_1 > 0 \) and \( y - \tilde{a}_2 < M \). Therefore, \( G_2 \) is a strictly convex function. Consequently, the equation \( G_2(a) = 0 \) on \([0, y]\) has at most two solutions.
	
	Combining the above cases, we conclude that \(B\) is a finite set.   
\section{Proof of Proposition \ref{prop:a_p}}\label{pprop:a_p}
In Proposition \ref{prop:a_p}, we state the comparative statics results of $a^*$, $d^*$, and $U^* = d^* + S + a^*$ in the optimal contract with respect to three model inputs, $S$, $w$, and $\mathbb{A}_u$. Recall that $a^* = \pi(I^*)$ is the contract premium, $d^* = d(a^*)$ is the deductible amount, and $U^*$ is the maximum covered loss; $S$ is the reinsurer's random reserve (background risk), $w$ is the insurer's initial wealth, and $\mathbb{A}_u = - \frac{u''}{u'}$ is the insurer's Arrow-Pratt coefficient of absolute risk aversion. 

By Theorem \ref{thm:1_ass}, when \(\eta \ge \frac{u'(w-M)}{\Eb[u'(w-X)]} - 1\), we have \(a^* = 0\), \(d^* = M\), and \(a^* + d^* = M\). If \( \mathbb{A}_{u}(x) \) is a decreasing function (as in Assumption \ref{asu:u}), taking the first-order derivative shows that \(\frac{u'(w-M)}{\Eb[u'(w-X)]}\) decreases with respect to (w.r.t.) \(w\), the insurer's initial wealth. Given two utility functions \(u_1\) and \(u_2\) such that \(\mathbb{A}_{u_1}(x) \le \mathbb{A}_{u_2}(x)\) for all \(x\), we find that \(\frac{u_1'(x)}{u_2'(x)}\) increases w.r.t. \(x\) and then \(\frac{u_1'(w-M)}{\Eb[u_1'(w-X)]}\le \frac{u_2'(w-M)}{\Eb[u_2'(w-X)]}\). Therefore, all Items in Proposition \ref{prop:a_p} hold when \(\eta \ge \frac{u'(w-M)}{\Eb[u'(w-X)]} - 1\). In the rest of the proof, we only consider the opposite case and make the standing assumption:
\begin{align*}
	\eta < \frac{u'(w-M)}{\Eb[u'(w-X)]} - 1.
\end{align*}
We study each of the three contract specifications $a^*$, $d^*$, and $U^*$, instead of focusing on the three model inputs, one by one. We first study the optimal premium $a^*$ which is the solution to the optimization problem in \eqref{eq:a_op} over all $a \in [0, \pf]$. Then, based on the properties of the optimal premium \( a^* \), we derive the corresponding properties of optimal deductible \( d^*=d(a^*) \) and policy limit \( U^*=a^*+d^*+S\). 

By Theorem \ref{thm:1_ass}, $(a^*, d^*:=d(a^*))$ are determined by jointly solving \eqref{eq:g_a} and \eqref{eq:a_star} over \((a, d) \in [0, \pf] \times [0, M]\). By rewriting \eqref{eq:g_a} and \eqref{eq:a_star},  \((a^*, d^*)\) is the solution to the following system of equations over \((a, d) \in [0, \pf] \times [0, M]\):
\begin{align}
	&\tilde{J}_1(a,d;S):=(1+\eta)\Eb[(X-d)^+-(X-(d+S+a))^+]-a =0,\label{eq:ad_a}\\
	&\tilde{J}_2(a,d;w,u):=-\Eb[u'(w-X-a)\,\mathbf{1}_{\{X\le d\}}]+ u'(w-d-a)\left[\frac{1}{1+\eta}-\Pb(X>d)\right] =0\label{eq:ad_d}.
\end{align}
Let $\tilde{J}_3=\frac{\tilde{J}_2}{u'(w-d-a)}$, or equivalently 
\begin{align}
	\tilde{J}_3(a,d;w,u):=-\frac{\mathbb{E}[u'(w - X \wedge d - a)]}{u'(w - d- a)} + \frac{1}{1 + \eta}.
	\label{eq:J3}
\end{align}
When the relationship between \( \tilde{J}_1 \) and \( S \) is not emphasized, we denote \( \tilde{J}_1(a, d; S) \) simply as \( \tilde{J}_1(a, d) \); similar abbreviations on $\tilde{J}_2$ and $\tilde{J}_3$ will also be used. 

Our first objective is to establish some monotonicity results for $\tilde{J}_i$, $i=1, 2, 3$. To that end, define a constant $\underline{x}_0$ by 
\begin{align*}
	\underline{x}_0 := \inf\left\{x \in [0, M] \,\middle| \,\mathbb{P} ( X > x ) \leq \frac{1}{1 + \eta}\right\}.
\end{align*}
By the definition of $\underline{x}_0$, the right-continuity of \( \Pb (X > \cdot) \), and the strict monotonicity of the distribution function of \( X \) in Assumption \ref{ass}, we can conclude that for all \( x < \underline{x}_0 \), \( \mathbb{P} (X > x) > \frac{1}{1 + \eta} \), for \( x = \underline{x}_0 \), \( \mathbb{P} (X > x) \leq \frac{1}{1 + \eta} \), and for all \( x > \underline{x}_0 \), \( \mathbb{P} (X > x) < \frac{1}{1 + \eta} \). The results are summarized below.

\begin{lemma}\label{lem:mono} 
	Let $\tilde{J}_1$, $\tilde{J}_2$, and $\tilde{J}_3$ be defined by \eqref{eq:ad_a}, \eqref{eq:ad_d}, and \eqref{eq:J3}, respectively. Then
	\begin{enumerate}
		\itemsep0em 
		\item For every fixed \( d \in [\underline{x}_0, M] \), \( \tilde{J}_1 (a, d) \) strictly decreases w.r.t. \( a \).
		\item  For every fixed \(a\), \( \tilde{J}_1 (a, d) \) decreases w.r.t. \( d \), and there exists a unique \( d(a) \) such that \( \tilde{J}_1(a, d(a)) = 0 \).
		\item For every fixed \((a, d)\), \( \tilde{J}_1 (a, d; S) \) increases w.r.t. \( S \).
		\item For every fixed \(a\), \( \tilde{J}_2(a, d)\) strictly increases w.r.t. $d$ on $[\underline{x}_0, M]$.
		\item Under Assumption \ref{asu:u}, for every fixed \((a, d)\), \( \tilde{J}_3 (a, d; w)\) decreases w.r.t. $w$.
		\item Under Assumption \ref{asu:u}, for every fixed \(d\), \( \tilde{J}_3(a, d) \) increases w.r.t. $a$.
	\end{enumerate}
\end{lemma}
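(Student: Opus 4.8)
The plan is to prove the six monotonicity claims one at a time, in each case by differentiating the relevant functional $\tilde{J}_i$ in the indicated variable, reading off the sign of the derivative, and then upgrading ``the derivative has a fixed sign'' to ``$\tilde{J}_i$ is (strictly) monotone'' via continuity. The only technical nuisance is that $X$ and $X-S$ have finitely many jump points (Condition~(2) of Assumption~\ref{ass}), so each $\tilde{J}_i$ is continuous but merely piecewise $C^1$; since a continuous function whose derivative has a (strictly) fixed sign off a finite set is (strictly) monotone, this causes no real trouble. Throughout I would reuse the Fubini representations of $\tilde{J}_1$ and of $\tilde{J}_2=\bar{J}$ already derived in the proof of Theorem~\ref{thm:1_ass} (see \eqref{gnew} and \eqref{eq:wJ_y}) to make the differentiation clean and to pin down the finitely many bad points.

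For Items~1--3 (monotonicity of $\tilde{J}_1$): for fixed $d$ the map $a\mapsto\tilde{J}_1(a,d)$ is Lipschitz with $\partial_a\tilde{J}_1(a,d)=(1+\eta)\Pb(X>d+S+a)-1$ at every differentiability point; when $d\ge\underline{x}_0$ and $a>0$ the event $\{X>d+S+a\}$ is contained in $\{X>\underline{x}_0+a\}$ (because $S\ge0$), so by the defining property of $\underline{x}_0$ one gets $\Pb(X>d+S+a)<\tfrac{1}{1+\eta}$ and hence $\partial_a\tilde{J}_1<0$ a.e., which with continuity gives Item~1. Item~2 is essentially Corollary~\ref{cor:unique_d}: since $S\ge0$ gives $(S+a)^+=S+a$ we have $\tilde{J}_1(a,\cdot)=g_a(\cdot)$, whose integrand $\min((X-d)^+,S+a)$ is non-increasing in $d$ and whose root $d(a)$ exists and is unique (with $d(0)=M$ by Condition~(4)). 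Item~3 needs no derivative: $S_1\le S_2$ pointwise forces $(X-(d+S_1+a))^+\ge(X-(d+S_2+a))^+$ pointwise, so the integrand of $\tilde{J}_1$ is pointwise non-decreasing in $S$ and the expectation inherits the order.

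For Item~4 (monotonicity of $\tilde{J}_2$ in $d$): differentiating \eqref{eq:wJ_y} shows that, off the jump set of $X$, $\partial_d\tilde{J}_2(a,d)=u''(w-d-a)\big(\Pb(X>d)-\tfrac{1}{1+\eta}\big)$; the factor $\Pb(X>d)-\tfrac{1}{1+\eta}$ is positive for $d<\underline{x}_0$ and negative for $d>\underline{x}_0$ by the threshold property of $\underline{x}_0$, so together with $u''<0$ the sign of $\partial_d\tilde{J}_2$ is fixed on each side of $\underline{x}_0$, and continuity of $\tilde{J}_2$ then yields the claimed strict monotonicity.

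For Items~5--6 (monotonicity of $\tilde{J}_3$, where Assumption~\ref{asu:u} enters): recalling \eqref{eq:J3}, write $\tilde{J}_3(a,d;w)=-\Eb[\phi(w;X)]+\tfrac{1}{1+\eta}$ with $\phi(w;x):=u'(w-x\wedge d-a)/u'(w-d-a)$, a ratio $u'(\alpha)/u'(\beta)$ in which $\alpha=w-x\wedge d-a\ge\beta=w-d-a$. The logarithmic derivative in $w$ equals $\mathbb{A}_u(\beta)-\mathbb{A}_u(\alpha)\ge0$ (since $\alpha\ge\beta$ and $\mathbb{A}_u$ is decreasing under DARA), so $w\mapsto\Eb[\phi(w;X)]$ is non-decreasing and $\tilde{J}_3$ decreases in $w$, which is Item~5; likewise the logarithmic derivative in $a$ equals $\mathbb{A}_u(\alpha)-\mathbb{A}_u(\beta)\le0$, so $a\mapsto\Eb[\phi(w;X)]$ is non-increasing and $\tilde{J}_3$ increases in $a$, which is Item~6. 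The routine pieces are the interchange of differentiation and expectation (by dominated convergence, the finitely many atoms of $X$ and of $X-S$ excluding derivative trouble at isolated points) and the ``piecewise $C^1$ implies monotone'' upgrade; I expect Items~5--6 to be the main obstacle, since that is where the DARA hypothesis is indispensable and where one argues monotonicity of an \emph{expectation of a ratio} of marginal utilities rather than of a single explicit function --- everything else is sign bookkeeping around $\underline{x}_0$.
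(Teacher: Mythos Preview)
Your proposal is correct and follows essentially the same route as the paper: Items~1--4 by differentiating $\tilde{J}_1$ and $\tilde{J}_2$ (invoking Lemma~\ref{lem:d_a}, Corollary~\ref{cor:unique_d}, and the $\bar{J}$-derivatives from the proof of Theorem~\ref{thm:1_ass}), and Items~5--6 via DARA applied to the ratio $u'(\alpha)/u'(\beta)$. The only cosmetic difference is that for Items~5--6 you take the logarithmic derivative of the ratio inside the expectation, whereas the paper differentiates $\tilde{J}_3$ directly and then applies the pointwise DARA inequality $-u''(\beta)/u'(\beta)\ge -u''(\alpha)/u'(\alpha)$; these are the same computation.
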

\begin{proof} 
Differentiating \( \tilde{J}_1 \) with respect to \( a \), as shown in the proof of Lemma \ref{lem:d_a}, proves Item 1. Item 2 follows from Corollary \ref{cor:unique_d}. Item 3 is clearly true. Differentiating \( \tilde{J}_2 \) with respect to \( d \), as in the proof of Theorem \ref{thm:1_ass}, verifies Item 4. For Item 5, differentiating \( \tilde{J}_3 \) w.r.t. \( w \) gives
\[
\frac{\partial \tilde{J}_3}{\partial w} = \frac{-\mathbb{E}[u''(w - X\wedge d- a) ]}{u'(w - d - a)} + \frac{\mathbb{E}[u'(w - X\wedge d - a) ] u''(w - d - a)}{(u'(w - d - a))^2}.
\]
As \( -\frac{u''}{u'} \) is a decreasing function by Assumption \ref{asu:u}, we have 
\[
-\frac{u''(w - d - a)}{u'(w - d - a)} \geq -\frac{u''(w - X\wedge d - a)}{u'(w - X\wedge d - a)},
\]
which implies \( \frac{\partial \tilde{J}_3}{\partial w} \leq 0 \). Using a similar argument, Item 6 follows. 
\end{proof}
\begin{proof}[ Proof of Proposition \ref{prop:a_p}]
	From \eqref{eq:ad_d}, we know that \(\frac{1}{1+\eta}-\Pb (X>d^*) \ge 0\), in which $d^*=d(a^*)$ is the deductible of the optimal contract $I^*$, and this inequality implies that $d^* \ge \underline{x}_0$. Recalling \eqref{eq:ad_a} and Item 1 of Lemma \ref{lem:mono}, there exists a unique solution to $\tilde{J}_1(a, \underline{x}_0)=0$, which we denote by $\hat{a}$. By Item 1 and Item 2 of Lemma \ref{lem:mono}, it follows that \(a^*\le \hat{a}\). The discussion so far allows us to restrict the feasible domain of \((a, d)\) from \([0, \pf] \times [0, M]\) to \([0,\hat{a}] \times [\underline{x}_0, M]\) in the rest of the proof when we study the properties of the optimal contract.

	\emph{Property 1: \(a^*\) increases w.r.t. \(S\).} Consider two arbitrary background risks, \( S_1 \) and \( S_2 \), that satisfy Assumption \ref{ass} and \( S_1 \le S_2 \). Let \( \hat{a}_i := \hat{a}(S_i) \) denote the unique solution to \(\tilde{J}_1(a, \underline{x}_0;S_i)=0\) when $S = S_i$ for $i = 1, 2$.  It follows from Items 1 and 3 of Lemma \ref{lem:mono} that \(\hat{a}_2\ge\hat{a}_1\). Recalling \eqref{eq:ad_a}, for each \( a \in [0, \hat{a}_i] \), there exists a unique \( d_i(a)\in[\underline{x}_0,M] \) such that 
	\(
	\tilde{J}_1(a,d_i(a);S_i)=0.
	\)
	By Item 2 and Item 3 of Lemma \ref{lem:mono}, it follows that \( d_1(a) \leq d_2(a) \), \(a\in[0,\hat{a}_1]\). Recalling \eqref{eq:ad_d}, \( a=a_i^* \in [0, \hat{a}_i] \) is the unique solution to $\tilde{J}_2(a,d_i(a))=0$. By Item 4 of Lemma \ref{lem:mono}, we get \( \tilde{J}_2(a_1^*,d_2(a_1^*)) \geq \tilde{J}_2(a_1^*,d_1(a_1^*)) = 0 \). From the proof of Theorem \ref{thm:1_ass}, we know that $\tilde{J}_2(a,d_2(a))$ strictly decreases w.r.t. \( a \) on \([0,\hat{a}_2]\). As such, we conclude that \( a_2^* \geq a_1^* \) and Property 1 holds.

	\emph{Property 2: \(a^*\) decreases w.r.t. \(w\).}
		Let two arbitrary initial wealth levels, $w_1 < w_2$, be given, and we denote the corresponding optimal premiums by \( a_i^* \), \(i=1, 2\). Note that \( a_i^* \in [0, \hat{a}] \) is the unique solution to $\tilde{J}_2(a, d(a);w_i)=0$ when $w = w_i$. Using Item 5 of Lemma \ref{lem:mono}, we obtain \( \tilde{J}_3(a_1^*,d(a_1^*);w_2) \leq \tilde{J}_3(a_1^*,d(a_1^*);w_1) = 0 \), implying $ \tilde{J}_2(a_1^*,d(a_1^*);w_2)\le 0$. Because \(\tilde{J}_2(a,d(a);w_2)\) strictly decreases w.r.t. \( a \), we conclude that \( a_2^* \leq a_1^* \), which proves Property 2.

		\emph{Property 3: \(a^*\) increases w.r.t. $\mathbb{A}_u$.} 
			Let two arbitrary utility functions \(u_1\) and \(u_2\) be given such that \(\mathbb{A}_{u_1}(x) \le \mathbb{A}_{u_2}(x)\) for all \(x\), and we denote the corresponding optimal premiums by \( a_i^* \), \( i = 1, 2 \). Note that \( a_i^* \in [0, \hat{a}] \) is the unique solution to $\tilde{J}_2(a,d(a);u_i)=0$ when the utility function is $u = u_i$. Because \(\frac{u_1'(x)}{u_2'(x)}\) increases w.r.t. \(x\), we have
		\begin{align*}
			\frac{u_1'(w-d(a_1^*)-a_1^*)}{u_2'(w-d(a_1^*)-a_1^*)}\le \frac{u_1'(w-X\wedge d(a_1^*)-a_1^*)}{u_2'(w-X\wedge d(a_1^*)-a_1^*)},
		\end{align*}
		which implies $\tilde{J}_2(a_1^*,d(a_1^*);u_2)\ge 0$.
		Because \(\tilde{J}_2(a,d(a);u_2) \) strictly decreases w.r.t. \( a \),  it follows that \( a_2^* \ge a_1^* \) and Property 3 holds.

		\emph{Property 4: $d^*$ decreases w.r.t. $S$.} 
		Recalling \eqref{eq:ad_d} and using Item 4 of Lemma \ref{lem:mono},  \( d^*\in[\underline{x}_0,M] \) is the unique solution to $\tilde{J}_2(a^*, d)=0$. By Items 4 and 6 of Lemma \ref{lem:mono},  the impact of $S$ on $d^*$ is exactly the opposite of that on $a^*$, which together with Property 1, confirms Property 4.

		\emph{Properties 5 and 6: $d^*$ increases w.r.t. $w$ and decreases w.r.t. $\mathbb{A}_u$.} 
		The argument above, together with Items 1 and 2 of Lemma \ref{lem:mono}, prove these two properties.

Finally, we focus on the policy limit (maximum covered loss) $U^*$ of the optimal contract. Because  $U^* = a^* + d^* + S$, we study the properties of $a^* + d^*$. We perform a change of variable from $(a, d)$ to $(a, v) := (a, a + d)$ and denote $v^* := a^* + d^*$. Recalling Lemma \ref{lem:d_a}, for all $a\in(0,\hat{a}]\setminus B$, $$d'(a)+1=\frac{(1+\eta) \Pb(X>d(a))-1}{(1+\eta)\Pb(d(a)<X\le d(a)+S+a)}\le 0.$$ Therefore, $v^*\in[\hat{a}+ \underline{x}_0,M]$. By rewriting \eqref{eq:ad_a} and \eqref{eq:ad_d}, $(a^*,v^*)$ is the solution to the following system of equations over $(a,v)\in[0,\hat{a}]\times[\hat{a}+ \underline{x}_0,M]$:
\begin{align*}
	\tilde{J}_4(a,v):=&(1+\eta)\Eb[(X-(v-a))^+-(X-(v+S))^+]-a=0,\\
	\tilde{J}_5(a,v):=&-\Eb[u'(w - X\wedge (v-a) - a)] + \frac{u'(w - v)}{1 + \eta} = 0.
\end{align*}

\emph{Property 7: \(a^* + d^*\) increases w.r.t. $S$.}
For all $(a,v)\in\{(a,v)\in[0,\hat{a}]\times[\hat{a}+d_0,M]\mid v-a\notin\mathcal{X}_\Delta\}$, differentiating the function \(\tilde{J}_5 \) gives
\begin{align*}
	&\frac{\partial\tilde{J}_5}{\partial a}(a,v)=\Eb[u''(w-X-a)\,\mathbf{1}_{\{X\le v-a\}}]\le 0,\\
	&\frac{\partial\tilde{J}_5}{\partial v}(a,v)=u''(w-v)\left[\Pb(X>v-a)-\frac{1}{1+\eta}\right]\ge 0.
\end{align*}
Given the optimal premium $a^*$, $v^*$ is the solution to $\tilde{J}_5(a^*,v)=0$. Therefore, $v^*$ reacts to the change of $S$ in the same direction as $a^*$, which, combining with Property 1, proves this property.

\emph{Properties 8 and 9: \(a^* + d^*\) increases w.r.t. \(w\) and decreases w.r.t. \(\mathbb{A}_u\).} 
We directly compute $\partial\tilde{J}_4 / \partial a$ and $\partial\tilde{J}_4 / \partial v$. The argument above then proves these two properties. 

Proof of Proposition \ref{prop:a_p} is then complete by Properties 1-9.  
\end{proof}

\section{Proof of Proposition \ref{prop:N=2}}\label{pprop:N=2}
\begin{proof} 
Let $\bar{a}_2$ be defined as $\bar{a}_N$ in Lemma \ref{lem:a_barN} when $N = 2$.
	By Theorem \ref{thm:N}, for all \(a\in[0,\bar{a}_2]\),
	the optimal reinsurance contract must be in the form of 
	\[I_S^*(x;l_1,l_2)=(x-l_1)^+-(x-l_1-( a+s_1)^+)^++(x-l_2-( a+s_1)^+)^+-(x-l_2-( a+s_2))^+,\]
	in which $l_1$ and $l_2$ satisfy $0 \le l_1 \le l_2 \le M$ and $(1+\eta)\Eb[I_1(X;l_1,l_2)]-a=0$. 
	Define $\phi$ by 
	\begin{align}
		\phi(l_1, l_2) = (1+\eta)\Eb[I_S^*(X;l_1,l_2)]-a, \quad  (l_1, l_2) \in [0, M] \times [l_1, M].
		 \label{eq:phi}
	\end{align}
	For all $a\in[0,\bar{a}_2]$, the optimization problem that yields the optimal parameters $(l_1^*, l_2^*)$ is
	\begin{align}\label{eq:op_p1}
		(l_1^*,l_2^*)=\argsup_{(l_1,l_2)\in{\mathcal{L}}} \, \Eb[u(W_S(I_S^*(X;l_1,l_2)))],
	\end{align}
	in which $\mathcal{L} :=\{(l_1,l_2)\in[0, M] \times [l_1, M]\mid\phi(l_1,l_2)=0\}\neq \emptyset.$ In other words, $I_S^*(x; l_1^*, l_2^*)$ is the (locally) optimal contract to Problem \ref{prob:s} for a given premium $a = \pi(I_S^*) \in [0, \bar{a}_2]$.
	
	We prove Proposition \ref{prop:N=2} under three exclusive and exhaustive conditions.

	\textbf{Condition 1:} $ a+s_1\le 0$. Under this condition, $I_S^*$ is independent of $l_1$, and $\mathcal{L}=[0,\tilde{l}_2]\times\{\tilde{l}_2\}$, in which $\tilde{l}_2\in[0,M]$ is the unique solution to $(1 + \eta) \mathbb{E}[ (X - l_2)^+- (X - l_2 - (  a + s_2))^+] - a=0$. Naturally, $l_2^*=\tilde{l}_2$ and $l_1^*$ can take any value in $[0, l_2^*]$. Without loss of generality, we set $l_1^* = l_2^*$. This proves Case 1 in the proposition.

	\textbf{Condition 2:} $ a+s_1>0$ and $(1+\eta)\mathbb{E}[(X-(M-( a+s_1))^+)^+]-a\ge 0$. Under this condition, for all $(l_1,l_2)\in[0,(M-( a+s_1))^+)\times[l_1,M]$, we have
	\begin{align*}
		\phi(l_1,l_2)\ge \phi(l_1,M)>(1 + \eta) \mathbb{E}[(X - (M-( a+s_1))^+)^+]\ge 0.
	\end{align*}
	Thus, $\mathcal{L} \subset[(M-( a+s_1))^+,M]\times[l_1,M]$. For all $(l_1,l_2)\in[(M-( a+s_1))^+,M]\times[l_1,M]$, $I_S^*$ is independent of $l_2$, and so  $\mathcal{L}=\{\tilde{l}_1\}\times[\tilde{l}_1,M]$, in which $l_1=\tilde{l}_1\in[(M-( a+s_1))^+,M]$ is the unique solution to $(1+\eta)\Eb[(X-l_1)^+]-a=0$. Therefore, $l_1^*=\tilde{l}_1$ and $l_2^* \in [\tilde{l}_1,M]$. Without loss of generality, we set $l_2^*=\tilde{l}_1$. Recalling the definition of $\underline{l_1}$ and $\overline{l_2}$, we have $\underline{l_1}=\tilde{l}_1$ and $\overline{l_2}=\tilde{l}_1$. This proves Case 2 in the proposition when $(1+\eta)\mathbb{E}[(X-(M-( a+s_1))^+)^+]-a\ge 0$ holds.

	\textbf{Condition 3:} $0< a+s_1<M$, $(1+\eta)\mathbb{E}[(X-(M-( a+s_1)))^+]-a<0$.
	By the definition of $\phi$ in \eqref{eq:phi}, we have $\phi(l_1,M)\le\phi(l_1,l_2)\le \phi(l_1,l_1)$, in which
	\begin{align*}
		\phi(l_1,l_1) &=(1 + \eta) \mathbb{E}[ (X - l_1)^+- (X - l_1 - (  a + s_2))^+] - a,\\
		\phi(l_1,M)  &=(1 + \eta) \mathbb{E}[ (X - l_1)^+- (X - l_1 - (  a + s_1))^+] - a.
	\end{align*}
	For all $a\in[0,\bar{a}_2]$, $\phi(l_1,l_1)=0$ has a unique solution $l_1=\overline{l_1}\in [0,M]$; recall  $\underline{l_1}=\inf\{l_1\in[0,M] \, | \, \phi(l_1,M)\le 0\}\le\overline{l_1} $. If $(1 + \eta) \mathbb{E}[ X- (X - (  a + s_1))^+] - a\le 0$, then $\underline{l_1}=0$. If $(1 + \eta) \mathbb{E}[ X- (X - (  a + s_1))^+] - a>0$, then $\phi(\underline{l_1},M)=0$. Thus,
	$$\mathcal{L} \subset\{(l_1,l_2)\in[0,M]\times[l_1,M]\,|\,\phi(l_1,M)\le 0\le \phi(l_1,l_1)\}=[\underline{l_1},\overline{l_1}]\times[l_1,M].$$
	From $(1+\eta)\mathbb{E}[(X-(M-( a+s_1)))^+]-a<0$, we have $\overline{l_1}<M-( a+s_1)$, and then $\mathcal{L}$ can be restricted to $[\underline{l_1},\overline{l_1}]\times[l_1,M-( a+s_1)]$. Actually, for all $(l_1,l_2)\in[\underline{l_1},\overline{l_1}]\times[M-( a+s_1),M]$, both $I_S^*$ and $\phi$ are independent of $l_2$. Furthermore, for all $l_1\in[\underline{l_1},\overline{l_1}]$, there exists a unique $l_2(l_1)\in[l_1,M-( a+s_1)]$ such that $\phi(l_1,l_2(l_1))=0$. The optimization problem in \eqref{eq:op_p1} reduces to 
	\begin{align}\label{eq:op2}
		l_1^*=\argsup_{l_1\in[\underline{l_1},\overline{l_1}]}\Eb[u(W_S(I_S^*(\cdot;l_1,l_2(l_1))))].
	\end{align}
	
	Denote \(C_0=\{(l_1,l_2)\in[\underline{l_1},\overline{l_1}]\times[l_1,M-( a+s_1)]\mid l_1\in \mathcal{X}_\Delta \text{ or } l_1+ a+s_1\in \mathcal{X}_\Delta \text { or } l_2+ a+s_1\in \mathcal{X}_\Delta \text{ or } l_2+ a+s_2\in \mathcal{X}_\Delta \}\). The function \( \phi \) is continuously differentiable on \([\underline{l_1},\overline{l_1}]\times[l_1,M-( a+s_1)] \setminus  C_0\), with the partial derivatives given by
	\begin{align*}
		&\frac{\partial \phi(l_1, l_2)}{\partial l_1} = -(1 + \eta) \mathbb{P}(l_1 < X \leq l_1 +   a + s_1),\\
		&\frac{\partial \phi(l_1, l_2)}{\partial l_2} = -(1 + \eta) \mathbb{P}(l_2 +   a + s_1 < X \leq l_2 +   a + s_2).
	\end{align*}

	Denote \( C_1 = \{ l_1 \in [\underline{l_1},\overline{l_1}] \mid (l_1, l_2(l_1)) \in C_0 \text{ or } l_2(l_1)=M-( a+s_1)\} \). Because \(\phi(l_1, l_2)\) strictly decreases w.r.t. \(l_1\) on \([0, M]\) and w.r.t. \(l_2\) on \([0, M - (  a + s_1)]\), and \(\phi\) is continuous, it follows that \(C_1\) is a finite set, and \(l_2(l_1)\) is continuous on \([\underline{l_1}, \overline{l_1}]\). For \((l_1, l_2) \in [\underline{l_1},\overline{l_1}]\times[l_1,M-( a+s_1)) \setminus C_0 \), we have \( \frac{\partial \phi(l_1, l_2)}{\partial l_2} < 0 \).  By the implicit function theorem, \( l_2(l_1) \) is a continuously differentiable function on \( [\underline{l_1},\overline{l_1}]\setminus C_1\), with the first-order derivative given by 
	\[
	l_2'(l_1) = -\frac{\mathbb{P}(l_1 < X \leq l_1 +   a + s_1)}{\mathbb{P}(l_2(l_1) +   a + s_1 < X \leq l_2(l_1) +   a + s_2)}. 
	\]
	
	Recalling \eqref{eq:op2}, the objective function of the insurer is
	\begin{align*}
		\psi(l_1)=\Eb[u(W_S(I_S^*))]
		&=p_1\Eb[u(w-X+(X-l_1)^+-\big(X-l_1-( a+s_1)\big)^+-a)]\\
		& \quad +p_2\Eb[u(w-X+(X-l_1)^+-\big(X-l_1-( a+s_1)\big)^+\\
		& \quad +\big(X-l_2(l_1)-( a+s_1)\big)^+
		+\big(X-l_2(l_1)-( a+s_2)\big)^+-a)],
	\end{align*}
	which is continuous on $[\underline{l_1},\overline{l_1}]$. For $l_1\in[\underline{l_1},\overline{l_1}]\setminus C_1$, taking the derivative, we obtain
	\begin{align}
		\psi'(l_1)=[-u'(w-l_1-a)+p_2u'(w-l_2(l_1)-a)]\Pb(l_1<X\le l_1+ a+s_1).
	\end{align}
	Let \(\widetilde{\psi}:[\underline{l_1},\overline{l_1}]\to \Rb\) be defined by
	\begin{align*}
		\widetilde{\psi}(l_1)=-u'(w-l_1-a)+p_2u'(w-l_2(l_1)-a),
	\end{align*}
    and note that 
	\(\widetilde{\psi}\) is continuous on \([\underline{l_1},\overline{l_1}]\). For $l_1\in[\underline{l_1},\overline{l_1}]\setminus C_1$, 
	\begin{align*}
		\widetilde{\psi}'(l_1)=[u''(w-l_1-a)-p_2u''(w-l_2(l_1)-a)l_2'(l_1)]<0.
	\end{align*}
	Furthermore, the boundary conditions are
	\begin{align*}
		&\widetilde{\psi}(\underline{l_1})=-u'(w-\underline{l_1}-a)+p_2u'(w-l_2(\underline{l_1})-a)~~\text{and}~~\widetilde{\psi}(\overline{l_1})=-p_1u'(w-\overline{l_1}-a)<0.
	\end{align*}
	Recalling the definition of $\overline{l_2}$, we have $\overline{l_2}=l_2(\underline{l_1})$. Therefore, this proves Case 2 when $(1+\eta)\mathbb{E}[(X-(M-( a+s_1)))^+]-a<0$, and Case 3 in the proposition. 
	
The proof is now complete, and the results in Cases 1 to 3 in Proposition \ref{prop:N=2} hold. 
\end{proof}

\section{Extensions}
\subsection{Extension to the Unbounded Loss}\label{app:extension}

Recall that we assume the loss $X$ is bounded above by $M \in (0, \infty)$ in the main paper. In this section, we relax this assumption and discuss the extension to the unbounded case. Because we consider the expected-value premium principle (see \eqref{eq:pi}), we still require $\mathbb{E}[X] < \infty$. 

By extending the analysis from a bounded value $M$ to the limit as $M$ tends to infinity, we obtain results analogous to Theorem \ref{thm:step_one}, Theorem \ref{thm:N}, and Proposition \ref{prop:N=2}, but the deductible amount $d(a)$ now takes values over $[0, \infty]$. Before we present the main result, we first modify Assumption \ref{ass} to suit the unbounded loss and state the new assumption below. 
\begin{assumption}\label{ass_unbounded}
	The insurer's loss $X$ and the reinsurer's background risk $S$ satisfy the following conditions:  
	(1) $S \ge 0$ almost surely (i.e., $\mathbb{P}(S \ge 0) = 1$);  
	(2) both $X$ and $X - S$ have finitely many jump points on $[0, \infty)$;  
	(3) $\mathbb{P}(X \le x)$ is strictly increasing in $x \in [0, \infty)$;  
	(4) $\mathbb{E}[(X - y)^+ - (X - y - S)^+] > 0$ for all $y \in [0, \infty)$;
	(5) the derivative of the utility function $u'(x)$ exists as $x \to -\infty$, denoted by $u'(-\infty) \in [0, \infty]$, and it holds that $\mathbb{E}[u'(w - X - \pi_f)] < \infty$.
\end{assumption}

We now state the main result when the loss is unbounded. The proof is similar to that of Theorem \ref{thm:1_ass}, and we omit it here.

\begin{theorem}
	Let Assumption \ref{ass_unbounded} hold. The globally optimal reinsurance contract $I^*$ for Problem \ref{prob:no} is given by 
	\begin{align*}
		I^*(x,s) = 
		\begin{cases}
			x - (x - (s + \bar{a}))^+, & \text{if } \eta = 0, \\
			(x - d(a^*))^+ - (x - (d(a^*) + s + a^*))^+, & \text{if } 0 < \eta < \dfrac{u'(-\infty)}{\mathbb{E}[u'(w - X)]} - 1, \\
			0, & \text{if } \eta \ge \dfrac{u'(-\infty)}{\mathbb{E}[u'(w - X)]} - 1,
		\end{cases}
	\end{align*}
	in which $d(a)$ is defined in Corollary \ref{cor:unique_d} for all $a \in [0, \bar{a}]$, and $a^* \in (0, \bar{a})$ is the unique solution to 
	\begin{align*}
		\mathbb{E}\left[u'(w - (X \wedge d(a)) - a)\right] - \frac{u'(w - d(a) - a)}{1 + \eta} = 0.
	\end{align*}
\end{theorem}

If $\frac{u'(-\infty)}{\mathbb{E}[u'(w - X)]} = \infty$, the third case (where the premium is zero) disappears, and the comparative statics in Proposition \ref{prop:a_p} continue to hold. If $\frac{u'(-\infty)}{\mathbb{E}[u'(w - X)]} < \infty$, because $\frac{u'(-\infty)}{\mathbb{E}[u'(w - X)]}$ increases in $w$, the comparative statics with respect to $w$ in Proposition \ref{prop:a_p} no longer hold, while the other two results remain valid.

\subsection{Extension to $S=\mathrm{VaR}(I(X))$}

In Section \ref{sec:ext}, we solve for the optimal loss-dependent indemnity $I^*(X)$ when the background risk $S$ is independent of the loss $X$ (see Assumption \ref{ass2}). In this section, we consider a case in which $S$ \emph{depends} on $X$. More specifically, we follow \cite{asimit2013optimal} and \cite{cai2014optimal} and set  $S=\mathrm{VaR}_\alpha(I(X))$ for some threshold level $\alpha\in (0,1)$. We state the main result for this extension below. 

\begin{theorem}
    Assume that $S=\mathrm{VaR}_{\alpha}(I(X))$ for a given threshold level $\alpha\in (0,1)$, and  denote $b=\mathrm{VaR}_{\alpha}(X)$. The optimal loss-dependent contract $I^*$ over $\mathcal{A}_{\textup{IC}}$ is of the following parametric form:
    \begin{align}\label{eq:Istar}
        I^*(x)=(x-d_1)^+-(x-b)^++(x-d_2)^+-(x-(d_2+\pi(I^*)))^+,
    \end{align}
    in which the two parameters, $d_1$ and $d_2$, satisfy $0\le d_1\le b\le d_2$.
\end{theorem}
\begin{proof}
For any $I \in \mathcal{A}_{\textup{IC}}$, following a similar argument as in the proof of Theorem 4.3,  setting $s_1 = I(b) - \pi(I)$ (equivalently, $x_1=b$) and $s_2 = I(b)$, we deduce that the optimal indemnity $I^*$ has the form:
\begin{align*}
    I_1(x; l_1^{(1)}, l_2^{(1)}) &= \left(x - l_1^{(1)} \right)^+ - \left(x - l_1^{(1)} - I(b) \right)^+ \\
    &\quad + \left(x - l_2^{(1)} - I(b) \right)^+ - \left(x - l_2^{(1)} - I(b) - \pi(I_1) \right)^+,
\end{align*}
with $0 \le l_1^{(1)} \le b - I(b) \le l_2^{(1)}$. The function $I_1$ is default-free with the policy limit $I_1(b)+\pi(I_1)$. Let $\pi(I_1) = a_1$; we separate the analysis into two cases.

\textbf{Case 1}: $(1+\eta)\Eb[I_1(X; b - I(b), b - I(b))] > a_1$. Because $(1+\eta)\Eb[I_1(X; b - I(b), l_2^{(1)})] \le a_1$, there exists some $l_2^{(2)}$ satisfying $b - I(b) \le l_2^{(2)} \le l_2^{(1)}$ such that
\(
(1+\eta)\Eb[I_1(X; b - I(b), l_2^{(2)})] = a_1.
\)
Note that 
\begin{align*}
 I_1(x; b \!-\! I(b), l_2^{(2)}) = (x - (b - I(b)))^+ - (x - b)^+ + (x - (l_2^{(2)} + I(b)))^+ - (x - (l_2^{(2)} + I(b) + a_1))^+,
\end{align*}
which matches the form in \eqref{eq:Istar}. One can verify that $I_1(x; b - I(b), l_2^{(2)})$ up-crosses $I_1(x; l_1^{(1)}, l_2^{(1)})$, which in turn implies 
\(
\Eb[u(W(I_1(X; b - I(b), l_2^{(2)})))] \ge \Eb[u(W(I_1(X; l_1^{(1)}, l_2^{(1)})))].
\)

\textbf{Case 2}: $(1+\eta)\Eb[I_1(X; b - I(b), b - I(b))] \le a_1$. Define
\begin{align*}
    I_2(x; c) = (x - (b - c))^+ - (x - (b + a_1))^+,
\end{align*}
for all $c \in [I(b), b]$; observe that the above $I_2(x; c)$ is again consistent with \eqref{eq:Istar}. In addition, we have $I_2(x; I(b)) = I_1(x; b - I(b), b - I(b))$ and $I_2(x; b) \ge I_1(x; l_1^{(1)}, l_2^{(1)})$. Hence, there exists some $c$ over $[I(b), b]$ such that $(1+\eta)\Eb[I_2(X; c)] = a_1$. Moreover, $I_2(x; c)$ up-crosses $I_1(x; l_1^{(1)}, l_2^{(1)})$, proving 
\(
\Eb[u(W(I_2(X; c)))] \ge \Eb[u(W(I_1(X; l_1^{(1)}, l_2^{(1)})))].
\)

By combing the results from the above two cases, we conclude that the optimal loss-dependent contract must be in the form of \eqref{eq:Istar}.
\end{proof}

\end{document}